\renewcommand{\vec}[1]{\vv{#1}}
\DeclareMathAlphabet{\mathpzc}{OT1}{pzc}{m}{it}
\spnewtheorem{assumption}[theorem]{Assumption}{\bfseries}{\itshape}
\spnewtheorem{notation}[theorem]{Notation}{\bfseries}{\itshape}
\newenvironment{bprooftree}
  {\leavevmode\hbox\bgroup}
  {\DisplayProof\egroup}
\newcommand{\secref}[1]{\S\ref{#1}}%
\newcommand\Seq{\mathrm{Seq}}%
\newcommand\TerSeq{\mathrm{TerSeq}}%
\newcommand\End{\mathrm{End}}%
\newcommand\inlinesum{\displaystyle \sum}
\newcommand\bra[1]{{\langle{#1}|}}
\newcommand\ket[1]{{|{#1}\rangle}}
\newcommand\trace[1]{\ensuremath{\mathrm{tr}(#1)}}
\newcommand{\id}{\text{id}}
\newcommand{\Id}{\text{Id}}
\newcommand{\op}{\ensuremath{\mathrm{op}}}
\newcommand{\unitaryeqq}{\ensuremath{\mathrel{\mathop *}=}}
\newcommand{\Ob}{\mathrm{Ob}}
\newcommand{\Set}{\ensuremath{\mathbf{Set}}}
\newcommand{\SET}{\ensuremath{\mathbf{Set}}}
\newcommand{\AAA}{\ensuremath{\mathbf{A}}}
\newcommand{\CC}{\ensuremath{\mathbf{C}}}
\newcommand{\VV}{\ensuremath{\mathbf{V}}}
\newcommand{\Wstar}{\ensuremath{\mathbf{W}^*_{\mathrm{NCPSU}}}}
\newcommand{\WNMIU}{\ensuremath{\mathbf{W}^*_{\mathrm{NMIU}}}}
\newcommand{\cpo}{\ensuremath{\mathbf{DCPO}}}
\newcommand{\cpobs}{\ensuremath{\mathbf{DCPO}_{\perp !}}}
\newcommand{\bit}{\textbf{bit}}
\newcommand{\qbit}{\textbf{qbit}}
\newcommand{\newunit}{\textbf{new unit}}
\newcommand{\newqbit}{\textbf{new qbit}}
\newcommand{\discard}{\textbf{discard}}
\newcommand{\sskip}{\textbf{skip}}
\newcommand{\while}[2]{\ensuremath{\textbf{while}\ $#1$\ \textbf{do}\ $#2$}}
\newcommand{\lleft}{\textbf{left}}
\newcommand{\rright}{\textbf{right}}
\newcommand{\ccase}{\textbf{case}}
\newcommand{\fold}{\textbf{fold}}
\newcommand{\ccopy}{\textbf{copy}}
\newcommand{\ZERO}{\mathbf{0}}
\newcommand{\sfold}{\ensuremath{\mathrm{fold}}}
\newcommand{\sunfold}{\ensuremath{\mathrm{unfold}}}
\newcommand{\ifold}{\ensuremath{\mathpzc{fold}}}
\newcommand{\iunfold}{\ensuremath{\mathpzc{unfold}}}
\newcommand{\fff}{\texttt{ff}}
\newcommand{\ttt}{\texttt{tt}}
\newcommand{\ffold}{\textbf{fold}}
\newcommand{\lrb}[1]{{\llbracket #1 \rrbracket}}
\newcommand{\flrb}[1]{{\llparenthesis #1 \rrparenthesis}}
\newcommand{\naturalto}{\ensuremath{\Rightarrow}}
\newcommand{\Halt}{\ensuremath{\mathrm{Halt}}}
\tikzstyle{braceedge}=[decorate,decoration={brace,amplitude=10pt}]
\tikzstyle{square box}=[rectangle,fill=white,draw=black,minimum height=6mm,minimum width=6mm,yshift=0.7mm]
\tikzstyle{wire label}=[font=\footnotesize, auto,swap]
\tikzstyle{none}=[inner sep=0pt]
\tikzstyle{gn}=[circle,fill=Lime,draw=Black,line width=0.8 pt]
\tikzstyle{rn}=[circle,fill=Red,draw=Black, line width=0.8 pt]
\tikzstyle{H}=[rectangle,fill=Yellow,draw=Black]
\tikzstyle{line}=[scalar,fill=White,draw=Black]
\tikzstyle{io}=[rectangle,fill=White,draw=Black]
\tikzstyle{block}=[rectangle,fill=Orange,draw=Black]
\tikzstyle{graph}=[circle,fill=White,draw=Black]
\tikzstyle{empty}=[rectangle,fill=none,draw=none]
\tikzstyle{scaled}=[rectangle,fill=none,draw=none, font=\small]
\tikzstyle{box}=[rectangle,fill=White,draw=Black]
\tikzstyle{dot}=[circle,fill=Black,draw=Black,inner sep=0pt,minimum size=1pt]
\tikzstyle{small dot}=[circle,fill=Black,draw=Black,inner sep=0pt,minimum size=1pt]
\tikzstyle{Dot}=[circle,fill=Black,draw=Black,inner sep=0pt,minimum size=3pt]
\tikzstyle{diam}=[rectangle,fill=Black,draw,yscale=1.2,rotate=45]
\tikzstyle{gangle}=[rectangle,fill=Lime,draw=Black]
\tikzstyle{rangle}=[rectangle,fill=Red,draw=Black]
\tikzstyle{circ}=[circle,fill=none,draw=Black,scale=1.3]
\tikzstyle{ellip}=[ellipse,fill=none,draw=Black,scale=1.3,minimum width =1.3cm]
\tikzstyle{ellip2}=[ellipse,fill=White,draw=Black,scale=1.3,minimum width =3cm]
\tikzstyle{bbox}=[rectangle,fill=Blue,draw=Blue,scale=0.6]
\tikzstyle{gg}=[shape=rectangle,fill=White,draw=Black,dashed]
\tikzstyle{white circle}=[circle,fill=none,draw=Black,scale=1]
\tikzstyle{black circle}=[circle,fill=Black,draw=Black,scale=1]
\tikzstyle{grey circle}=[circle,fill=Gray,draw=Black,scale=1]
\tikzstyle{white rectangle}=[rectangle,fill=none,draw=Black,scale=1]
\tikzstyle{nodev}=[circle,fill=none,draw=Black,scale=1]
\tikzstyle{greynode}=[circle,fill=Grey,draw=Black,scale=1]
\tikzstyle{blacknode}=[circle,fill=Black,draw=Black,scale=1]
\tikzstyle{wirev}=[circle,fill=Black,draw=Black,inner sep=0pt,minimum size=3pt]
\tikzstyle{wirevred}=[circle,fill=Red,draw=Black,inner sep=0pt,minimum size=3pt]
\tikzstyle{simple}=[-,draw=Black]
\tikzstyle{to}=[->,draw=Black]
\tikzstyle{naturalto}=[-{Implies},double distance=1.5pt]
\tikzstyle{bdirected}=[<->,draw=Black]
\tikzstyle{bothdirs}=[bdirected,draw=Black]
\tikzstyle{bothdirsred}=[bdirected,draw=Red]
\tikzstyle{blue}=[-,draw=Blue]
\tikzstyle{redd}=[directed,draw=Red]
\tikzstyle{redu}=[-,draw=Red]
\tikzstyle{blued}=[directed,draw=Blue]
\tikzstyle{dash}=[dashed,draw=Black]
\tikzstyle{ddash}=[->,dashed,draw=Black]
\tikzstyle{dashedd}=[->,dashed]
\tikzstyle{dashedred}=[dashed,draw=Red]
\tikzstyle{equal-arrow}=[double equal sign distance]
\tikzstyle{dotpic}=[scale=0.5]
\tikzstyle{every picture}=[baseline=-0.25em]
\newcommand{
%\beginpgfgraphicnamed{#1}
\InputIfFileExists{}{}{\input{./tikz/}}
%\endpgfgraphicnamed
}[1]{
%\beginpgfgraphicnamed{#1}
\InputIfFileExists{#1}{}{\input{./tikz/#1}}
%\endpgfgraphicnamed
}
\newcommand{\InputIfFileExists{}{}{\input{./tikz/}}}[1]{\InputIfFileExists{#1}{}{\input{./tikz/#1}}}
\newcommand{\stikz}[2][1]{\scalebox{#1}{
%\beginpgfgraphicnamed{#1}
\InputIfFileExists{#2}{}{\input{./tikz/#2}}
%\endpgfgraphicnamed
}}
\newcommand{\cstikz}[2][1]{\begin{center}\stikz[#1]{#2}\end{center}}
\begin{document}

%% Title information
\title{Quantum Programming with Inductive Datatypes: Causality and Affine Type Theory}
\titlerunning{Quantum Programming with Inductive Datatypes}

\author{Romain P\'echoux\inst{1} \and Simon Perdrix\inst{1} \and Mathys Rennela\inst{2} \and \\ Vladimir Zamdzhiev\inst{1}}
\authorrunning{R. P\'echoux\inst{1} \and S. Perdrix\inst{1} \and M. Rennela\inst{2} \and V. Zamdzhiev\inst{1}}

\institute{Universit\'e de Lorraine, CNRS, Inria, LORIA, France \and
Leiden Inst. Advanced Computer Sciences, Universiteit Leiden, The Netherlands
}

\maketitle

\begin{abstract}
Inductive datatypes in programming languages allow users to define useful data
structures such as natural numbers, lists, trees, and others. In this paper we
show how inductive datatypes may be added to the quantum programming language
QPL. We construct a sound categorical model for the language and by doing so we
provide the first detailed semantic treatment of user-defined inductive
datatypes in quantum programming.
% In order to model our affine type system, we present a new technique for the construction of discarding maps for initial algebras in symmetric monoidal categories.
% We also extend QPL with a copy operation on classical data and we model it by describing the comonoid structure of classical types, including the inductive ones.
We also show our denotational interpretation is invariant with respect to big-step
reduction, thereby establishing another novel result for quantum programming.
% We also prove our semantics is computationally adequate by providing two equivalent descriptions of the probability of termination (one operational and one denotational) for programs of \emph{arbitrary} types.
Compared to classical programming, this property is considerably more difficult to prove and we demonstrate its usefulness by showing how it immediately implies computational adequacy at all types.
To further cement our
results, our semantics is entirely based on a physically natural
model of von Neumann algebras, which are mathematical structures used by
physicists to study quantum mechanics.

\keywords{Quantum programming \and Inductive types \and Adequacy}
\end{abstract}

\section{Introduction}

\emph{Quantum computing} is a computational paradigm which takes advantage of
quantum mechanical phenomena to perform computation. Much of the interest in
quantum computing is due to impressive algorithms,
such as Shor's polynomial-time algorithm for integer factorisation ~\cite{shor}, Grover's $O(\sqrt n)$ algorithm for unsorted database search~\cite{grover} and
others~\cite{quantum-algorithms-survey}.
% Research into \emph{quantum information} has also produced many impressive security protocols~\cite{security-survey}, the most famous being \emph{quantum key distribution}~\cite{bb84} which provides unconditional security for key exchange and the ability to detect eavesdropping. 
% As a consequence of the emergence of the quantum paradigm, several works have
% been carried out on developing quantum hardware architectures and circuit
% description languages.  A strong need of user-friendly quantum programming
% tools has been met by the development of high level quantum programming
% languages that allow the programmer to write high-level algorithms without
% refereeing to the underlying hardware.
% The development of such languages is also accompanied by a requirement to
% provide tools for structured and high level algorithmic descriptions including
% inductive datatypes. Another issue of interest
% lies in the need of strong theoretical tools and models to describe program
% semantics in a formal way.
Of course, in order to implement these algorithms it is necessary
to also develop appropriate \emph{quantum programming languages}. The present
paper is concerned with this issue and makes several theoretical contributions towards
the design and denotational semantics of quantum programming languages.

% \subsection{Our Contribution}

% In this paper we address three issues which have been ignored so far: (1) we show how to construct a denotational semantics for user-defined inductive datatypes; (2) we prove invariance of the denotational semantics with respect to big-step reduction and (3) we prove a computational adequacy result
% at \emph{all types}, rather than only at some basic types, which is the
% approach other authors take. 

Our development is based around the quantum programming language QPL \cite{qpl}
which we extend with inductive datatypes.
Our paper is the first to construct a denotational semantics for user-defined inductive datatypes in quantum programming.
In the spirit of the original QPL,
our type system is \emph{affine} (discarding of arbitrary variables is
allowed, but copying is restricted). We also extend QPL with a copy operation
for \emph{classical data}, because this is an admissible operation in quantum
mechanics which improves programming convenience. The
addition of inductive datatypes requires a departure from the original
denotational semantics of QPL, which are based on finite-dimensional quantum
structures, and we consider instead (possibly infinite-dimensional) quantum
structures based on \emph{W*-algebras} (also known as \emph{von Neumann
algebras}), which have been used by physicists in the study of quantum
foundations~\cite{takesaki}.  As such, our semantic treatment is physically
natural and our model is more accessible to physicists and
experts in quantum computing compared to most other denotational models. 
% The reason we chose to extend QPL over other quantum programming languages is precisely because it admits such a natural interpretation.

QPL is a first-order programming language which has \emph{procedures}, but it
does not have lambda abstractions. Thus, there is no use for a !-modality
and we show how to model the copy operation by describing the canonical
comonoid structure of all classical types (including the inductive ones).

An important notion in quantum mechanics is the idea of \emph{causality} which
has been formulated in a variety of different ways.  In this paper, we consider
a simple operational interpretation of causality: if the output of a physical
process is discarded, then it does not matter which process
occurred~\cite{aleks-sander}.  In a symmetric monoidal category $\CC$ with
tensor unit $I$, this can be understood as requiring that for any morphism
(process) $f: A_1 \to A_2$, it must be the case that $\diamond_{A_2} \circ f =
\diamond_{A_1}$, where $\diamond_{A_i} : A_i \to I$ is the discarding map
(process) at the given objects.  This notion ties in very nicely with our affine
language, because we have to show that the interpretation of
values is causal, i.e., values are always discardable.

A major contribution of this paper is that we prove the denotational
semantics is invariant with respect to both small-step reduction and big-step
reduction. The latter is quite difficult in quantum programming
and our paper is the first to demonstrate such a result. Combining this with
the causal properties of values, we obtain (as a corollary) computational adequacy at all types.
\section{Syntax of QPL}\label{sec:syntax}
%%%%%%%%%%%%%%%%%%%%%%%%%%%%%%%%%%%%%%%%%%%%%%%%%%%%%%%%%%%%%%%%%%%%%%%%%%%%%%

The syntax of QPL (including our extensions) is summarised in Figure~\ref{fig:syntax}.
% We use $X,Y,Z$ to range over \emph{type variables}, which are needed in order to form inductive datatypes.
A well-formed type context, denoted $\vdash \Theta$, is simply a list of distinct type variables.
% We use $A,B,C$ to range over types, which are described by the grammar in Figure~\ref{fig:syntax}.
A type $A$ is well-formed in type context $\Theta$, denoted $\Theta \vdash A,$ if the judgement can be derived according to the following rules:
\[
{\small{
    % (var)
    \begin{bprooftree}
    \AxiomC{{$\vdash \Theta$}}
    \UnaryInfC{$\Theta \vdash \Theta_i$}
    \end{bprooftree}
    % (I)
    \begin{bprooftree}
    \AxiomC{\phantom{$\Theta \vdash A$}}
    \UnaryInfC{$\Theta \vdash I$}
    \end{bprooftree}
    % (qbit)
    \begin{bprooftree}
    \AxiomC{\phantom{$\Theta \vdash A$}}
    \UnaryInfC{$\Theta \vdash \qbit$}
    \end{bprooftree}
    % (*)
    \begin{bprooftree}
    \AxiomC{$\Theta \vdash A$}
    \AxiomC{$\Theta \vdash B$}
    \RightLabel{$\star \in \{+, \otimes\}$}
    \BinaryInfC{$\Theta \vdash A \star B$}
    \end{bprooftree}
%    % (+)
%    \begin{bprooftree}
%    \AxiomC{$\Theta \vdash A$}
%    \AxiomC{$\Theta \vdash B$}
%    \BinaryInfC{$\Theta \vdash A+B$}
%    \end{bprooftree}
%    % (x)
%    \begin{bprooftree}
%    \AxiomC{$\Theta \vdash A$}
%    \AxiomC{$\Theta \vdash B$}
%    \BinaryInfC{$\Theta \vdash A \otimes B$}
%    \end{bprooftree}
    % (mu)
    \begin{bprooftree}
    \AxiomC{$\Theta, X \vdash A$}
    \UnaryInfC{$\Theta \vdash \mu X.A$}
    \end{bprooftree}
}}
\]
A type $A$ is \emph{closed} if $\cdot \vdash A$. Note that nested type induction is allowed.
Henceforth, we implicitly assume that all types we are dealing with are well-formed.

%
% Syntax (summarised)
%
\begin{figure}
{\small{
\begin{tabular}{l  l  l  l}
  Type Variables           & $X,Y,Z$ & & \\
  Term Variables           & $x, y, q, b, u$ & & \\
  Procedure Names          & $f, g$ & & \\
	Types                    & $A, B, C$ & ::= &  $X$ | $I$ | \textbf{qbit} | $A+B$ | $A\otimes B$ | $\mu X. A$\\
	Classical Types          & $P, R$ & ::= &  $X$ | $I$ | $P+R$ | $P\otimes R$ | $\mu X. P$\\
  Terms                    & $M, N$ & ::= & \newunit\ $u$ | \discard\ $x$ | $y = \textbf{copy}\ x$ | \newqbit\ $q$ | \\
    & &                                   &  $b$ = \textbf{measure} $q$ | $q_1, \ldots, q_n\ \unitaryeqq S$  | $  M;N$ | \sskip\  |   \\
    & &                                   &  \while{$b$}{$M$} | $x = $ \textbf{left}$_{A,B} M$ | $x$ = \textbf{right}$_{A,B} M$ | \\ 
    & &                                   & \textbf{case} $y$ \textbf{of} $\{$\textbf{left} $x_1\to M\ |$ \textbf{right} $x_2 \to N\}$ | \\ 
    & &                                   & $x = (x_1, x_2) $ | $(x_1, x_2) = x$ | $y$ = \textbf{fold} $x$ | $y$ = \textbf{unfold} $x$ | \\
    & &                                   & \textbf{proc} $f ::\ x:A \to y:B\ \{M\}\ $ | $y = f(x)$\\
  Type contexts            & $\Theta$ &  ::= & $X_1, X_2, \ldots, X_n $\\
  Variable contexts        & $\Gamma, \Sigma$ &  ::= & $x_1: A_1, \ldots, x_n : A_n$\\
  Procedure contexts       & $\Pi$ &     ::= & $f_1: A_1 \to B_1, \ldots, f_n : A_n \to B_n$ \\
  Type Judgements          & \multicolumn{2}{l}{$\Theta \vdash A$} &\\
  Term Judgements          & \multicolumn{2}{l}{$\Pi \vdash \langle \Gamma \rangle\ M\ \langle \Sigma \rangle$} &
\end{tabular}
}}
\caption{Syntax of QPL.}
\label{fig:syntax}
\end{figure}
%
% Formation rules
%
\begin{figure}
{
\small{
  % row 1
  \centerline{
    % (unit)
    \begin{bprooftree}
    \AxiomC{}
    \RightLabel{(unit)} \UnaryInfC{$\Pi \vdash \langle \Gamma \rangle\ \newunit\ u\ \langle \Gamma, u:I \rangle$}
    \end{bprooftree}
    % (discard)
    \begin{bprooftree}
    \AxiomC{}
    \RightLabel{(discard)} \UnaryInfC{$\Pi \vdash \langle \Gamma, x:A \rangle\ \textbf{discard}\ x\ \langle \Gamma \rangle$}
    \end{bprooftree}
  }
  \vspace{4mm}
  % row 2
  \centerline{
    % (copy)
    \begin{bprooftree}
    \AxiomC{$P$ is a classical type}
    \RightLabel{(copy)} \UnaryInfC{$\Pi \vdash \langle \Gamma, x:P \rangle\ y = \textbf{copy}\ x\ \langle \Gamma, x : P, y: P \rangle$}
    \end{bprooftree}
       % (skip)
    \begin{bprooftree}
    \AxiomC{\phantom{P}}
    \RightLabel{(skip)} \UnaryInfC{$\Pi \vdash \langle \Gamma \rangle\ \sskip\ \langle \Gamma \rangle$}
    \end{bprooftree}
  }
  \vspace{4mm}
  \centerline{
   %  \qquad   
    % (seq)
    \begin{bprooftree}
    \AxiomC{$\Pi \vdash \langle \Gamma \rangle\ M\ \langle \Gamma' \rangle$}
    \AxiomC{$\Pi \vdash \langle \Gamma' \rangle\ N\ \langle \Sigma \rangle$}
    \RightLabel{(seq)} \BinaryInfC{$\Pi \vdash \langle \Gamma \rangle\ M;N\ \langle \Sigma \rangle$}
    \end{bprooftree}
    % (while)
    \begin{bprooftree}
    \AxiomC{$\Pi \vdash \langle \Gamma, b: \textbf{bit} \rangle\ M\ \langle  \Gamma, b: \textbf{bit} \rangle$}
    \RightLabel{(while)} \UnaryInfC{$\Pi \vdash \langle \Gamma, b: \textbf{bit} \rangle\ 
      \while{$b$}{$M$}\ \langle \Gamma, b: \textbf{bit} \rangle$}
    \end{bprooftree}
   }
  \vspace{4mm}
  % row 3
  \centerline{
    % (qbit)
    \begin{bprooftree}
    \AxiomC{}
    \RightLabel{(qbit)} \UnaryInfC{$\Pi \vdash \langle \Gamma \rangle\ \newqbit\ q\ \langle \Gamma, q:\textbf{qbit} \rangle$}
    \end{bprooftree}
    % (measure)
    \begin{bprooftree}
    \AxiomC{}
    \RightLabel{(measure)} \UnaryInfC{$\Pi \vdash \langle \Gamma, q: \qbit \rangle\ 
      b = \textbf{measure}\ q\ \langle \Gamma, b: \textbf{bit} \rangle$}
    \end{bprooftree}
   }
  \vspace{4mm}
  \centerline{
    % (unitary)
    \begin{bprooftree}
    \AxiomC{$S$ is a unitary of arity $n$}
    \RightLabel{(unitary)} \UnaryInfC{$\Pi \vdash \langle \Gamma, q_1 :\qbit, \ldots, q_n : \qbit \rangle\
    q_1, \ldots, q_n \unitaryeqq S\ \langle \Gamma, q_1 :\qbit, \ldots, q_n : \qbit \rangle $}
    \end{bprooftree}
   }
  \vspace{4mm}
   \centerline{
    %\qquad
    % (left)
    \begin{bprooftree}
    \AxiomC{}
    \RightLabel{(left)} \UnaryInfC{$\Pi \vdash \langle \Gamma, x:A \rangle\ y = \lleft_{A,B}\ x\ \langle \Gamma, y: A+B \rangle$}
    \end{bprooftree}
    %\qquad
    % (right)
    \begin{bprooftree}
    \AxiomC{}
    \RightLabel{(right)} \UnaryInfC{$\Pi \vdash \langle \Gamma, x:B \rangle\ y = \rright_{A,B}\ x\ \langle \Gamma, y: A+B \rangle$}
    \end{bprooftree}
  }
  % row 4
  \vspace{4mm}
  \centerline{
    % (case)
    \begin{bprooftree}
    \AxiomC{$\Pi \vdash \langle \Gamma, x_1: A \rangle\ M_1\ \langle \Sigma \rangle$}
    \AxiomC{$\Pi \vdash \langle \Gamma, x_2: B \rangle\ M_2\ \langle \Sigma \rangle$}
    \RightLabel{(case)} \BinaryInfC{$\Pi \vdash \langle \Gamma, y: A+B \rangle\ 
      \ccase\ y\ \textbf{of}\ \{\lleft_{A,B}\ x_1 \to M_1\ |\ \rright_{A,B}\ x_2 \to M_2\ \}\ \langle \Sigma \rangle$}
    \end{bprooftree}
    \qquad\ \ 
  }
  % row 6
  \vspace{4mm}
  \centerline{
    % (pair)
    \begin{bprooftree}
    \AxiomC{}
    \RightLabel{(pair)} \UnaryInfC{$\Pi \vdash \langle \Gamma, x_1: A, x_2 : B \rangle\ x = (x_1, x_2)\ \langle \Gamma, x: A \otimes B \rangle$}
    \end{bprooftree}
  % (unpair)
    \begin{bprooftree}
    \AxiomC{}
    \RightLabel{(unpair)} \UnaryInfC{$\Pi \vdash \langle \Gamma, x: A \otimes B \rangle\ (x_1, x_2) = x \ \langle \Gamma, x_1: A, x_2 : B \rangle$}
    \end{bprooftree}
  }
  \vspace{4mm}
  \centerline{
    % (fold)
    \begin{bprooftree}
    \AxiomC{}
    \RightLabel{(fold)} \UnaryInfC{$\Pi \vdash \langle \Gamma, x: A[\mu X. A / X] \rangle\ y = \textbf{fold}_{\mu X. A}\ x\ \langle \Gamma, y: \mu X.A \rangle$}
    \end{bprooftree}
    % (unfold)
    \begin{bprooftree}
    \AxiomC{}
    \RightLabel{(unfold)} \UnaryInfC{$\Pi \vdash \langle \Gamma, x: \mu X. A \rangle\ y = \textbf{unfold}\ x\ \langle \Gamma, y: A[\mu X. A / X] \rangle$}
    \end{bprooftree}
  }
  \vspace{4mm}
  \centerline{
    % (proc)
    \begin{bprooftree}
    \AxiomC{$\Pi, f: A \to B \vdash \langle x:A \rangle\ M\ \langle y:B \rangle$}
    \RightLabel{(proc)} \UnaryInfC{$\Pi \vdash \langle \Gamma \rangle\ \textbf{proc}\ f ::\ x:A \to y:B\ \{M\}\ \langle \Gamma \rangle$}
    \end{bprooftree}
%  }
%
%  \centerline{
    % (call)
    \begin{bprooftree}
    \AxiomC{$\phantom{\langle}$}
    \RightLabel{(call)} \UnaryInfC{$\Pi, f: A \to B \vdash \langle \Gamma, x: A \rangle\ y = f(x)\ \langle \Gamma, y:B \rangle$}
    \end{bprooftree}
  }
  }
}
\caption{Formation rules for QPL terms.}
\label{fig:term-formation}
\end{figure}

\begin{example}
\label{ex:syntax}
The type of natural numbers is defined as $\textbf{Nat} \equiv \mu X. I + X.$
Lists of a closed type $\cdot \vdash A$ are defined as $\textbf{List}(A) \equiv \mu Y. I + A \otimes Y.$
\end{example}

Notice that our type system is not equipped with a !-modality. Indeed,
in the absence of function types, there is no reason to introduce it.
Instead, we specify the subset of types where copying is an admissible
operation. The \emph{classical types} are a subset of our types defined in
Figure~\ref{fig:syntax}.
They are characterised by the property that variables
of classical types may be copied, whereas variables of non-classical types may
not be copied (see (copy) rule in Figure~\ref{fig:term-formation}). 

% For any types $A$ and $B$ and type variable $X$, we denote with $A[B/X]$ the type where all free occurrences of $X$ in $A$ are replaced by $B$ (which is defined in the standard way).
% 
% \begin{lemma}[Type Substitution]
% If $X \vdash A$ and $\cdot \vdash B,$ then $\cdot \vdash A[B/X].$ Moreover, if $A$ and $B$ are classical, then so is $A[B/X].$
% \end{lemma}

We use small latin letters (e.g. $x,y,u,q,b$) to range over \emph{term variables}.
More specifically, $q$ ranges over variables of type $\qbit$, $u$ over
variables of unit type $I$, $b$ over variables of type $\textbf{bit} \coloneqq
I+I$ and $x,y$ range over variables of arbitrary type. We use $\Gamma$ and
$\Sigma$ to range over \emph{variable contexts}. A variable context is a
function from term variables to \emph{closed types}, which we write as $\Gamma
= x_1: A_1, \ldots, x_n:A_n.$

We use $f,g$ to range over \emph{procedure names}.  Every procedure name $f$ has
an \emph{input type} $A$ and an \emph{output type} $B$, denoted $f: A\to B$,
where $A$ and $B$ are closed types. We use $\Pi$ to range over \emph{procedure
contexts}. A procedure context is a function from procedure names to pairs of
procedure input-output types, denoted
$\Pi = f_1: A_1 \to B_1, \ldots , f_n : A_n \to B_n.$

% Whenever we write context extension, we implicitly assume that the indicated variable is not in the rest of the context, i.e. when we write $\Gamma, x : A$ we assume that $x \not \in \Gamma$; when we write $\Pi, f : A \to B$ we assume that $f \not \in \Pi$. We do this for brevity of the presentation.

\begin{remark}
Unlike lambda abstractions, procedures cannot be passed to other procedures as
input arguments, nor can they be returned as output.
\end{remark}

% We use $M,N$ to range over the \emph{terms} of our language, which are defined in Figure~\ref{fig:syntax}.
A \emph{term judgement} has the form $\Pi \vdash \langle \Gamma \rangle\ M\ \langle \Sigma \rangle$ (see Figure~\ref{fig:term-formation})
and indicates that term $M$ is well-formed in procedure context $\Pi$ with
input variable context $\Gamma$ and output variable context $\Sigma$. All types occurring
within it are closed. 

The intended interpretation of the quantum rules are as follows.
The (qbit) rule prepares a new qubit $q$ in state $\ket 0 \bra 0$.
%, where $\bra 0 = (1,0)$ and $\ket 0$ is its adjoint.
The (unitary) rule applies a unitary operator $S$ to a sequence of qubits in the standard way.
The (measure) rule
performs a quantum measurement on qubit $q$ and stores the measurement outcome
in bit $b$. The measured qubit is destroyed in the process.

The no-cloning theorem of quantum mechanics~\cite{no-cloning} shows that arbitrary qubits cannot be copied. Because of this,
copying is restricted only to classical types, as indicated by the (copy) rule, and this allows us to avoid runtime errors.
Like the original QPL~\cite{qpl}, our type system is also
\emph{affine} and the (discard) rule allows any variable to be discarded.

%%%%%%%%%%%%%%%%%%%%%%%%%%%%%%%%%%%%%%%%%%%%%%%%%%%%%%%%%%%%%%%%%%%%%%%%%%%%%%
\section{Operational Semantics of QPL}\label{sec:operational}
%%%%%%%%%%%%%%%%%%%%%%%%%%%%%%%%%%%%%%%%%%%%%%%%%%%%%%%%%%%%%%%%%%%%%%%%%%%%%%

In this section we describe the operational semantics of QPL. The central
notion is that of a \emph{program configuration} which provides a complete
description of the current state of program execution. It
consists of four components that must satisfy some coherence properties:
(1) the term which remains to be executed; (2) a \emph{value assignment}, which
is a function that assigns formal expressions to variables as a result of
execution; (3) a \emph{procedure store} which keeps track of what procedures
have been defined so far and (4) the \emph{quantum state} computed
so far.

\paragraph{Value Assignments.}
A \emph{value} is an expression defined by the following grammar:
\[v, w ::= *\ |\ n\ |\ \lleft_{A,B} v\ |\ \rright_{A,B} v\ |\ (v,w)\ |\ \ffold_{\mu X.A} v \]
where $n$ ranges over the natural numbers.
Think of $*$ as representing the unique value of unit type $I$
and of $n$ as representing a pointer to the $n$-th qubit of a quantum state $\rho$.
Specific values of interest are $\texttt{ff}:=
\textbf{left}_{I,I} *$ and $\texttt{tt}:= \textbf{right}_{I,I}
*$ which correspond to \textbf{false} and \textbf{true} respectively.
% They play an important role in the operational semantics.

A \emph{qubit pointer context} is a set $Q$ of natural numbers.
% We write $\cdot$ for the empty set. Given two \emph{disjoint} qubit pointer contexts $Q_1$ and $Q_2$, we write $Q_1,Q_2$ for the union of the two contexts, as usual.
A value $v$ of type $A$ is well-formed in qubit pointer context $Q$,
denoted $Q \vdash v : A$, if the judgement is derivable from the following rules:
  % row 1
  {\small{
  \[
    \begin{bprooftree}
    \AxiomC{\phantom{$Q$}}
    \RightLabel{} \UnaryInfC{$\cdot \vdash *: I$}
    \end{bprooftree}
    \begin{bprooftree}
    \AxiomC{\phantom{$Q$}}
    \UnaryInfC{$\{n\} \vdash n : \qbit$}
    \end{bprooftree}
    % (left)
    \begin{bprooftree}
    \AxiomC{$Q \vdash v : A$}
    \UnaryInfC{$Q \vdash \lleft_{A,B} v : A+B$}
    \end{bprooftree}
    % (right)
    \begin{bprooftree}
    \AxiomC{$Q \vdash v : B$}
    \UnaryInfC{$Q \vdash \rright_{A,B} v : A+B$}
    \end{bprooftree}
  \]
  \[
    % (pair)
    \begin{bprooftree}
    \AxiomC{$Q_1 \vdash v : A$}
    \AxiomC{$Q_2 \vdash w : B$}
    \AxiomC{$Q_1 \cap Q_2 = \varnothing$}
    \TrinaryInfC{$Q_1, Q_2 \vdash (v, w) : A \otimes B$}
    \end{bprooftree}
    % (fold)
    \begin{bprooftree}
    \AxiomC{$Q \vdash v : A[\mu X. A / X]$}
    \UnaryInfC{$Q \vdash \fold_{\mu X.A} v : \mu X. A$}
    \end{bprooftree}
  \]}}
If $v$ is well-formed, then its type and qubit pointer context are uniquely
determined. If $Q \vdash v : P$ with $P$ classical, then we say $v$ is a \emph{classical value}.

\begin{lemma}
\label{lem:classical-value-syntax}
If $Q \vdash v : P$ is a well-formed classical value, then $Q = \cdot.$
\end{lemma}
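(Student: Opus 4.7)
The plan is to proceed by induction on the derivation of $Q \vdash v : P$, exploiting the fact that the grammar for classical types $P, R$ excludes $\qbit$. Before the induction, it is worth noting a small auxiliary fact that will be needed in the $\mu$-case: if $P$ is a classical type (possibly with free type variable $X$) and $R$ is closed and classical, then $P[R/X]$ is classical. This follows by a straightforward induction on $P$ using the grammar for classical types.

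Now I would go through the six value-formation rules in turn. For $v = *$, the rule forces $Q = \cdot$ directly. For $v = n$, the conclusion has type $\qbit$, which is not a classical type, so the hypothesis $P$ classical rules this case out by contradiction. For $v = \lleft_{A,B} w$ or $v = \rright_{A,B} w$, the type $P = A+B$ is classical only if both $A$ and $B$ are classical, so the premise $Q \vdash w : A$ (resp.\ $Q \vdash w : B$) is a well-formed classical value, and the induction hypothesis gives $Q = \cdot$. For $v = (w_1, w_2)$ of type $A \otimes B$ classical, both $A$ and $B$ are classical, so the induction hypothesis applied to each premise yields $Q_1 = Q_2 = \cdot$, hence $Q = Q_1 \cup Q_2 = \cdot$.

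The only case requiring the auxiliary fact is $v = \ffold_{\mu X.A} w$: here $P = \mu X.A$ is classical, so $A$ is classical as a type in context $X$. The closed type $\mu X.A$ is itself classical, and by the auxiliary substitution fact $A[\mu X.A/X]$ is classical. Hence the premise $Q \vdash w : A[\mu X.A/X]$ is a well-formed classical value, and the induction hypothesis delivers $Q = \cdot$.

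I do not expect any serious obstacle. The only point that requires a line of argument rather than direct rule-reading is the substitution fact in the $\mu$-case, but this is an entirely routine structural induction on the classical-type grammar and could even be left implicit. Everything else is a case analysis whose shape is forced by the typing rules together with the observation that $\qbit$ is the sole non-classical type former and that each classical type constructor propagates the classical property to its components.
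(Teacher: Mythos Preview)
Your proposal is correct. The paper states this lemma without proof, treating it as an obvious structural fact, so there is no detailed argument to compare against; your induction on the derivation of $Q \vdash v : P$, together with the observation that $\qbit$ is excluded from the classical-type grammar and the auxiliary substitution fact for the $\ffold$ case, is exactly the expected routine argument.
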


A \emph{value assignment} is a function from term variables to values, which we write as
$V =\{ x_1 = v_1, \ldots, x_n = v_n \},$
where $x_i$ are variables and $v_i$ are values. A
value assignment is \emph{well-formed} in qubit pointer context $Q$ and variable context $\Gamma$, denoted $Q ; \Gamma \vdash V,$ if $V$ has exactly the same variables as $\Gamma$
and $Q=Q_1,\ldots, Q_n$, s.t. $Q_i \vdash v_i : A_i.$
Such a splitting of $Q$ is necessarily unique, if it exists, and some of the $Q_i$ may be empty.

\paragraph{Procedure Stores.}
A \emph{procedure store} is a set of procedure definitions, written as:
\[
\Omega = \left\{
          f_1 :: x_1 : A_1 \to y_1 : B_1\ \{ M_1 \}, \ldots ,
          f_n :: x_n : A_n \to y_n : B_n\ \{ M_n \}
          \right\}.
\]
A procedure store is \emph{well-formed} in procedure context $\Pi$, written $\Pi \vdash \Omega$, if the judgement is derivable via the following rules: 
  \[
    \begin{bprooftree}
    \AxiomC{\phantom{$\Pi$}}
    \UnaryInfC{$\cdot \vdash \cdot$}
    \end{bprooftree}
    \qquad
    \begin{bprooftree}
    \AxiomC{$\Pi \vdash \Omega$}
    \AxiomC{$\Pi, f : A \to B \vdash \langle x:A \rangle\ M\ \langle y:B \rangle$}
    \BinaryInfC{$\Pi, f : A \to B \vdash  \Omega, f :: x:A \to y:B\ \{M\}$}
    \end{bprooftree}
  \]
\paragraph{Program configurations.}
A \emph{program configuration} is a quadruple $(M\ |\ V\ |\ \Omega\ |\ \rho),$ where $M$ is a term, $V$ is a value assignment, $\Omega$ is a procedure store and $\rho \in \mathbb{C}^{2^n \times 2^n}$ is
a finite-dimensional density matrix with $0 \leq \trace \rho \leq 1.$ The density matrix $\rho$ represents
a (mixed) quantum state and its trace may be smaller than one because we also use it to encode probability information (see Remark~\ref{rem:nondeterminism}).
We write $\emph{dim}(\rho) = n$ to indicate that the dimension of $\rho$ is $n.$ 

A \emph{well-formed} program configuration is a configuration $(M\ |\ V\ |\ \Omega\ |\ \rho),$ where there exist (necessarily unique) $\Pi, \Gamma, \Sigma, Q$, such that:
(1) $\Pi \vdash \langle \Gamma \rangle\ M\ \langle \Sigma \rangle$ is a well-formed term;
(2) $ Q ; \Gamma \vdash V$ is a well-formed value assignment;
(3) $\Pi \vdash \Omega$ is a well-formed procedure store;
and (4) $Q = \{1,2, \ldots, \emph{dim}(\rho)\}.$
We write $\Pi; \Gamma; \Sigma; Q \vdash (M\ |\ V\ |\ \Omega\ |\ \rho)$ to indicate this situation.
The formation rules enforce that the qubits of $\rho$ and the qubit pointers from $V$ are in a 1-1 correspondence.
  %\vspace{4mm}\centerline{
  %  % (unit)
  %  \begin{bprooftree}
  %  \AxiomC{}
  %  \RightLabel{(unit)} \UnaryInfC{$
  %    (\langle \Gamma \rangle\ \newunit\ u\ \langle u:I, \Gamma \rangle\ |\ V\ |\ \psi)
  %    \leadsto
  %    (\langle u:I, \Gamma \rangle\ \sskip\ \langle u:I, \Gamma \rangle\ |\ u=*: I, V\ |\ \psi)
  %  $}
  %  \end{bprooftree}
  %}

  %\vspace{4mm}\centerline{
  %  % (qbit)
  %  \begin{bprooftree}
  %  \AxiomC{}
  %  \RightLabel{(qbit)} \UnaryInfC{$
  %  (\langle \Gamma \rangle\ \newqbit\ q\ \langle q:\textbf{qbit},\ \Gamma \rangle\ |\ V\ |\ \psi)
  %    \leadsto
  %  (\langle q:\textbf{qbit},\ \Gamma \rangle\  \sskip \langle q:\textbf{qbit},\ \Gamma \rangle\
  %    |\ q = \circledast : \qbit, V\ |\ \ket 0 \otimes \psi)
  %  $}
  %  \end{bprooftree}
  %}
\begin{figure}
{\small{
  \centerline{
    % (unit)
    \begin{bprooftree}
    \AxiomC{}
    \RightLabel{(unit)} \UnaryInfC{$
      (\newunit\ u\ |\ V\ |\ \Omega\ |\ \rho)
      \leadsto
      (\sskip\ |\ V, u=*\ |\ \Omega\ |\ \rho)
    $}
    \end{bprooftree}
  }
  \vspace{0.5mm}
  \vspace{4mm}\centerline{
    % (discard)
    \begin{bprooftree}
    \AxiomC{}
    \RightLabel{(discard)} \UnaryInfC{$
      (\textbf{discard}\ x\ |\ V, x=v\ |\ \Omega\ |\ \rho)
      \leadsto
      (\sskip\ |\ r_v(V)\ |\ \Omega\ |\ tr_v(\rho))
    $}
    \end{bprooftree}
  }
  \vspace{0.5mm}
  \vspace{4mm}\centerline{
    % (copy)
    \begin{bprooftree}
    \AxiomC{}
    \RightLabel{(copy)} \UnaryInfC{$
      (y = \ccopy\ x\ |\ V, x=v\ |\ \Omega\ |\ \rho)
      \leadsto
      (\sskip\ |\ V, x=v, y=v \ |\ \Omega\ |\ \rho)
    $}
    \end{bprooftree}
  }
  \vspace{0.5mm}
  \vspace{4mm}\centerline{
    % (qbit)
    \begin{bprooftree}
    \AxiomC{}
    \RightLabel{(qbit)} \UnaryInfC{$
    (\newqbit\ q\ |\ V\ |\ \Omega\ |\ \rho)
      \leadsto
    (\sskip\ |\ V, q = \emph{dim}(\rho)+1 \ |\ \Omega\ |\ \rho \otimes \ket 0 \bra 0)
    $}
    \end{bprooftree}
  }
  \vspace{0.5mm}
  \vspace{4mm}\centerline{
    % (unitary)
    \begin{bprooftree}
    \AxiomC{}
    \RightLabel{(unitary)} \UnaryInfC{
      $(\vec q \unitaryeqq S\ |\ V, \vec q = \vec m \ |\ \Omega\ |\ \rho)
      \leadsto
      (\sskip |\ V, \vec q = \vec m\ |\ \Omega\ |\ S_{\vec m}(\rho))
      %(\sskip |\ V, \vec q = \vec m\ |\ \Omega\ |\ \sigma^{-1} \circ (S \otimes \id ) \circ \sigma \circ \rho)
    $}
    \end{bprooftree}
  }
  \vspace{0.5mm}
  \vspace{4mm}\centerline{
    % (measure)
    \begin{bprooftree}
    \AxiomC{}
    \RightLabel{(measure0)} \UnaryInfC{$
      (b = \textbf{measure}\ q\ |\ V, q = m\ |\ \Omega\ |\ \rho)
      \leadsto
      (\sskip\ |\ r_m(V), b = \fff\ |\ \Omega\ |\ \mbox{}_m \bra 0 \rho \ket 0_m)
    $}
    \end{bprooftree}
  }
  \vspace{0.5mm}
  \vspace{4mm}\centerline{
    % (measure)
    \begin{bprooftree}
    \AxiomC{}
    \RightLabel{(measure1)} \UnaryInfC{$
      (b = \textbf{measure}\ q\ |\ V, q = m \ |\ \Omega\ |\ \rho)
      \leadsto
      (\sskip\ |\ r_m(V), b = \ttt \ |\ \Omega\ |\ \mbox{}_m \bra 1 \rho \ket 1_m)
    $}
    \end{bprooftree}
  }
  \vspace{0.5mm}
  \vspace{4mm}\centerline{
    % (seq1)
    \begin{bprooftree}
    \AxiomC{}
    \RightLabel{(seq1)} \UnaryInfC{$(\sskip;P\ |\ V\ |\ \Omega\ |\ \rho) \leadsto (P\ |\ V\ |\ \Omega\ |\ \rho)$}
    \end{bprooftree}
    \qquad
  }
  \vspace{0.5mm}
  \vspace{4mm}\centerline{
    % (seq2)
    \begin{bprooftree}
    \AxiomC{$(P\ |\ V\ |\ \Omega\ |\ \rho) \leadsto (P'\ |\ V'\ |\ \Omega'\ |\ \rho')$}
    \RightLabel{(seq2)} \UnaryInfC{$(P;Q\ |\ V\ |\ \Omega\ |\ \rho) \leadsto (P';Q\ |\ V'\ |\ \Omega'\ |\ \rho')$}
    \end{bprooftree}
    \qquad
  }
  \vspace{0.5mm}
  \vspace{4mm}\centerline{
    % (while)
    \begin{bprooftree}
    \AxiomC{}
    \RightLabel{(while)} \UnaryInfC{$ 
      (\while{$b$}{$M$}\ |\ V, b = v\ |\ \Omega\ |\ \rho)
      \leadsto
      (\textbf{if}\ b\ \textbf{then}\ \{M;\while{$b$}{$M$}\}%\ \textbf{else}\ \sskip
 \ |\ V, b = v \ |\ \Omega\ |\ \rho)
    $}
    \end{bprooftree}
  }
  \vspace{0.5mm}
  \vspace{4mm}\centerline{
    % (left)
    \begin{bprooftree}
    \AxiomC{}
    \RightLabel{(left)} \UnaryInfC{$
      (y = \lleft\ x\ |\ V, x = v \ |\ \Omega\ |\ \rho)
      \leadsto
      (\sskip\ |\ V, y =\lleft\ v\ |\ \Omega\ |\ \rho)
    $}
    \end{bprooftree}
  }
  \vspace{0.5mm}
  \vspace{4mm}\centerline{
    % (right)
    \begin{bprooftree}
    \AxiomC{}
    \RightLabel{(right)} \UnaryInfC{$
      (y = \rright\ x\ |\ V, x = v \ |\ \Omega\ |\ \rho)
      \leadsto
      (\sskip\ |\ V, y =\rright\ v \ |\ \Omega\ |\ \rho)
    $}
    \end{bprooftree}
  }
  \vspace{0.5mm}
  \vspace{4mm}\centerline{
    % (case)
    \begin{bprooftree}
    \AxiomC{}
    \RightLabel{(caseLeft)} \UnaryInfC{$
      (\ccase\ y\ \textbf{of}\ \{\lleft\ x_1 \to M_1\ |\ \rright\ x_2 \to M_2\ \}\ |\ V, y=\lleft\ v\ |\ \Omega\ |\ \rho)
      \leadsto
      (M_1\ |\ V, x_1=v\ |\ \Omega\ |\ \rho)
    $}
    \end{bprooftree}
    \qquad\ \ 
  }
  \vspace{0.5mm}
  \vspace{4mm}\centerline{
    % (case)
    \begin{bprooftree}
    \AxiomC{}
    \RightLabel{(caseRight)} \UnaryInfC{$
      (\ccase\ y\ \textbf{of}\ \{\lleft\ x_1 \to M_1\ |\ \rright\ x_2 \to M_2\ \}\ |\ V, y=\rright\ v\ |\ \Omega\ |\ \rho)
      \leadsto
      (M_2\ |\ V, x_2=v\ |\ \Omega\ |\ \rho)
    $}
    \end{bprooftree}
    \qquad\ \ 
  }
  \vspace{0.5mm}
  % row 6
  \vspace{4mm}\centerline{
    % (pair)
    \begin{bprooftree}
    \AxiomC{}
    \RightLabel{(pair)} \UnaryInfC{$
    (x = (x_1, x_2)\ |\ V, x_1 = v_1, x_2 = v_2 \ |\ \Omega\ |\ \rho)
    \leadsto
    (\sskip\ |\ V, x=(v_1,v_2)\ |\ \Omega\ |\ \rho)
    $}
    \end{bprooftree}
  }
  \vspace{0.5mm}
  \vspace{4mm}\centerline{
  % (unpair)
    \begin{bprooftree}
    \AxiomC{}
    \RightLabel{(unpair)} \UnaryInfC{$
    ((x_1, x_2) = x\ |\ V, x=(v_1,v_2)\ |\ \Omega\ |\ \rho)
    \leadsto
    (\sskip\ |\ V, x_1 = v_1, x_2 = v_2\ |\ \Omega\ |\ \rho)
    $}
    \end{bprooftree}
  }
  \vspace{0.5mm}
  \vspace{4mm}\centerline{
    % (fold)
    \begin{bprooftree}
    \AxiomC{}
    \RightLabel{(fold)} \UnaryInfC{$
      (y = \textbf{fold}\ x\ |\ V, x = v\ |\ \Omega\ |\ \rho)
      \leadsto
      (\sskip\ |\ V, y =\textbf{fold}\ v\ |\ \Omega\ |\ \rho)
    $}
    \end{bprooftree}
  }
  \vspace{0.5mm}
  \vspace{4mm}\centerline{
    % (unfold)
    \begin{bprooftree}
    \AxiomC{}
    \RightLabel{(unfold)} \UnaryInfC{$
      (y = \textbf{unfold}\ x\ |\ V, x = \textbf{fold}\ v\ |\ \Omega\ |\ \rho)
      \leadsto
      (\sskip\ |\ V, y = v\ |\ \Omega\ |\ \rho)
    $}
    \end{bprooftree}
  }
  \vspace{0.5mm}
  \vspace{4mm}\centerline{
    % (proc)
    \begin{bprooftree}
    \AxiomC{}
    \RightLabel{(proc)} \UnaryInfC{$
      (\textbf{proc}\ f::\ x:A \to y:B\ \{M\}\ |\ V\ |\ \Omega\ |\ \rho)
      \leadsto
      (\sskip\ |\ V\ |\ \Omega, f::\ x:A \to y:B\ \{M\}\ |\ \rho)
    $}
    \end{bprooftree}
  }
  \vspace{1mm}
  \vspace{4mm}\centerline{
    % (call)
    \begin{bprooftree}
    \AxiomC{}
    \RightLabel{(call)} \UnaryInfC{$
      (y_1 = f(x_1)\ |\ V, x_1 = v\ |\ \Omega, f::\ x_2:A \to y_2:B\ \{M\}\ |\ \rho)
      \leadsto
      (M_\alpha\ |\ V, x_1 =v\ |\ \Omega, f::\ x_2 : A \to y_2 : B\ \{M\}\ |\ \rho)
    $}
    \end{bprooftree}
  }
}}
\caption{Small Step Operational semantics of QPL.}
\label{fig:operational-small-step}
\end{figure}

% \begin{remark}
% The quantum state $\rho$ may become \emph{entangled} during program execution
% and because of this it has to be considered in its entirety.
% For this reason, the value assignment makes use of pointers to $\rho$,
% rather than store quantum information locally for each variable.
% \end{remark}

The small step semantics is defined for configurations
$(M\ |\ V\ |\ \Omega\ |\ \rho)$ by induction on $M$ in Figure~\ref{fig:operational-small-step} 
and we now explain the notations used therein.

In the (discard) rule, we use two functions that depend on a value $v$. They are $tr_v,$ which modifies the quantum state $\rho$ by
tracing out all of its qubits which are used in $v$, and $r_v$ which simply reindexes the value assignment, so that the pointers
within $r_v(V)$ correctly point to the corresponding qubits of $tr_v(\rho),$ which is potentially of smaller dimension than $\rho$.
Formally, for a well-formed value $v$, let $Q$ and $A$ be the unique qubit pointer context and type, such that $Q \vdash v : A$. Then $tr_v(\rho)$ is the quantum
state obtained from $\rho$ by tracing out all qubits specified by $Q$.
Given a value assignment $V=\{x_1 = v_1, \ldots x_n = v_n\}$, then
$ r_v(V) = \{x_1 = r_v'(v_1), \ldots, x_n = r_v'(v_n)\}, \text{ where:} $
\[
r_v'(w)=
\begin{cases}
*,                                         & \text{if } w=* \\
k - |\{ i \in Q\ |\ i < k  \}|,         & \text{if } w=k \in \mathbb N \\
\textbf{left}\ r_v'(w'),                   & \text{if } w=\textbf{left}\ w'\\
\textbf{right}\ r_v'(w'),                  & \text{if } w=\textbf{right}\ w'\\
(r_v'(w_1),r_v'(w_2))                      & \text{if } w=(w_1,w_2) \\
\textbf{fold}\ r_v'(w'),                   & \text{if } w=\textbf{fold}\ w'
\end{cases}
\]

%Formally, for a control value $v$, let $\{i_1, \ldots, i_k\}$ be the set of natural numbers occurring within $v$. Then, $tr_v(\rho)$ is the quantum
%state obtained from $\rho$ by tracing out qubits $i_1, \ldots, i_k$.
%Given a value assignment $V=\{x_1 = v_1, \ldots x_n = v_n\}$, then
%\[ r_v(V) = \{x_1 = r_v'(v_1), \ldots, x_n = r_v'(v_n)\}, \text{ where}\]
%\[
%r_v'(w)=
%\begin{cases}
%*,                                         & \text{if } w=* \\
%n - |\{ j \in \mathbb N\ |\ i_j < n  \}|,  & \text{if } w=n \\
%\textbf{left}\ r_v'(w'),                   & \text{if } w=\textbf{left}\ w'\\
%\textbf{right}\ r_v'(w'),                  & \text{if } w=\textbf{right}\ w'\\
%(r_v'(w_1),r_v'(w_2))                      & \text{if } w=(w_1,w_2) \\
%\textbf{fold}\ r_v'(w'),                   & \text{if } w=\textbf{fold}\ w'
%\end{cases}
%\]

% In the (qbit) rule, $\otimes$ is the Kronecker product.

In the (unitary) rule, the superoperator $S_{\vec m}$ applies the unitary $S$ to the vector of qubits specified by $\vec m$.
In the (measure0) and (measure1) rules, the $m$-th qubit of $\rho$ is measured in the computational basis, the measured qubit is destroyed in the process and the measurement outcome is stored in the bit $b$.
More specifically, $\ket i_m = I_{2^{m-1}} \otimes \ket i \otimes I_{2^{n-m}}$
and $ _m \bra i$ is its adjoint, for $i \in \{0,1\}$, and where $I_{n}$ is the identity matrix in $\mathbb{C}^{n \times n}$.

\begin{remark}
\label{rem:nondeterminism}
Because of the way we decided to handle measurements, reduction $(-
\leadsto -)$ is a \emph{nondeterministic} operation, where we encode the
probabilities of reduction within the trace of our density matrices in a similar way
to~\cite{quantum-goi}. Equivalently, we may see the reduction relation as
\emph{probabilistic} provided that we normalise all density matrices and
decorate the reductions with the appropriate probability information as
specified by the Born rule of quantum mechanics. The nondeterministic view
leads to a more concise and clear presentation and because of this we have
chosen it over the probabilistic view.
\end{remark}

In the (while) rule, the term $\textbf{if}\ b\ \textbf{then}\ \{M\} $ is syntactic sugar for $\ccase\ b\ \textbf{of}\ \{\lleft\ u \to b = \lleft\ u\ |\ \rright\ u \to b = \rright\ u;M\ \}$. 
The (proc) rule simply defines a procedure which is added to the procedure store and the (call) rule is used to call already defined procedures.
In the (call) rule, the term $M_\alpha$ is $\alpha$-equivalent
to $M$ and is obtained from it by renaming the input $x_2$ to $x_1$, renaming the output $y_2$ to $y_1$ and renaming all other variables within $M$ to some fresh names, so as to avoid conflicts with the input, output and the rest of the variables within $V$.

% We proceed with the expected Subject Reduction property, but before that a helpful lemma.

% \begin{lemma}
% Let $\Pi \vdash \langle \Gamma \rangle\ M\ \langle \Sigma \rangle$ and let $\Gamma'$ be a variable context disjoint from $\Gamma$ and $\Sigma$. Then $\Pi \vdash \langle \Gamma', \Gamma \rangle\ M_\alpha\ \langle \Gamma', \Sigma \rangle$,
% where $M_\alpha$ is any term obtained from $M$ by renaming all variables which appear in $\Gamma'$ to some fresh names.
% \end{lemma}

\begin{theorem}[Subject reduction]\label{thm:subject}
If $\Pi; \Gamma; \Sigma; Q \vdash (M\ |\ V\ |\ \Omega\ |\ \rho)$ and $(M\ |\ V\ |\ \Omega\ |\ \rho) \leadsto (M'\ |\ V'\ |\ \Omega'\ |\ \rho')$,
then $\Pi'; \Gamma'; \Sigma; Q' \vdash (M'\ |\ V'\ |\ \Omega'\ |\ \rho')$, for some (necessarily unique) contexts $\Pi', \Gamma', Q'$ and where $\Sigma$ is invariant.
\end{theorem}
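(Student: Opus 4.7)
The plan is to proceed by induction on the derivation of the reduction step $(M\ |\ V\ |\ \Omega\ |\ \rho) \leadsto (M'\ |\ V'\ |\ \Omega'\ |\ \rho')$. Only (seq2) is an inductive rule, so the argument reduces to a case analysis over the remaining rules, each of which determines $\Pi'$, $\Gamma'$, $Q'$ uniquely. In every case I must verify the four clauses in the definition of well-formedness: the term judgement $\Pi' \vdash \langle \Gamma' \rangle\ M'\ \langle \Sigma \rangle$, the value-assignment judgement $Q';\Gamma' \vdash V'$, the procedure-store judgement $\Pi' \vdash \Omega'$, and the coincidence $Q' = \{1, \ldots, \mathrm{dim}(\rho')\}$. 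Invariance of $\Sigma$ is immediate from inspection of every rule.

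For the purely syntactic rules --- (unit), (left), (right), (caseLeft), (caseRight), (pair), (unpair), (fold), (unfold), (while), (seq1), (seq2) --- the argument is routine: invert the formation rule applied to $M$ to extract the typing of subterms, construct $V'$ and $\Gamma'$ by adding or removing the relevant variables, and apply the value-formation rules (in particular that folding and pairing preserve well-typedness, with disjointness of pointer contexts following from the premise $Q;\Gamma \vdash V$). The (copy) rule uses Lemma~\ref{lem:classical-value-syntax} crucially: since $v$ is a classical value its pointer context is empty, so duplicating $v$ neither violates the disjointness requirement in the rule for pairs nor disturbs the 1-1 correspondence with qubits of $\rho$.

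The genuinely quantum rules require more care. For (qbit), the new pointer $\mathrm{dim}(\rho)+1$ is fresh and extending the state by $\ket 0 \bra 0$ increases $\mathrm{dim}(\rho)$ by one, preserving clause (4). For (unitary), nothing in $V$, $\Gamma$, $Q$, or $\Omega$ changes, so only $\rho'$ needs to remain a density matrix of the same dimension, which it does. The interesting cases are (discard), (measure0), and (measure1), where qubits are physically removed from $\rho$ and the pointers in all remaining values must be relabelled. Here the key fact is the following auxiliary observation, proved by straightforward induction on value structure using the definition of $r_v'$: if $Q_v \vdash v : A$ is a subvalue of some $w$ with $Q_w \vdash w : B$ and $Q_v \subseteq Q_w$, then $(Q_w \setminus Q_v)$ reindexed by $r_v'$ types $r_v'(w)$ at $B$. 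Applied pointwise to $V$, this yields $(Q \setminus Q_v);\Gamma' \vdash r_v(V)$ with the reindexed pointer context equal to $\{1, \ldots, \mathrm{dim}(\rho)-|Q_v|\} = \{1, \ldots, \mathrm{dim}(tr_v(\rho))\}$, as required. The (measure) cases are a special instance: the outcome $\fff$ or $\ttt$ is a classical value contributing the empty pointer context, so the only change is the removal of the single qubit $m$.

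For the store-manipulating rules, (proc) is handled by inverting the (proc) formation rule for $M$ to extract $\Pi, f : A \to B \vdash \langle x:A \rangle\ M\ \langle y:B \rangle$, which is precisely what is needed to extend $\Omega$ well-formedly. For (call), well-formedness of $\Omega$ supplies $\Pi, f : A \to B \vdash \langle x_2:A \rangle\ M\ \langle y_2:B \rangle$; since typing is preserved under $\alpha$-equivalence and renaming of the designated input and output variables, we obtain $\Pi, f: A \to B \vdash \langle x_1 : A \rangle\ M_\alpha\ \langle y_1 : B \rangle$, which composes with the ambient $\Gamma$ to yield the required judgement. The main obstacle throughout is the bookkeeping for (discard) and (measure): verifying that $r_v$ consistently reindexes every pointer in every value of $V$ so as to re-establish clause~(4) is the only non-cosmetic step and is where the auxiliary lemma on $r_v'$ is needed.
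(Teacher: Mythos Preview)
The paper does not actually supply a proof of this theorem: it is stated and then immediately followed by an assumption that all configurations are well-formed, with subject reduction treated as a routine syntactic fact. Your approach --- induction on the reduction derivation, case analysis on the rules, with the only non-trivial bookkeeping being the reindexing of qubit pointers in (discard) and (measure) --- is exactly the expected one and is correct in outline.

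One small correction: your auxiliary lemma on $r_v'$ is misstated. You write ``if $Q_v \vdash v : A$ is a subvalue of some $w$ with $Q_w \vdash w : B$ and $Q_v \subseteq Q_w$,'' but in the (discard) situation $v$ is \emph{not} a subvalue of $w$; rather $v$ is the discarded value and $w$ ranges over the remaining entries of $V$, whose pointer contexts are \emph{disjoint} from $Q_v$. The lemma you need is: if $Q_v \vdash v : A$ and $Q_w \vdash w : B$ with $Q_v \cap Q_w = \varnothing$, then $r_v'$ restricts to an order-preserving bijection from $Q_w$ onto some $Q_w'$ and $Q_w' \vdash r_v'(w) : B$. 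This is what your induction on the structure of $w$ actually proves, and it is what yields clause~(4) after summing over the entries of $V$.
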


\begin{assumption}
From now on we assume all configurations are well-formed.
\end{assumption}

A configuration $(M\ |\ V\ |\ \Omega\ |\ \rho)$ is said to be \emph{terminal}
if $M = \sskip.$ Program execution finishes at terminal
configurations, which are characterised by the property that they do not
reduce any further. We will use calligraphic letters $(\mathcal C, \mathcal
D, \ldots)$ to range over configurations and we will use $\mathcal T$ to range over terminal configurations.
For a configuration $\mathcal C = (M\ |\ V\ |\ \Omega\ |\ \rho)$, we write for brevity $\trace{\mathcal C} \coloneqq \trace \rho $
and we shall say $\mathcal C$ is \emph{normalised} whenever $\trace{\mathcal C} = 1$.
We say that a configuration $\mathcal C$ is \emph{impossible} if $\trace{\mathcal C} = 0$ and we say it is \emph{possible} otherwise.

\begin{theorem}[Progress]\label{thm:progress}
If $\mathcal C$ is a configuration, then either $\mathcal C$ is terminal or there exists a configuration $\mathcal D$, such that 
$\mathcal C \leadsto \mathcal D.$ Moreover, if $\mathcal C$ is not terminal, then $\trace{\mathcal C} = \sum_{\mathcal C \leadsto \mathcal D} \trace{\mathcal D}$ and there are at most two such configurations $\mathcal D$.
\end{theorem}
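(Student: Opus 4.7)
The plan is to fix a well-formed configuration $\mathcal C = (M\ |\ V\ |\ \Omega\ |\ \rho)$ and proceed by structural induction on $M$. The base case is $M = \sskip$, for which $\mathcal C$ is terminal by definition; inspection of Figure~\ref{fig:operational-small-step} confirms that no rule has $\sskip$ at the head position, so nothing further is needed. For every other shape of $M$, I must (i) exhibit at least one applicable reduction rule, (ii) verify that at most two rules apply, with two applying only in the measurement case, and (iii) check that the traces of the successor configurations sum to $\trace \rho$.

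For (i) and (ii), the small-step rules in Figure~\ref{fig:operational-small-step} are indexed by the head constructor of $M$ and are therefore syntactically deterministic, with \textbf{measure} being the only constructor for which two rules apply. The one nontrivial point is to confirm that the values stored in $V$ have the syntactic shape the rules pattern-match on. For this I need an inversion lemma on the value-formation rules: whenever $Q \vdash v : A$ is derivable, the outermost constructor of $v$ is determined by the outermost type constructor of $A$ --- a value of type $\qbit$ is a natural number, a value of $A+B$ is either $\lleft w$ or $\rright w$, a value of $A \otimes B$ is a pair, and a value of $\mu X.A$ is $\fold w$. Together with the well-formedness of $\mathcal C$, this guarantees the side conditions of (discard), (copy), (measure0/1), (caseLeft/Right), (unpair), (unfold) and (call). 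For $M = M_1; M_2$, I split on whether $M_1 = \sskip$ (apply (seq1)) or $M_1 \neq \sskip$; in the latter case the configuration $(M_1\ |\ V\ |\ \Omega\ |\ \rho)$ is itself well-formed by inversion on term formation, so the induction hypothesis supplies its reductions, which lift via (seq2) and inherit both the ``at most two'' and trace properties.

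For (iii), the single-successor rules either leave $\rho$ untouched or transform it in a trace-preserving way: (qbit) tensors with $\ket 0 \bra 0$ of trace one, (unitary) conjugates by a unitary, and (discard) takes a partial trace. The only rule giving two successors is measurement, for which the claim reduces to the identity
\[
\trace{\mbox{}_m\bra 0 \rho \ket 0_m} + \trace{\mbox{}_m\bra 1 \rho \ket 1_m} = \trace \rho,
\]
which follows from cyclicity of the trace together with the completeness relation $\ket 0 \bra 0 + \ket 1 \bra 1 = I_2$ on the $m$-th factor. For (seq2) the trace identity is inherited directly from the induction hypothesis.

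The step requiring the most care is the inversion lemma on well-formed values; it is intuitively obvious but needs to be proved by induction on the derivation of $Q \vdash v : A$ to exclude pathological mismatches between the head constructor of $v$ and the head constructor of $A$. Once that lemma is in hand, the rest of the proof is routine case analysis, and the only genuinely quantum-mechanical ingredient is the measurement completeness relation used above.
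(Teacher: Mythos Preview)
Your proposal is correct and follows the standard route for a Progress theorem: structural induction on the term component, an inversion (canonical-forms) lemma on well-formed values to justify that each head constructor matches the pattern of exactly one rule (two for \textbf{measure}), and a case-by-case check that the quantum state transformations are trace-preserving, with the only nontrivial case being the completeness relation for measurement. The paper itself states Theorem~\ref{thm:progress} without proof, treating it as routine; your write-up is exactly the argument one would expect to fill that gap, so there is nothing to compare.
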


\begin{figure}[t]
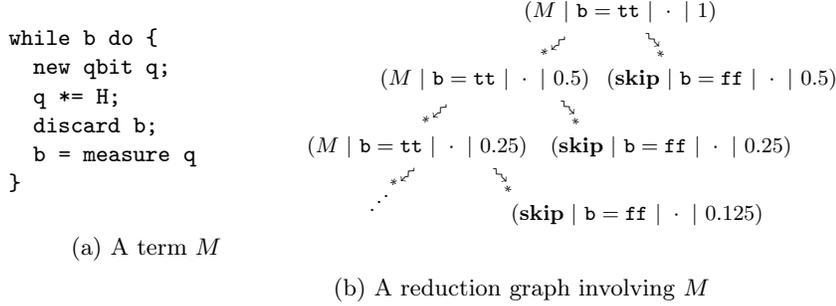

\begin{minipage}{0.3\textwidth}
\begin{subfigure}[b]{\textwidth}
\begin{verbatim}
while b do {
  new qbit q;
  q *= H;
  discard b;
  b = measure q
}
\end{verbatim}
\caption{A term $M$}
\label{sfig:term}
\end{subfigure}
\end{minipage}
\begin{minipage}{0.5\textwidth}
\begin{subfigure}[b]{\textwidth}
\cstikz[0.9]{execution.tikz}
\caption{A reduction graph involving $M$}
\label{sfig:rg}
\end{subfigure}
\end{minipage}
\caption{Example of a term and of a reduction graph.}
\label{fig:cointoss}
\end{figure}

In the above situation, the probability of reduction is $\mathrm{Pr}(\mathcal C \leadsto \mathcal D) \coloneqq \trace{\mathcal D} / \trace{\mathcal C},$
for any possible $\mathcal C$ (see Remark~\ref{rem:nondeterminism}) and Theorem~\ref{thm:progress} shows the total
probability of all single-step reductions is 1.
If $\mathcal C$ is impossible, then $\mathcal C$ occurs with probability 0 and subsequent reductions are also impossible.

\paragraph{Probability of Termination}
Given configurations $\mathcal C$ and $\mathcal D$ let $\Seq_n(\mathcal C, \mathcal D) \coloneqq \{ \mathcal C_0 \leadsto \cdots \leadsto \mathcal C_n |\ \mathcal C_0 = \mathcal C\ \text{and}\ \mathcal C_n = \mathcal D \}$,
% i.e. $\Seq_n(\mathcal C, \mathcal D)$ is the set of all $n$-step reduction sequences from $\mathcal C$ to $\mathcal D$.
and let $\Seq_{\leq n}(\mathcal C, \mathcal D) = \bigcup_{i=0}^n \Seq_n(\mathcal C, \mathcal D)$.
% $\Seq_{\leq n}(\mathcal C, \mathcal D)$ is the set of all reduction sequences from $\mathcal C$ to $\mathcal D$ which have at most $n$ steps.
Finally, let
$ \TerSeq_{\leq n}(\mathcal C) \coloneqq \bigcup_{\mathcal T \text{ terminal}} \Seq_{\leq n}(\mathcal C, \mathcal T). $
In other words, $\TerSeq_{\leq n}(\mathcal C)$ is the set of all reduction sequences from $\mathcal C$ which terminate in at most $n$ steps (including 0 if $\mathcal C$ is terminal).
For every terminating reduction sequence $r = \left( \mathcal C \leadsto \cdots \leadsto \mathcal T \right)$,
let $\End(r) \coloneqq \mathcal T,$ i.e. $\End(r)$ is simply the (terminal) endpoint of the sequence.

For any configuration $\mathcal C$, the sequence $\left( \sum_{r \in \TerSeq_{\leq n}(\mathcal C)} \trace{\End(r)}\right)_{n \in \mathbb N}$
is increasing with upper bound $\trace{\mathcal C}$ (follows from Theorem~\ref{thm:progress}).
% Therefore, we may take its supremum and for each possible $\mathcal C$
For any possible $\mathcal C$, we define:
\[ \Halt(\mathcal C) \coloneqq \bigvee_{n=0}^\infty \sum_{r \in \TerSeq_{\leq n}(\mathcal C)} \trace{\End(r)} / \trace{\mathcal C} \]
which is exactly the \emph{probability of termination} of $\mathcal C$.
This is justified, because $\Halt(\mathcal T) = 1$, for any terminal (and possible) configuration $\mathcal T$
and $\Halt(\mathcal C) = \sum_{\substack{\mathcal C \leadsto \mathcal D \\ \mathcal D\ \mathrm{possible}}} \mathrm{Pr}(\mathcal C \leadsto \mathcal D) \Halt(\mathcal D).$
We write $\leadsto_*$ for the transitive closure of $\leadsto$.

\begin{example}
Consider the term $M$ in Figure~\ref{fig:cointoss}. The body of the \textbf{while} loop (\ref{sfig:term}) has the effect of performing a fair coin toss (realised through quantum measurement in the standard way) and storing the outcome in variable $\texttt{b}$.
Therefore, starting from configuration $\mathcal C = (M\ |\ \texttt{b} = \texttt{tt}\ |\ \cdot\ |\ 1) $, as in Subfigure~\ref{sfig:rg}, the program has the effect of tossing a fair coin until \texttt{ff} shows up. The set of terminal configurations reachable from $\mathcal C$
is $\{ (\sskip\ |\ \texttt{b} = \texttt{ff}\ |\ \cdot\ |\ 2^{-i}) \ |\ i \in \mathbb N_{\geq 1}\}$ and the last component of each configuration is a $1 \times 1$ density matrix which is exactly the probability of reducing to the configuration.
Therefore $\Halt(\mathcal C) = \sum_{i = 1}^\infty 2^{-i} = 1$.
\end{example}

\begin{figure}[t]
\begin{minipage}{0.5\textwidth}
{\small{
\begin{subfigure}[b]{\textwidth}
\begin{verbatim}
proc GHZnext :: l : ListQ -> l : ListQ {
  new qbit q;
  case l of
      nil -> q*=H;
             l = q :: nil
    | q' :: l' -> q',q *= CNOT;
                  l = q :: q' :: l'
}

proc GHZ :: n : Nat -> l : ListQ {
  case n of
      zero -> l = nil
    | s(n') -> l = GHZnext(GHZ(n'))
}
\end{verbatim}
\caption{Procedures for generating GHZ$_n$}
\label{sfig:ghz}
\end{subfigure}
}
}
\end{minipage}
\begin{minipage}{0.5\textwidth}
\begin{subfigure}[b]{\textwidth}
\cstikz[0.9]{execution2.tikz}
\caption{A reduction sequence producing GHZ$_3$}
\label{sfig:ghzr}
\end{subfigure}
\end{minipage}
\caption{Example with lists of qubits and a recursive procedure.}
\label{fig:ghz}
\end{figure}

\begin{example}
The GHZ$_n$ state is defined as $\gamma_n \coloneqq (\ket{0}^{\otimes n} + \ket{1}^{\otimes n}) (\bra{0}^{\otimes n} + \bra{1}^{\otimes n}) / 2$.
In Figure~\ref{fig:ghz}, we define a procedure \texttt{GHZ}, which given a natural number $n$, generates the state $\gamma_n$, which is represented as a list of qubits of length $n$.
The procedure (\ref{sfig:ghz}) uses an auxiliary procedure \texttt{GHZnext}, which given a list of qubits representing the state $\gamma_n$, returns the state $\gamma_{n+1}$ again represented as a list of qubits.
The two procedures make use of some (hopefully obvious) syntactic sugar. In~\ref{sfig:ghzr}, we also present the last few steps of a reduction sequence which produces $\gamma_3$ starting
from configuration $(\texttt{l = GHZ(n)}\ |\ \texttt{n} = \texttt{s(s(s(zero)))}\ |\ \Omega\ |\ 1)$, where $\Omega$ contains the above mentioned procedures. In the reduction sequence we only show the term in evaluating position and we omit some intermediate steps.
The type \texttt{ListQ} is a shorthand for $\textbf{List}(\qbit)$ from Example~\ref{ex:syntax}.
\end{example}

%%%%%%%%%%%%%%%%%%%%%%%%%%%%%%%%%%%%%%%%%%%%%%%%%%%%%%%%%%%%%%%%%%%%%%%%%%%%%%
\section{W*-algebras}\label{sec:model}
%%%%%%%%%%%%%%%%%%%%%%%%%%%%%%%%%%%%%%%%%%%%%%%%%%%%%%%%%%%%%%%%%%%%%%%%%%%%%%
\newcommand{\mat}[1]{M_{#1}}

In this section we describe our denotational model. It is based on W*-algebras,
which are algebras of observables (i.e. physical entities), with interesting
domain-theoretic properties.  We recall some background on W*-algebras and
their categorical structure.
% and we show how to construct (parameterised) initial algebras.
We refer the reader to \cite{takesaki} for an encyclopaedic account on W*-algebras.

%%%%%%%%%%%%%%%%%%%
\paragraph{Domain-theoretic preliminaries.}
%%%%%%%%%%%%%%%%%%%

\newcommand{\lub}{\bigvee}
Recall that a directed subset of a poset $P$ is a non-empty subset $X \subseteq P$
in which every pair of elements of $X$ has an upper bound in $X$.  A poset $P$
is a \emph{directed-complete partial order} (\textit{dcpo}) if each directed
subset has a supremum. A poset $P$ is \textit{pointed} if it has a least
element, usually denoted by $\perp$. A monotone map $f: P \to Q$ between posets
is \textit{Scott-continuous} if it preserves suprema of directed subsets. If
$P$ and $Q$ are pointed and $f$ preserves the least element, then we
say $f$ is \emph{strict}. We write $\cpo$ ($\cpobs$) for the category of (pointed) dcpo's and
(strict) Scott-continuous maps between them.

%%%%%%%%%%%%%%%%%%%
\paragraph{Definition of W*-algebras.}
%%%%%%%%%%%%%%%%%%%

\newcommand{\norm}[1]{\|#1\|}
A \emph{complex algebra} is a complex vector space $V$ equipped with a bilinear multiplication $( - \cdot - ): V \times V \to V$, which we write as juxtaposition.
A \emph{Banach algebra} $A$ is a complex algebra $A$ equipped with a submultiplicative norm $\norm{-} : A \to \mathbb R_{\geq 0}$, i.e.~$\forall x, y \in A: \norm{x y} \leq \norm{x}\norm{y}$.
A $\ast$-\emph{algebra} $A$ is a complex algebra $A$ with an involution $(-)^* : A \rightarrow A$ such that $(x^*)^*=x$, $(x+y)^*=(x^*+y^*)$, $(xy)^*=y^*x^*$ and $(\lambda x)^* = \overline{\lambda}x^* , $ for $x, y \in A$ and $\lambda \in \mathbb C$.
A \textit{C*-algebra} is a Banach $\ast$-algebra $A$ which satisfies the C*-identity, i.e.~$\norm{x^*x} = \norm{x}^2$ for all $x \in A$. A C*-algebra $A$ is \textit{unital} if it has an element $1 \in A,$ such that for every $x \in A : $ $x1=1x=x$.
All C*-algebras in this paper are unital and for brevity we regard unitality as part of their definition.

\newcommand{\N}{\mathbb{N}}

\begin{example}
\label{ex:c*-algebras}
The algebra $\mat n(\mathbb C)$ of $n\times n$ complex matrices is a C*-algebra. In particular, the set of complex numbers $\mathbb C$ has a C*-algebra structure since $\mat 1(\mathbb C) = \mathbb C $.
More generally, the $n\times n$ matrices valued in a C*-algebra $A$ also form a C*-algebra $\mat n(A)$. The C*-algebra of qubits is $\qbit \coloneqq \mat 2(\mathbb C).$
\end{example}

An element $x \in A$ of a C*-algebra $A$ is called \emph{positive} if $\exists y \in A : x=y^* y$. The \emph{poset of positive elements} of $A$ is denoted $A^+$ and its order is given by $x\leq y$ iff $(y-x) \in A^+$.
The \emph{unit interval} of $A$ is the subposet $[0,1]_A \subseteq A^+$ of all positive elements $x$ such that $0 \leq x \leq 1.$

Let $f: A \rightarrow B$ be a linear map between C*-algebras $A$ and $B$.
We say that $f$ is \textit{positive} if it preserves positive elements.
We say that $f$ is \textit{completely positive} if it is $n$-positive for every $n \in \N$, i.e. the map $\mat n(f):\mat n(A) \to \mat n(B)$ defined for every matrix $[x_{i,j}]_{1 \leq i,j\leq n} \in \mat n(A)$ by $\mat n(f)([x_{i,j}]_{1 \leq i,j\leq n})=[f(x_{i,j})]_{1 \leq i,j\leq n}$ is positive.
The map $f$ is called \textit{multiplicative}, \textit{involutive}, \textit{unital} if it preserves multiplication, involution, and the unit, respectively. The map $f$ is called \textit{subunital} whenever the inequalities $0 \leq f(1) \leq 1$ hold.
A \emph{state} on a C*-algebra $A$ is a completely positive unital map $s : A \to \mathbb C$.

\newcommand{\unit}{[0,1]}
Although W*-algebras are commonly defined in topological terms (as C*-algebras
closed under several operator topologies) or equivalently in algebraic terms (as
C*-algebras which are their own bicommutant), one can also equivalently define them in
domain-theoretic terms \cite{rennela-msc}, as we do next.

A completely positive map between C*-algebras is \emph{normal} if
its restriction to the unit interval is Scott-continuous~\cite[Proposition~A.3]{rennela-msc}.
A \emph{W*-algebra} is a C*-algebra $A$ such that the unit interval $\unit_A$ is a dcpo,
and $A$ has a separating set of normal states: for every $x \in
A^+$, if $x \neq 0$, then there is a normal state $s:A \to \mathbb C$ such that
$s(x)\neq 0$ \cite[Theorem~III.3.16]{takesaki}.

A linear map $f : A \to B$ between W*-algebras $A$ and $B$ is called an \emph{NCPSU-map} if $f$ is normal, completely positive and subunital. The map $f$ is called an \emph{NMIU-map} if $f$ is normal, multiplicative, involutive and unital.
We note that every NMIU-map is necessarily an NCPSU-map and that W*-algebras are closed under formation of matrix algebras as in Example~\ref{ex:c*-algebras}.

%%%%%%%%%%%%%%%%%%%
\paragraph{Categorical Structure.}
% \label{subsub:categorical-structure}
%%%%%%%%%%%%%%%%%%%

Let $\Wstar$ be the category of W*-algebras and NCPSU-maps and let $\WNMIU$ be its full-on-objects subcategory of NMIU-maps.
Throughout the rest of the paper let $\CC \coloneqq (\Wstar)^\op$ and let $\VV \coloneqq (\WNMIU)^\op.$
QPL types are interpreted as functors $\lrb{\Theta \vdash A} : \VV^{|\Theta|} \to \VV$ and closed QPL types as objects $\lrb A \in \Ob(\VV) = \Ob(\CC).$
One should think of $\VV$ as the category of \emph{values}, because the
interpretation of our values from~\secref{sec:operational} are indeed
$\VV$-morphisms. General QPL terms are interpreted as morphisms of $\CC$, so
one should think of $\CC$ as the category of \emph{computations}.
We now describe the categorical structure of $\VV$ and $\CC$ and later we justify our choice for working in the opposite categories.

Both $\CC$ and $\VV$ have a symmetric monoidal structure when equipped with the spacial tensor product, denoted here by $(- \otimes -)$, and tensor unit $I \coloneqq \mathbb C$~\cite[Section~10]{quantum-collections}. 
Moreover, $\VV$ is symmetric monoidal closed and also complete and cocomplete~\cite{quantum-collections}.
$\CC$ and $\VV$ have finite coproducts, given by direct sums of W*-algebras \cite[Proposition~4.7.3]{cho-msc}.
The coproduct
of objects $A$ and $B$ is denoted by $A+B$ and the coproduct injections are
denoted $\mathrm{left}_{A,B} : A \to A+B$ and $\mathrm{right}_{A,B}: B \to A+B.$
Given morphisms $f: A \to C$ and $g:B \to C$, we write
$[f,g]: A+B \to C$ for the unique cocone morphism induced by the coproduct.
Moreover, coproducts distribute over tensor products \cite[\S 4.6]{cho-msc}.
More specifically, there exists a natural isomorphism
$ d_{A,B,C}: A \otimes (B+C) \to (A \otimes B) + (A \otimes C) $ which
satisfies the usual coherence conditions.
The initial object in $\CC$ is moreover a zero object and is denoted $0.$ 
The W*-algebra of bits is $\textbf{bit} \coloneqq I + I = \mathbb C \oplus \mathbb C$.

The categories $\VV, \CC$ and $\Set$ are related by symmetric monoidal adjunctions:
\[
  \stikz{adjunctions.tikz}
\]
and the subcategory inclusion $J$ preserves coproducts and tensors up to equality.

Interpreting QPL within $\CC$ and $\VV$ is not an ad hoc trick. In physical terms, this corresponds to
adopting the \emph{Heisenberg picture} of quantum mechanics and this is usually
done when working with infinite-dimensional W*-algebras (like we do). Semantically, this is necessary, because (1)
our type system has conditional branching and we need to interpret QPL terms within a category with finite coproducts;
(2) we have to be able to compute parameterised initial algebras to interpret inductive datatypes.
The category $\Wstar$ has finite products, but it does \emph{not} have coproducts, so by interpreting QPL terms within $\CC = (\Wstar)^\op$ we solve problem (1). For (2), the monoidal closure of $\VV = (\WNMIU)^\op$ is crucial, because
it implies the tensor product preserves $\omega$-colimits.
%However, the category $\WNMIU$ is not monoidal closed.

%%%%%%%%%%%%%%%%%%%
\paragraph{Convex sums.}
%%%%%%%%%%%%%%%%%%%

In both $\CC$ and $\Wstar$, morphisms are closed under \emph{convex sums}, which are defined pointwise, as usual.
More specifically, given NCPSU-maps $f_1, \ldots, f_n : A \to B$ and real numbers $p_i \in [0,1]$ with $\sum_i p_i \leq 1,$ then
the map $\sum_i p_i f_i : A \to B$ is also an NCPSU-map.

%%%%%%%%%%%%%%%%%%%
\paragraph{$\cpobs$-enrichment.}
%%%%%%%%%%%%%%%%%%%

For W*-algebras $A$ and $B$, we define a partial order on $\CC(A,B)$ by :
$f \leq g$ iff $g-f$ is a completely positive map.
Equipped with this order, our category $\CC$ is $\cpobs$-enriched~\cite[Theorem~4.3]{cho-qpl-ng}.
The least element in $\CC(A,B)$ is also a zero morphism and is given by the map $\ZERO : A \to B$, defined by $\ZERO(x) = 0.$
Also, the coproduct structure and the symmetric monoidal structure are both $\cpobs$-enriched \cite[Corollary~4.9.15]{cho-msc}
\cite[Theorem~4.5]{cho-qpl-ng}.

%Some morphisms have to be defined in $\mathbf{W}^*_{\text{NCPSU}}$ and then
%their opposite taken. Whenever we describe morphisms of
%$\mathbf{W}^*_{\text{NCPSU}}$, we always write the domain and codomain as their
%matrix algebra presentation ($\mathbb C,\ \mathbb C \oplus \mathbb C,\ \mathbb C^{2\times 2})$
%and when we refer to their opposite morphism in $\CC$ we instead use the type-theoretic notation $(I, \textbf{bit}, \qbit)$.

% The next two morphisms are $\Wstar$-morphisms which correspond to preparing
% a new qubit in state $\ket 0$ and performing a quantum measurement in the
% computational basis:
% \begin{align*}
% \emph{newqbit} &: \mathbb C^{2 \times 2} \to \mathbb{C}  &\emph{meas} : \mathbb C \oplus \mathbb C \to \mathbb{C}^{2 \times 2}\\
% \emph{newqbit} &\left(
% \begin{bmatrix}
% a & b \\
% c & d
% \end{bmatrix}
% \right) = a
% &\emph{meas} (a,b) =
%   \begin{bmatrix}
%    a & 0 \\
%    0 & b
%   \end{bmatrix}
% \end{align*}
% In the denotational semantics we have to use their opposite counterparts (which are morphisms in $\CC$), i.e.
% $\emph{newqbit}^{\op} : I \to \qbit$ and $\emph{meas}^{\op} : \qbit \to \textbf{bit}$ respectively.

%%%%%%%%%%%%%%%%%%%
\paragraph{Quantum Operations.}
% \label{subsub:daggers}
%%%%%%%%%%%%%%%%%%%

For convenience, our operational semantics adopts the \emph{Schr{\"o}dinger
picture} of quantum mechanics, which is the picture most experts in quantum
computing are familiar with. However, as we have just explained, our
denotational semantics has to adopt the Heisenberg picture. The two pictures
are equivalent in finite dimensions and we will now show how to translate from
one to the other. By doing so, we provide an explicit description (in both
pictures) of the required quantum maps that we need to interpret QPL.

\begin{figure}[t]
\centerline{
$
\begin{array}{l|l|l|l}
    \mathrm{tr} : \mat n(\mathbb C) \to \mathbb C \ & \ \mathrm{new}_\rho : \mathbb C \to \mat{2^n}(\mathbb C) \ & \ \mathrm{meas} : \mat 2(\mathbb C) \to \mathbb C \oplus \mathbb C  \ & \ \mathrm{unitary}_S : \mat{2^n}(\mathbb C) \to \mat{2^n}(\mathbb C) \\
    \mathrm{tr} :: A \mapsto \sum_i A_{i,i} \ &
\ \mathrm{new}_\rho :: a \mapsto a \rho \ & \ \mathrm{meas} :: \begin{pmatrix} a & b \\ c & d \end{pmatrix} \mapsto \begin{pmatrix} a & d \end{pmatrix} \ & \ \mathrm{unitary}_S :: A \mapsto SAS^\dagger \\
    \mathrm{tr}^\dagger : \mathbb C \to \mat n(\mathbb C) \ &
\ \mathrm {new}_\rho^\dagger : \mat{2^n}(\mathbb C) \to \mathbb C \ & \ \mathrm{meas}^\dagger : \mathbb C \oplus \mathbb C  \to \mat 2(\mathbb C) \ & \ \mathrm{unitary}^\dagger_S : \mat{2^n}(\mathbb C) \to \mat{2^n}(\mathbb C) \\
    \mathrm{tr}^\dagger :: a \mapsto a I_n \ &
\ \mathrm{new}_\rho^\dagger :: A \mapsto \trace{A \rho} \ & \ \mathrm{meas}^\dagger :: \begin{pmatrix} a & d \end{pmatrix} \mapsto \begin{pmatrix} a & 0 \\ 0 & d \end{pmatrix} & \ \mathrm{unitary}_S^\dagger :: A \mapsto S^\dagger AS
\end{array}
$
}
\caption{A selection of maps in the Schr{\"o}dinger picture $(f: A \to B)$ and their Hermitian adjoints $(f^\dagger : B \to A)$.}
\label{fig:quantum-pictures}
\end{figure}

Consider the maps in Figure~\ref{fig:quantum-pictures}. The map \emph{tr} is
used to trace out (or discard) parts of quantum states. Density matrices $\rho$
are in 1-1 correspondence with the maps \emph{new}$_\rho$, which we use in our
semantics to describe (mixed) quantum states. The \emph{meas} map simply
measures a qubit in the computational basis and returns a bit as measurement
outcome. The \emph{unitary}$_S$ map is used for application of a unitary $S$. These
maps work as described in the Schr{\"o}dinger picture of quantum mechanics, i.e.,
the category $\Wstar$. For every map $f: A \to B$ among those mentioned,
$f^\dagger : B \to A$ indicates its Hermitian adjoint
\footnote{This adjoint exists, because $A$ and $B$ are \emph{finite-dimensional} W*-algebras which therefore have the structure of a Hilbert space when equipped with the Hilbert-Schmidt inner product~\cite[pp. 145]{bas-thesis}.}.
In the Heisenberg picture,
composition of maps is done in the opposite way, so we simply write $f^\ddagger
\coloneqq (f^\dagger)^\op \in \CC(A,B)$ for the Hermitian adjoint of $f$ when
seen as a morphism in $(\Wstar)^\op = \CC.$ Thus, the mapping $(-)^\ddagger$
translates the above operations from the Schr{\"o}dinger picture (the category $\Wstar$) to the
Heisenberg picture (the category $\CC$) of quantum mechanics.

%%%%%%%%%%%%%%%%%%%%%%%%%%%%%%%%%%%%%%%
\paragraph{Parameterised initial algebras.}
%%%%%%%%%%%%%%%%%%%%%%%%%%%%%%%%%%%%%%%
In order to interpret inductive datatypes, we need to be able to compute parameterised initial
algebras for the functors induced by our type expressions. $\VV$ is ideal for this, because
it is cocomplete and monoidal closed and so all type expressions induce functors on $\VV$ which preserve $\omega$-colimits.

\begin{definition}[cf. {\cite[\S 6.1]{fiore-thesis}}]
\label{def:param-initial}
  Given a category $\AAA$ and a functor $T : \AAA^n \to \AAA,$ with $n \geq 1$, a \emph{parameterised initial algebra}
  for $T$ is a pair $(T^\sharp, \phi^T),$ such that:
  \begin{itemize}
    \item $T^\sharp : \AAA^{n-1} \to \AAA$ is a functor;
    \item $\phi^T : T \circ \langle \Id, T^\sharp \rangle \naturalto T^\sharp : \AAA^{n-1} \to \AAA$ is a natural isomorphism;
    \item For every $A \in \Ob(\AAA^{n-1})$, the pair $(T^\sharp A, \phi^T_{A})$ is an initial $T(A, -)$-algebra.
  \end{itemize}
\end{definition}

\begin{proposition}
\label{prop:initial-algebra-exists}
Every $\omega$-cocontinuous functor $T : \VV^n \to \VV$ has a parameterised initial algebra $(T^\sharp, \phi^T)$. Moreover, $T^\sharp : \VV^{n-1} \to \VV$ is itself $\omega$-cocontinuous.
\end{proposition}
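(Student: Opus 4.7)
The plan is to invoke the standard parametrised-initial-algebra construction due to Adámek, adapted as in Fiore's thesis, since $\VV$ is cocomplete (hence has an initial object $0$) and all the hypotheses of the construction are in place.

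First, I would fix a parameter $A \in \Ob(\VV^{n-1})$ and apply Adámek's theorem to the endofunctor $T(A, -) : \VV \to \VV$. Since $T$ is $\omega$-cocontinuous, so is $T(A,-)$ (the functor $T$ is $\omega$-cocontinuous as a functor out of $\VV^n$, hence $\omega$-cocontinuous in each argument separately). Form the initial chain
\[
0 \xrightarrow{!} T(A,0) \xrightarrow{T(A,!)} T(A,T(A,0)) \to \cdots
\]
and let $T^\sharp(A)$ be its colimit in $\VV$, with colimiting cocone $(\iota_k : T(A,-)^k(0) \to T^\sharp(A))_{k \in \mathbb N}$. Because $T(A,-)$ preserves this $\omega$-colimit, the canonical comparison morphism
\[
\phi^T_A : T(A, T^\sharp(A)) \longrightarrow T^\sharp(A)
\]
is an isomorphism, and $(T^\sharp(A), \phi^T_A)$ is initial among $T(A,-)$-algebras.

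Next I would extend $T^\sharp$ to a functor on $\VV^{n-1}$. Given $f : A \to A'$, note that $\phi^T_{A'} \circ T(f, \mathrm{id})$ endows $T^\sharp(A')$ with the structure of a $T(A,-)$-algebra, so initiality of $T^\sharp(A)$ yields a unique algebra morphism $T^\sharp(f) : T^\sharp(A) \to T^\sharp(A')$. Functoriality (preservation of identities and composition) is forced by uniqueness, and naturality of $\phi^T$ in $A$ follows by observing that both legs of the naturality square are $T(A,-)$-algebra morphisms out of the initial algebra. This handles the first two bullets of Definition~\ref{def:param-initial}.

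The main obstacle is the $\omega$-cocontinuity of $T^\sharp$. I would argue by a colimit-interchange argument. Given an $\omega$-chain $(A_i)_{i \in \mathbb N}$ in $\VV^{n-1}$ with colimit $A_\infty$, unfolding the construction gives
\[
T^\sharp(A_\infty) \;=\; \mathrm{colim}_k\, T(A_\infty,-)^k(0), \qquad \mathrm{colim}_i T^\sharp(A_i) \;=\; \mathrm{colim}_i\, \mathrm{colim}_k\, T(A_i,-)^k(0).
\]
One shows by induction on $k$ that each $T(-,-)^k(0) : \VV^{n-1} \to \VV$ is $\omega$-cocontinuous: the base case is trivial, and the inductive step uses joint $\omega$-cocontinuity of $T$ applied to the combined chain $(A_i, T(A_i,-)^k(0))_i$. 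Then by the Fubini theorem for colimits in the cocomplete category $\VV$, the two $\omega$-colimits commute, and one checks that the resulting isomorphism is the canonical comparison $\mathrm{colim}_i T^\sharp(A_i) \xrightarrow{\cong} T^\sharp(A_\infty)$. This is the step that genuinely requires $\VV$ to be cocomplete and $T$ to be $\omega$-cocontinuous in the joint sense, and it is essentially the argument appearing in~\cite[\S 6.1]{fiore-thesis}, so I would cite that source for the diagonalisation detail rather than reprove it.
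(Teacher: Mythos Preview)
Your argument is correct: this is exactly the standard Ad\'amek/Fiore construction, and the colimit-interchange reasoning for the $\omega$-cocontinuity of $T^\sharp$ is the right one. The paper does not spell any of this out; its entire proof is a one-line appeal to cocompleteness of $\VV$ together with a citation to~\cite[\S 4.3]{lnl-fpc-lmcs} (and~\cite[\S 4.3]{lnl-fpc}), which package precisely the argument you sketched. So you are not taking a different route---you are unpacking the cited result, and citing~\cite{fiore-thesis} for the diagonalisation is consistent with the paper's own reference for Definition~\ref{def:param-initial}.
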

\begin{proof}
$\VV$ is cocomplete, so this follows from \cite[\S 4.3]{lnl-fpc-lmcs} (and also \cite[\S 4.3]{lnl-fpc}).
\qed\end{proof}

%%%%%%%%%%%%%%%%%%%%%%%%%%%%%%%%%%%%%%%%%%%%%%%%%%%%%%%%%%%%%%%%%%%%%%%%%%%%%%
\section{Denotational Semantics of QPL}\label{sec:den}
%%%%%%%%%%%%%%%%%%%%%%%%%%%%%%%%%%%%%%%%%%%%%%%%%%%%%%%%%%%%%%%%%%%%%%%%%%%%%%
In this section we describe the denotational semantics of QPL.
% We begin by
% describing the interpretation of types (\secref{sub:types}). To model copying,
% we construct appropriate comonoids on the interpretation of
% classical types (\secref{sub:comonoids}). Our type system is affine, so we
% describe the causal structure of all types by constructing appropriate
% discarding maps (\secref{sub:causal}). Folding and unfolding of inductive types
% is shown to respect the substructural operations of copying and discarding
% (\secref{sub:folding}) which then allows us to define the interpretation of
% terms (\secref{sub:terms}) and configurations (\secref{sub:configurations}) in
% a sound way (\secref{sub:soundness}).

%%%%%%%%%%%%%%%%%%%%%%%%%%%%%%%%%%%%%%%
\subsection{Interpretation of types}\label{sub:types}
%%%%%%%%%%%%%%%%%%%%%%%%%%%%%%%%%%%%%%%
The interpretation of a type $\Theta \vdash A$ is a functor 
$\lrb{\Theta \vdash A} : \VV^{|\Theta|} \to \VV$, defined by induction on the derivation of $\Theta \vdash A$ in Figure~\ref{fig:type-interpretation}.
As usual, one has to prove this assignment is well-defined by showing the required initial algebras exist.
\begin{proposition}
The assignment in Figure~\ref{fig:type-interpretation} is well-defined.
\end{proposition}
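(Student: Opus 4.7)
The plan is to prove the proposition by induction on the derivation of $\Theta \vdash A$, but strengthening the statement so that the inductive hypothesis is strong enough to handle the $\mu X.A$ case. Specifically, I would prove simultaneously that (i) the assignment $\lrb{\Theta \vdash A} : \VV^{|\Theta|} \to \VV$ is a well-defined functor, and (ii) this functor is $\omega$-cocontinuous. The cocontinuity is the genuine invariant we must carry through the induction, since it is exactly the hypothesis needed by Proposition~\ref{prop:initial-algebra-exists} to justify the $\mu X.A$ clause.

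For the base cases, the projection functor $\VV^{|\Theta|} \to \VV$ associated with a type variable $X \in \Theta$ is trivially $\omega$-cocontinuous, and the constant functors corresponding to $I = \mathbb{C}$ and $\qbit = M_2(\mathbb{C})$ are $\omega$-cocontinuous because $\omega$-colimit diagrams in $\VV^{|\Theta|}$ are nontrivial and constant functors preserve any nonempty colimit (for $|\Theta|=0$ this is vacuous; for positive arity, constancy over the diagram suffices). For the coproduct clause $A+B$, we use that the coproduct bifunctor $(- + -) : \VV \times \VV \to \VV$ preserves $\omega$-colimits in each variable because finite coproducts commute with all (in particular $\omega$-) colimits, so by composition $\lrb{\Theta \vdash A+B}$ is $\omega$-cocontinuous. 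For the tensor clause $A \otimes B$, the key input is that $\VV$ is symmetric monoidal closed, so the tensor product $(- \otimes -) : \VV \times \VV \to \VV$ preserves colimits in each argument separately, hence $\omega$-colimits; composing with the inductively cocontinuous $\lrb{\Theta \vdash A}$ and $\lrb{\Theta \vdash B}$ yields an $\omega$-cocontinuous functor.

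For the inductive type $\mu X.A$, by the induction hypothesis applied to $\Theta, X \vdash A$, the functor $T \coloneqq \lrb{\Theta, X \vdash A} : \VV^{|\Theta|+1} \to \VV$ is $\omega$-cocontinuous. Proposition~\ref{prop:initial-algebra-exists} then provides a parameterised initial algebra $(T^\sharp, \phi^T)$ in which $T^\sharp : \VV^{|\Theta|} \to \VV$ is itself $\omega$-cocontinuous; we set $\lrb{\Theta \vdash \mu X.A} \coloneqq T^\sharp$, which simultaneously establishes well-definedness and preserves the cocontinuity invariant for subsequent inductive steps (in particular for nested $\mu$'s).

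The main obstacle I foresee is the tensor case: one must be careful that $\otimes$ on $\VV = (\WNMIU)^\op$ preserves $\omega$-colimits \emph{in each variable}, which is not automatic from the symmetric monoidal structure alone but does follow from monoidal closure (each $A \otimes (-)$ has a right adjoint and hence preserves all colimits). This is precisely the reason the authors work with $\VV$ rather than $\CC$ for interpreting types, and it is the pivotal fact that makes the induction go through to the $\mu X.A$ case without obstruction. Verifying that the composition of $\omega$-cocontinuous functors in several variables is $\omega$-cocontinuous is otherwise routine.
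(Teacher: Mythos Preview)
Your proposal is correct and follows essentially the same approach as the paper: the paper's proof is a one-line ``By induction, every $\lrb{\Theta \vdash A}$ is an $\omega$-cocontinuous functor and thus it has a parameterised initial algebra by Proposition~\ref{prop:initial-algebra-exists},'' and you have simply spelled out the cases of that induction in detail, correctly identifying monoidal closure of $\VV$ as the reason the tensor case goes through. One minor quibble: constant functors preserve \emph{connected} colimits rather than merely nonempty ones, but $\omega$-chains are connected so this does not affect the argument.
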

\begin{proof}
By induction, every $\lrb{\Theta \vdash A}$ is an $\omega$-cocontinuous functor and thus it has a parameterised initial algebra by Proposition~\ref{prop:initial-algebra-exists}.
\qed\end{proof}

\begin{lemma}[Type Substitution]\label{lem:type-sub-simple}
Given types $\Theta, X \vdash A$ and $\Theta \vdash B$, then:
\[ \lrb{\Theta \vdash A[B/X]} = \lrb{\Theta, X \vdash A} \circ \langle \Id, \lrb{\Theta \vdash B}\rangle . \]
\end{lemma}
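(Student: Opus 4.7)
The plan is to proceed by induction on the derivation of $\Theta, X \vdash A$, following the cases of the grammar for well-formed types given in \secref{sec:syntax}. Throughout, write $F_B \coloneqq \langle \Id, \lrb{\Theta \vdash B}\rangle : \VV^{|\Theta|} \to \VV^{|\Theta|+1}$ for the tupling functor that appears on the right-hand side, so that we must show $\lrb{\Theta \vdash A[B/X]} = \lrb{\Theta, X \vdash A} \circ F_B$.

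For the variable cases, if $A = \Theta_i$ for some $i$, then $A[B/X] = \Theta_i$ and both sides are the $i$-th projection out of $\VV^{|\Theta|}$; if $A = X$, then $A[B/X] = B$ and the right-hand side is the last projection of $F_B$, which is $\lrb{\Theta \vdash B}$ by construction. The cases $A = I$ and $A = \qbit$ are immediate because both sides are the corresponding constant functors. For $A = A_1 \star A_2$ with $\star \in \{+, \otimes\}$, substitution commutes with the connective, so $A[B/X] = A_1[B/X] \star A_2[B/X]$. The induction hypothesis on $A_1$ and $A_2$, together with the fact that the interpretations of $+$ and $\otimes$ are defined by postcomposing with the (functorial) coproduct and tensor bifunctors on $\VV$, yields the equality.

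The genuinely interesting case is $A = \mu Y. A'$ with $\Theta, X, Y \vdash A'$, where $A[B/X] = \mu Y. (A'[B/X])$ (assuming, by standard $\alpha$-renaming, that $Y \notin \Theta$ and $Y \neq X$). Let $T \coloneqq \lrb{\Theta, X, Y \vdash A'} : \VV^{|\Theta|+2} \to \VV$ and $S \coloneqq \lrb{\Theta, Y \vdash A'[B/X]} : \VV^{|\Theta|+1} \to \VV$. By the induction hypothesis applied to $A'$ (with the substitution affecting only the $X$-coordinate), we obtain $S = T \circ \langle \pi_1, \ldots, \pi_{|\Theta|}, \lrb{\Theta \vdash B} \circ \pi_\Theta, \pi_{|\Theta|+1}\rangle$, i.e., for each $\vec{\theta} \in \Ob(\VV^{|\Theta|})$ we have $S(\vec\theta, -) = T(\vec\theta, \lrb{\Theta \vdash B}(\vec\theta), -)$ as endofunctors on $\VV$. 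By Proposition~\ref{prop:initial-algebra-exists}, the interpretation of a $\mu$-type is given by the parameterised initial algebra, so $\lrb{\Theta \vdash \mu Y. A'[B/X]}(\vec\theta) = S^\sharp(\vec\theta)$ is an initial algebra of $S(\vec\theta, -)$, while $(\lrb{\Theta, X \vdash \mu Y. A'} \circ F_B)(\vec\theta) = T^\sharp(\vec\theta, \lrb{\Theta \vdash B}(\vec\theta))$ is an initial algebra of $T(\vec\theta, \lrb{\Theta \vdash B}(\vec\theta), -)$, which is the \emph{same} endofunctor. The equality on objects and morphisms then follows from the canonical choice of parameterised initial algebra fixed once and for all in Proposition~\ref{prop:initial-algebra-exists}, which depends only on the underlying endofunctor; applying this uniform choice on both sides yields the required equality of functors rather than a mere natural isomorphism.

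The main obstacle is precisely this last point: ensuring equality (not just isomorphism) in the $\mu$-case. This is a standard but delicate matter of bookkeeping about the construction of parameterised initial algebras; the argument relies on the fact that the construction used in the proof of Proposition~\ref{prop:initial-algebra-exists} (via $\omega$-colimits of initial chains) depends functorially on the input functor alone, so that identical endofunctors yield identical chosen parameterised initial algebras. With this convention fixed, the inductive step goes through verbatim; without it, one would only obtain a canonical natural isomorphism, which would suffice for all downstream uses but would require an extra coherence argument elsewhere.
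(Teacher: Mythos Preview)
Your proposal is correct and follows the same approach as the paper, which simply records the proof as ``Straightforward induction.'' Your detailed treatment of the $\mu$-case---in particular the observation that equality (rather than mere isomorphism) of the two parameterised initial algebras relies on the canonical colimit-of-chains construction fixed in Proposition~\ref{prop:initial-algebra-exists}---makes explicit a genuine bookkeeping point that the paper leaves implicit.
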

\begin{proof}
Straightforward induction.
\qed\end{proof}
For simplicity, the interpretation of terms is only defined on closed types and so we introduce more concise notation for them.
For any closed type $\cdot \vdash A$ we write for convenience 
$\lrb{A} \coloneqq \lrb{\cdot \vdash A}(*) \in \Ob(\VV), $ 
where $*$ is the unique object of the terminal category $\textbf{1}$.
Notice also that $\lrb A \in \Ob(\CC) = \Ob(\VV).$ 
\begin{definition}\label{def:fold-unfold-simple}
Given a closed type $\cdot \vdash \mu X. A$, we define an isomorphism (in $\VV$):
\begin{align*}
\sfold_{\mu X.A} &: \lrb{A[\mu X. A/ X]}  = \lrb{X \vdash A} \lrb{\mu X. A}   \cong \lrb{\mu X. A}  : \sunfold_{\mu X.A}
\end{align*}
where the equality is Lemma~\ref{lem:type-sub-simple} and the iso is the initial algebra structure.
\end{definition}

\begin{figure}[p]
\begin{minipage}{0.4\textwidth}
%{\small{
   \begin{align*}
    \lrb{\Theta \vdash A}           &: \VV^{|\Theta|} \to \VV \\
    \lrb{\Theta \vdash \Theta_i}    &= \Pi_i\\
    \lrb{\Theta \vdash I}           &= K_{I} \\
    \lrb{\Theta \vdash \qbit}       &= K_{\qbit} \\
    \lrb{\Theta \vdash A + B}       &= + \circ \langle \lrb{\Theta \vdash A} , \lrb{\Theta \vdash B} \rangle \\
    \lrb{\Theta \vdash A \otimes B} &= \otimes \circ \langle \lrb{\Theta \vdash A} , \lrb{\Theta \vdash B} \rangle \\
    \lrb{\Theta \vdash \mu X. A}    &= \lrb{\Theta, X \vdash A}^\sharp
    \end{align*}
%}}
\caption{Interpretations of types. $K_A$ is the constant-$A$-functor. }
\label{fig:type-interpretation}
\end{minipage}
\hfill
\begin{minipage}{0.4\textwidth}
%{\small {
\begin{align*}
  \lrb{\cdot \vdash * : I}                    &\coloneqq \id_I\\
  \lrb{ \{n\} \vdash n : \qbit}                    &\coloneqq \id_\qbit\\
  \lrb{Q \vdash \lleft_{A,B} v: A+B}          &\coloneqq \mathrm{left} \circ \lrb v\\
  \lrb{Q \vdash \rright_{A,B} v: A+B}         &\coloneqq \mathrm{right} \circ \lrb v\\
  \lrb{Q_1, Q_2 \vdash (v, w) : A \otimes B}  &\coloneqq \lrb v \otimes \lrb w\\
  \lrb{Q \vdash \fold_{\mu X.A} v : \mu X. A} &\coloneqq \mathrm{fold} \circ \lrb v
\end{align*}
\caption{Interpretation of values.}
\label{fig:value-interpretation}
%}}
\end{minipage}
\end{figure}
\begin{figure}[p]
$
\small{
\begin{array}{l}
% (unit)
\lrb{\Pi \vdash \langle \Gamma \rangle\  \newunit\ u\ \langle \Gamma, u:I \rangle} \coloneqq \pi \mapsto \left(
      \lrb{\Gamma}
        \xrightarrow{\cong}
      \lrb{\Gamma} \otimes I
      \right)\\
% (discard)
\lrb{\Pi \vdash \langle \Gamma, x : A \rangle\  \textbf{discard}\ x\ \langle \Gamma \rangle} \coloneqq \pi \mapsto \left(
      \lrb{\Gamma} \otimes \lrb A
        \xrightarrow{\id \otimes \diamond}
      \lrb{\Gamma} \otimes I
        \xrightarrow{\cong}
      \lrb{\Gamma}
      \right)\\
% (copy)
\lrb{\Pi \vdash \langle \Gamma, x : P \rangle\ y = \textbf{copy}\ x\ \langle \Gamma, x : P, y : P \rangle} \coloneqq \pi \mapsto \left(
      \lrb{\Gamma} \otimes \lrb P
        \xrightarrow{\id \otimes \triangle}
      \lrb{\Gamma} \otimes \lrb P \otimes \lrb P
      \right)\\
% (qbit)
\lrb{\Pi \vdash \langle \Gamma \rangle\  \newqbit\ q\ \langle \Gamma, q:\qbit \rangle} \coloneqq \pi \mapsto \left(
      \lrb{\Gamma}
        \xrightarrow{\cong}
      \lrb{\Gamma} \otimes I
        \xrightarrow{\id \otimes \mathrm{new}_{\ket 0 \bra 0}^\ddagger}
      \lrb{\Gamma} \otimes \qbit 
      \right)\\
% (measure)
\lrb{\Pi \vdash \langle \Gamma, q: \qbit \rangle\ b = \textbf{measure}\ q\ \langle \Gamma, b: \textbf{bit} \rangle} \coloneqq \pi \mapsto \left(
      \lrb{\Gamma} \otimes \qbit
        \xrightarrow{\id \otimes \mathrm{meas}^\ddagger}
      \lrb{\Gamma} \otimes \textbf{bit}\right)\\
% (unitary)
\lrb{\Pi \vdash \langle \Gamma, \vec q : \vec \qbit \rangle\ 
    \vec q \unitaryeqq S\ \langle \Gamma, \vec q : \vec \qbit \rangle} \coloneqq \pi \mapsto \left(
      \lrb{\Gamma} \otimes \qbit^{\otimes n} 
        \xrightarrow{\id \otimes \mathrm{unitary}_S^\ddagger}
      \lrb{\Gamma} \otimes \qbit^{\otimes n} 
      \right)\\
% (seq)
\lrb{\Pi \vdash \langle \Gamma \rangle\ M;N\ \langle \Sigma \rangle} \coloneqq \pi \mapsto \left(
      \lrb{\Gamma}
        \xrightarrow{\lrb{M}(\pi)}
      \lrb{\Gamma'}
        \xrightarrow{\lrb N(\pi)}
      \lrb{\Sigma}
      \right)\\
% (skip)
\lrb{\Pi \vdash \langle \Gamma \rangle\ \textbf{skip}\ \langle \Gamma \rangle} \coloneqq \pi \mapsto \left(
      \lrb{\Gamma}
        \xrightarrow{\id}
      \lrb{\Gamma}
      \right)\\
% (while)
\lrb{\Pi \vdash \langle \Gamma, b: \textbf{bit} \rangle\ \textbf{while}\ b\ \textbf{do}\ {M}\ \langle \Gamma, b: \textbf{bit} \rangle} \coloneqq \pi \mapsto \left(
      \lrb{\Gamma} \otimes \textbf{bit} 
        \xrightarrow{\mathrm{lfp}(W_{\lrb M(\pi)})}
      \lrb{\Gamma} \otimes \textbf{bit} \right)\\
% (left)
\lrb{\Pi \vdash \langle \Gamma, x:A \rangle\ y = \lleft_{A,B}\ x\ \langle \Gamma, y: A+B \rangle} \coloneqq \pi \mapsto \left(
      \lrb{\Gamma} \otimes \lrb A
        \xrightarrow{ \id \otimes \mathrm{left}_{A,B}}
      \lrb \Gamma \otimes (\lrb A + \lrb B) 
\right)\\
% (right)
\lrb{\Pi \vdash \langle \Gamma, x:B \rangle\ y = \rright_{A,B}\ x\ \langle \Gamma, y: A+B \rangle} \coloneqq \pi \mapsto \left(
      \lrb{\Gamma} \otimes \lrb B
        \xrightarrow{ \id \otimes \mathrm{right}_{A,B}}
      \lrb \Gamma \otimes (\lrb A + \lrb B)
\right)\\
% (case)
\llbracket \Pi \vdash \langle \Gamma, y: A+B \rangle\  \ccase\ y\ \textbf{of}\ \{\lleft\ x_1 \to M_1\ |\ \rright\ x_2 \to M_2\}\ \langle \Sigma \rangle \rrbracket \coloneqq \\
\qquad \qquad \pi \mapsto \left( \lrb \Gamma \otimes (\lrb{A} + \lrb B) \xrightarrow{d} (\lrb \Gamma \otimes \lrb A) + (\lrb \Gamma \otimes \lrb B) \xrightarrow{\left[\lrb {M_1}(\pi), \lrb {M_2}(\pi) \right]} \lrb \Sigma \right)\\
% (pair)
\lrb{\Pi \vdash \langle \Gamma, x_1: A, x_2: B \rangle\ x=(x_1, x_2) \ \langle \Gamma, x: A \otimes B \rangle} \coloneqq \pi \mapsto \left(
      \lrb{\Gamma} \otimes \lrb A \otimes \lrb B
        \xrightarrow{\id}
      \lrb{\Gamma} \otimes \lrb A \otimes \lrb B \right)\\
% (unpair)
\lrb{\Pi \vdash \langle \Gamma, x: A \otimes B \rangle\ (x_1, x_2) = x \ \langle \Gamma, x_1: A, x_2: B \rangle} \coloneqq \pi \mapsto \left(
      \lrb{\Gamma} \otimes \lrb A \otimes \lrb B
        \xrightarrow{\id}
      \lrb{\Gamma} \otimes \lrb A \otimes \lrb B \right)\\
% (fold)
\lrb{\Pi \vdash \langle \Gamma, x: A[\mu X. A / X] \rangle\ y = \textbf{fold}\ x\ \langle \Gamma, y: \mu X.A \rangle} \coloneqq \pi \mapsto \left(
      \lrb{\Gamma} \otimes \lrb{A[\mu X. A / X]}
        \xrightarrow{\id \otimes \sfold}
      \lrb \Gamma \otimes \lrb{\mu X.A} \right)\\
% (unfold)
\lrb{\Pi \vdash \langle \Gamma, x: \mu X. A \rangle\ y = \textbf{unfold}\ x\ \langle \Gamma, y: A[\mu X. A / X] \rangle} \coloneqq \pi \mapsto \left(
      \lrb{\Gamma} \otimes \lrb{\mu X. A}
        \xrightarrow{\id \otimes \sunfold}
      \lrb \Gamma \otimes \lrb{A[\mu X. A / X]} \right)\\
% (proc)
\lrb{\Pi \vdash \langle \Gamma \rangle\ \textbf{proc}\ f ::\ x:A \to y:B\ \{M\}\ \langle \Gamma \rangle} \coloneqq \pi \mapsto\ \left( \lrb{\Gamma} \xrightarrow{\id} \lrb{\Gamma} \right)\\
% (call)
\lrb{\Pi, f: A \to B \vdash \langle \Gamma, x: A \rangle\ y = f(x)\ \langle \Gamma, y:B \rangle} \coloneqq (\pi, f) \mapsto \left( \lrb \Gamma \otimes \lrb A \xrightarrow{ \id \otimes f} \lrb \Gamma \otimes \lrb B \right)
\end{array}
}
$
\caption{Interpretation of QPL terms.}
\label{fig:semantics}
\end{figure}

\begin{example}
The interpretation of the types from Example~\ref{ex:syntax} are $\lrb{\mathbf{Nat}} =  \bigoplus_{i=0}^\omega \mathbb C$
and $\lrb{\mathbf{List}(A)} = \bigoplus_{i=0}^\omega {\lrb A}^{\otimes i}$.
Specifically, $\lrb{\mathbf{List}(\mathbf{qbit})} = \bigoplus_{i=0}^\omega {\mathbb C^{2^i \times 2^i}}.$
\end{example}

%%%%%%%%%%%%%%%%%%%%%%%%%%%%%%%%%%%%%%
\subsection{Copying and Discarding}\label{sub:copy-discard}
%%%%%%%%%%%%%%%%%%%%%%%%%%%%%%%%%%%%%%

Our type system is affine, so we have to construct discarding maps at all types. The tensor unit $I$ is a terminal object in $\VV$ (but not in $\CC$) which leads us to the next definition.
\begin{definition}[Discarding map]
\label{def:discarding-map}
For any W*-algebra $A$, let $\diamond_A : A \to I$ be the unique morphism of $\VV$ with the indicated domain and codomain.
\end{definition}
We will see that all values admit an interpretation as $\VV$-morphisms and are therefore discardable. In physical terms, this means values are causal (in the sense mentioned in the introduction). 
Of course, this is not true for the interpretation of general terms (which correspond to $\CC$-morphisms).

Our language is equipped with a copy operation on classical data, so we have to explain how to copy classical values. In the absence of a !-modality, we
do this by constructing a copy map defined at all \emph{classical} types.
\begin{proposition}
\label{prop:comono}
Using the categorical data of \stikz{structural-adjunction.tikz}, one can define a copy map $\triangle_{\lrb P} : \lrb P \to \lrb P \otimes \lrb P$ for every classical type $\cdot \vdash P$, such that
the triple $\left( \lrb P, \triangle_{\lrb P}, \diamond_{\lrb P} \right)$ forms a cocommutative comonoid in $\VV$.
\end{proposition}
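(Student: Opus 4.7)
The plan is to proceed by structural induction on classical types $P$, constructing $\triangle_{\lrb P}$ together with a proof that $(\lrb P, \triangle_{\lrb P}, \diamond_{\lrb P})$ is a cocommutative comonoid in $\VV$. To make the induction go through at $\mu X. P$, I would strengthen the statement to open classical types $\Theta \vdash P$ and show that the functor $\lrb{\Theta \vdash P} : \VV^{|\Theta|} \to \VV$ lifts to an $\omega$-cocontinuous functor on the category of cocommutative comonoids in $\VV$, which I will denote $\mathrm{Comon}(\VV)$.

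The base cases are immediate: the projection $\Pi_i$ selects the $i$-th comonoid from a tuple of comonoids, and the constant functor $K_I$ returns $I$ equipped with the unitor $\lambda_I^{-1} : I \to I \otimes I$ as comultiplication and the identity as counit. For the inductive cases on $+$ and $\otimes$, I would use standard symmetric monoidal constructions: the tensor of two comonoids is again a comonoid via $(\id \otimes \sigma \otimes \id) \circ (\triangle \otimes \triangle)$, while the coproduct of two comonoids admits the copy map $[(\mathrm{left} \otimes \mathrm{left}) \circ \triangle_{\lrb P},\ (\mathrm{right} \otimes \mathrm{right}) \circ \triangle_{\lrb R}]$, where the comonoid laws are verified using distributivity of $+$ over $\otimes$ in $\VV$ and the universal property of the coproduct. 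These two constructions naturally lift the underlying endofunctors to $\mathrm{Comon}(\VV)$ and are clearly $\omega$-cocontinuous.

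The main obstacle is the inductive case $\mu X. P$. Here I would lift the parameterised initial algebra construction of Proposition~\ref{prop:initial-algebra-exists} from $\VV$ to $\mathrm{Comon}(\VV)$. The key observation is that because $\VV$ is symmetric monoidal closed, the tensor product preserves $\omega$-colimits in each argument, and therefore the forgetful functor $\mathrm{Comon}(\VV) \to \VV$ creates $\omega$-colimits: each colimit of comonoids carries a unique comonoid structure compatible with the colimiting cocone, obtained by taking the colimit of the comultiplications and using preservation of $\otimes$. Consequently the inductively built lift of $\lrb{\Theta, X \vdash P}$ is $\omega$-cocontinuous in $\mathrm{Comon}(\VV)$ and admits a parameterised initial algebra whose underlying $\VV$-object is precisely $\lrb{\Theta \vdash \mu X. P}$. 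This endows $\lrb{\mu X. P}$ with a comonoid structure $\triangle_{\lrb{\mu X. P}}$, and compatibility with $\sfold_{\mu X. A}$ follows from the initial-algebra universal property together with Lemma~\ref{lem:type-sub-simple}.

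Cocommutativity, coassociativity and counitality at $\mu X. P$ then come for free because the entire construction takes place inside $\mathrm{Comon}(\VV)$; the nontrivial coherence is absorbed into the creation-of-colimits argument. The counit necessarily coincides with the discarding map $\diamond_{\lrb P}$ of Definition~\ref{def:discarding-map}, since $I$ is terminal in $\VV$, which also shows that the comonoid structure produced by the induction is uniquely determined.
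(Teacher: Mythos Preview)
Your argument is correct, but it is not the route the paper takes, and in fact it does not use the ``categorical data'' the statement points at.

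The paper constructs $\triangle_{\lrb P}$ by transport along the symmetric monoidal adjunction $F \dashv G : \Set \rightleftarrows \VV$. It introduces a parallel \emph{classical} interpretation $\flrb{\Theta \vdash P}: \Set^{|\Theta|} \to \Set$ (using $\times$ and $\amalg$ in place of $\otimes$ and $+$), proves a natural isomorphism $\iota^{\Theta \vdash P}: \lrb{\Theta \vdash P}\circ F^{\times|\Theta|} \cong F \circ \flrb{\Theta \vdash P}$, and then defines $\triangle_{\lrb P}$ as the cartesian diagonal $\langle \id,\id\rangle$ of $\flrb P$ pushed through $F$ and conjugated by $\iota^P$. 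The comonoid axioms are then inherited from the cartesian structure of $\Set$ via the strong monoidality of $F$; there is no separate induction on $P$ for the axioms themselves, and no appeal to $\mathrm{Comon}(\VV)$ or to creation of colimits.

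Your approach, by contrast, is an intrinsic construction inside $\VV$: you lift each type former to $\mathrm{Comon}(\VV)$ and handle $\mu$ via creation of $\omega$-colimits by the forgetful functor. This is sound, and arguably more elementary in that it avoids introducing the auxiliary $\Set$-interpretation. The paper's route, however, buys something you would have to reprove separately: the isomorphism $\iota^P:\lrb P \cong F\flrb P$ is reused to define the notion of \emph{classical morphism} (morphisms of the form $\iota^{-1}\circ Ff'\circ\iota$) and to show that every classical value is such a morphism, hence a comonoid homomorphism. Your construction gives the comonoid structure but not this factorisation, so the subsequent proof that classical values are copyable would need a different (though feasible) argument --- presumably a direct induction showing each $\lrb v$ is a comonoid map. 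Also note that the phrasing of the proposition explicitly invokes the $\Set$--$\VV$ adjunction, which your proof never touches.
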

\begin{proof}
The proof requires some effort and is presented in Appendix~\ref{app:comonoids}.
\qed\end{proof}
We shall later see that the interpretations of our \emph{classical} values are comonoid homomorphisms (w.r.t. Proposition \ref{prop:comono}) and therefore they may be copied. 

%%%%%%%%%%%%%%%%%%%%%%%%%%%%%%%%%%%%%%%
\subsection{Interpretation of terms}\label{sub:terms}
%%%%%%%%%%%%%%%%%%%%%%%%%%%%%%%%%%%%%%%
Given a variable context $\Gamma = x_1:A_1, \ldots, x_n:A_n,$ we interpet it as the object $\lrb{\Gamma} := \lrb{A_1} \otimes \cdots \otimes \lrb{A_n} \in \text{Ob}(\CC).$
The interpretation of a procedure context $\Pi = f_1:A_1 \to B_1, \ldots, f_n:A_n \to B_n$ is defined to be
the pointed dcpo $\lrb{\Pi} := \CC(A_1,B_1) \times \cdots \times \CC(A_n,B_n).$
A term $\Pi \vdash \langle \Gamma \rangle\ M\ \langle \Sigma \rangle$ is interpreted as a Scott-continuous function
$ \lrb{\Pi \vdash \langle \Gamma \rangle\ M\ \langle \Sigma \rangle} : \lrb{\Pi} \to \CC(\lrb \Gamma, \lrb \Sigma) $
defined by induction on the derivation of $\Pi \vdash \langle \Gamma \rangle\ M\ \langle \Sigma \rangle$ in Figure~\ref{fig:semantics}.
For brevity, we often write $\lrb{M} \coloneqq \lrb{\Pi \vdash \langle \Gamma \rangle\ M\ \langle \Sigma \rangle}$, when the contexts are clear or unimportant.

We now explain some of the notation used in Figure~\ref{fig:semantics}.
% The copy map $\triangle_P$ is constructed in~\secref{sub:comonoids} and the discarding map $\diamond_A$ is constructed in~\secref{sub:causal}.
The rules for manipulating qubits use the morphisms $\mathrm{new}_{\ket 0 \bra 0}^\ddagger, \mathrm{meas}^\ddagger$ and $\mathrm{unitary}_S^\ddagger$ which are defined in \secref{sec:model}.
For the interpretation of \textbf{while} loops,
given an arbitrary morphism $f:{A \otimes \bit \to A \otimes \bit}$ of $\CC$, we define a Scott-continuous endofunction
\begin{align*}
W_{f} &: \CC \left(A \otimes \bit, A \otimes \bit \right) \to \CC(A \otimes \bit, A \otimes \bit)\\
W_{f}(g) &= \left[ \id \otimes \mathrm{left}_{I,I},\ g \circ f \circ (\id \otimes \mathrm{right}_{I,I}) \right] \circ d_{A,I,I} ,
\end{align*}
where the isomorphism $d_{A,I,I} : A \otimes (I+I) \to (A \otimes I) + (A \otimes I)$ is explained in~\secref{sec:model}.
% The function $W_f$ is indeed Scott-continuous, because $(- \circ - )$ is a Scott-continuous operation and because $\CC$ has $\cpo$-enriched coproducts.
For any pointed dcpo $D$ and Scott-continuous function $h : D \to D$, its \emph{least fixpoint} is $\mathrm{lfp}(h) \coloneqq \bigvee_{i=0}^\infty h^i(\perp),$ where $\perp$ is the least element of $D$.

\begin{remark}\label{rem:procedures}
The term semantics for defining and calling procedures does not
involve any fixpoint computations. The required fixpoint computations are
done when interpreting procedure stores, as we shall see next.
\end{remark}

%%%%%%%%%%%%%%%%%%%%%%%%%%%%%%%%%%%%%%%
\subsection{Interpretation of configurations}\label{sub:configurations}
%%%%%%%%%%%%%%%%%%%%%%%%%%%%%%%%%%%%%%%
Before we may interpret program configurations, we first have to describe how
to interpret values and procedure stores.

\paragraph{Interpretation of values.}
A qubit pointer context $Q$ is interpreted as the object $\lrb Q = \qbit^{\otimes |Q|}.$
A value $Q \vdash v : A$ is interpreted as a morphism in $\VV$
$\lrb{Q \vdash v: A} : \lrb Q \xrightarrow{} \lrb A,$ which we abbreviate as
$\lrb v$ if $Q$ and $A$ are clear from context. It is defined by induction
on the derivation of $Q \vdash v:A$ in Figure~\ref{fig:value-interpretation}.

For the next theorem, recall that if $Q \vdash v :A$ is a classical value, then $Q = \cdot.$
\begin{theorem}\label{thm:causal-values}
Let $Q \vdash v : A$ be a value. Then:
\begin{enumerate}
\item $\lrb{ v }$ is discardable (i.e. causal). More specifically, $\diamond_{\lrb A} \circ \lrb{v} = \diamond_{\lrb Q} = \mathrm{tr}^\ddagger.$
\item If $A$ is classical, then $\lrb{v}$ is copyable, i.e., $\triangle_{\lrb A} \circ \lrb{v} = (\lrb v \otimes \lrb v) \circ \triangle_{I}.$
\end{enumerate}

\end{theorem}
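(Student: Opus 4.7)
The plan is to prove the two claims by quite different means: (1) follows from a universal property argument about $\VV$, while (2) requires structural induction on values.

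For (1), I would first observe that by inspecting Figure~\ref{fig:value-interpretation}, every building block of $\lrb v$ (namely $\id_I$, $\id_\qbit$, $\mathrm{left}$, $\mathrm{right}$, $(-)\otimes(-)$, and $\mathrm{fold}$) lives in $\VV$, so $\lrb v : \lrb Q \to \lrb A$ is a $\VV$-morphism. Since $I$ is terminal in $\VV$ by Definition~\ref{def:discarding-map}, both $\diamond_{\lrb A} \circ \lrb v$ and $\diamond_{\lrb Q}$ are the unique $\VV$-arrow $\lrb Q \to I$, and hence coincide. To identify $\diamond_{\qbit^{\otimes n}}$ with $\mathrm{tr}^\ddagger$, I would note that a $\VV$-arrow $\qbit^{\otimes n} \to I$ corresponds by $\op$ to an NMIU-map $\mathbb C \to M_{2^n}(\mathbb C)$, and the only such map is the unit inclusion $a \mapsto a I_{2^n}$, which from Figure~\ref{fig:quantum-pictures} is precisely $\mathrm{tr}^\dagger$; passing back to $\CC$ gives $\mathrm{tr}^\ddagger$.

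For (2), I would argue by induction on the derivation of the classical value $v$, using Lemma~\ref{lem:classical-value-syntax} to reduce to $Q=\cdot$ so that $\lrb Q = I$. The claim to prove is precisely that $\lrb v : I \to \lrb P$ is a comonoid homomorphism from the (trivial) comonoid on $I$ to the comonoid $(\lrb P, \triangle_{\lrb P}, \diamond_{\lrb P})$ from Proposition~\ref{prop:comono}. Discardability is already covered by (1), so only the comultiplication equation is at issue. The base case $v = *$ is immediate since $\lrb * = \id_I$. For $v = \mathrm{left}_{P,R}\, v'$, $v = \mathrm{right}_{P,R}\, v'$, $v = (v_1,v_2)$, and $v = \mathrm{fold}_{\mu X.P}\, v'$, I would factor the desired diagram as the induction hypothesis on the subvalue(s) composed with the fact that the constructor itself ($\mathrm{left}$, $\mathrm{right}$, the tensor pairing, and $\mathrm{fold}$) is a comonoid homomorphism with respect to the structures built in Proposition~\ref{prop:comono}. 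These facts are essentially forced by the definition of $\triangle$ on classical types: on $P + R$ so that the two injections are homomorphisms, on $P \otimes R$ as the canonical tensor of comonoids (making $(-,-)$ a homomorphism), and on $\mu X. P$ via the initial-algebra machinery so that $\sfold_{\mu X.P}$ is a comonoid homomorphism.

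The main obstacle is the $\mu$-case, since the comonoid structure on inductive classical types is defined via parameterised initial algebras in a non-obvious way — which is exactly why the proof of Proposition~\ref{prop:comono} is deferred to an appendix. Once one extracts from that proof the statement that $\sfold_{\mu X.P}$ is a comonoid homomorphism (equivalently, that $\triangle_{\lrb{\mu X.P}}$ is the unique coalgebra map making the relevant square commute), the inductive step collapses into a short diagram chase combining the IH on $v'$ with naturality of the symmetric monoidal structure. The remaining non-$\mu$ cases are each only a few lines of diagrammatic calculation.
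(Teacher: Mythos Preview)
Your argument for (1) matches the paper's reasoning exactly: values are $\VV$-morphisms and $I$ is terminal in $\VV$, so causality is forced by uniqueness.

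For (2), your approach is correct in outline but takes a genuinely different route from the paper. You propose a direct structural induction on $v$, showing at each step that the relevant constructor ($\mathrm{left}$, $\mathrm{right}$, the tensor pairing, $\sfold$) is itself a comonoid homomorphism for the structure of Proposition~\ref{prop:comono}. The paper instead works \emph{uniformly through $\Set$}: it defines a parallel ``classical'' interpretation $\flrb{v} : 1 \to \flrb P$ in $\Set$ by the obvious recursion, proves once that $\lrb v = (\iota^P)^{-1} \circ F\flrb v \circ \iota^I$, and then invokes Proposition~\ref{prop:comonoid-homomorphism}, which says that \emph{any} morphism of this shape (a ``classical morphism'' in the sense of Definition~\ref{def:comonoid-homomorphism}) is automatically a comonoid homomorphism. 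The payoff of the paper's route is that one never unfolds the comonoid equations for individual constructors: since $\triangle_{\lrb P}$ is \emph{defined} by transporting the cartesian diagonal from $\Set$ along $\iota$ (Definition~\ref{def:copy-map}), compatibility with anything that also factors through $F$ and $\iota$ is a one-line calculation. Your route works too, but be aware that because $\triangle$ is not defined inductively on type formers, the facts you describe as ``essentially forced'' (e.g.\ that $\mathrm{left}$ respects $\triangle$) still require verifying that $\iota^{P+R}$ intertwines $\mathrm{left}$ with $F(\mathrm{inl})$, and similarly for $\otimes$ and $\mu$ --- the $\mu$ case being exactly Proposition~\ref{prop:fold-unfold-relationship}. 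In effect your induction re-derives, constructor by constructor, what the paper packages as the single equation $\lrb v = (\iota^P)^{-1} \circ F\flrb v \circ \iota^I$.
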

\begin{proof}
In Appendix~\ref{app:comonoids}.
\qed\end{proof}

We see that, as promised, interpretations of values may always be discarded and interpretations of classical values may also be copied. Next, we explain how to interpret value contexts.
For a value context $ Q ; \Gamma \vdash V$, its interpretation is the morphism:
\[ \lrb{ Q ; \Gamma \vdash V} = \left( \lrb{Q} \xrightarrow{\cong} \lrb{Q_1} \otimes \cdots \otimes \lrb{Q_n} \xrightarrow{\lrb{v_1} \otimes \cdots \otimes \lrb{v_n}} \lrb \Gamma \right) ,\]
where $Q_i \vdash v_i : A_i$ is the splitting of $Q$ (see~\secref{sec:operational}) and $\lrb \Gamma = \lrb{A_1} \otimes \cdots \otimes \lrb{A_n}.$
Some of the $Q_i$ can be empty and this is the reason why the definition depends on a coherent natural isomorphism.
We write $\lrb{V}$ as a shorthand for $\lrb{ Q ; \Gamma \vdash V}$. Obviously, $\lrb V$ is also causal thanks to Theorem~\ref{thm:causal-values}.

\paragraph{Interpretation of procedure stores.}
The interpretation of a well-formed procedure store $\Pi \vdash \Omega$ is an element of $\lrb{\Pi}$, i.e. a $|\Pi|$-tuple of morphisms from $\CC$. It is defined by induction on $\Pi \vdash \Omega:$
\begin{align*}
\lrb{\cdot \vdash \cdot} &= () & \lrb{\Pi, f : A \to B \vdash \Omega, f :: x:A \to y:B\ \{M\}} &= (\lrb \Omega, \mathrm{lfp}(\lrb{M}(\lrb \Omega, -))) .
\end{align*}

\paragraph{Interpretation of configurations.}
Density matrices $\rho \in M_{2^n}(\mathbb C)$ are in 1-1 correspondence with $\Wstar$-morphisms $\mathrm{new}_\rho : \mathbb C \to M_{2^n}(\mathbb C)$ which are in turn in 1-1 correspondence
with $\CC$-morphisms $\mathrm{new}_\rho^\ddagger : I \to \qbit^{\otimes n}$. 
Using this observation, we can now define the interpretation of a configuration $\mathcal C = (M\ |\ V\ |\ \Omega\ |\ \rho) $ with $\Pi; \Gamma; \Sigma; Q \vdash (M\ |\ V\ |\ \Omega\ |\ \rho)$ to be the morphism
%$ \lrb{\Pi; \Gamma; \Sigma; Q \vdash (M\ |\ V\ |\ \Omega\ |\ \rho)} : I \to \lrb \Sigma $
% defined by
\begin{align*}
&\lrb{\Pi; \Gamma; \Sigma; Q \vdash (M\ |\ V\ |\ \Omega\ |\ \rho)} = \left( I \xrightarrow{\mathrm{new}_\rho^\ddagger} \qbit^{\otimes \dim(\rho)} \xrightarrow{\lrb{  Q ; \Gamma \vdash V}} \lrb \Gamma \xrightarrow{\lrb{\Pi \vdash \langle \Gamma \rangle\ M\ \langle \Sigma \rangle}(\lrb{\Pi \vdash \Omega})} \lrb \Sigma \right) .
\end{align*}
For brevity, we simply write $\lrb{(M\ |\ V\ |\ \Omega\ |\ \rho)}$ or even just $\lrb{\mathcal C}$ to refer to the above morphism.

%%%%%%%%%%%%%%%%%%%%%%%%%%%%%%%%%%%%%%%
\subsection{Soundness, Adequacy and Big-step Invariance}\label{sub:soundness}
%%%%%%%%%%%%%%%%%%%%%%%%%%%%%%%%%%%%%%%

Since our operational semantics allows for branching, \emph{soundness} is showing that the interpretation of configurations is equal to the sum of small-step reducts.

\begin{theorem}[Soundness]\label{thm:soundness}
For any non-terminal configuration $\mathcal C $ :
\[ \lrb{\mathcal C} = \inlinesum_{\mathcal C \leadsto \mathcal D} \lrb{\mathcal D}. \]
\end{theorem}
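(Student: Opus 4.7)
The plan is to proceed by induction on the structure of the term $M$ in the configuration $\mathcal{C} = (M\ |\ V\ |\ \Omega\ |\ \rho)$. Since $\mathcal{C}$ is non-terminal, $M \neq \sskip$, so $M$ matches the head of exactly one reduction rule in Figure~\ref{fig:operational-small-step}. In each case I would unfold both $\lrb{\mathcal{C}}$ and $\lrb{\mathcal{D}}$ via the interpretation of configurations and Figure~\ref{fig:semantics}, then reduce the equation to an identity between $\CC$-morphisms. The global shape of the proof is driven by three structural properties of $\CC$: the $\cpobs$-enrichment (so that pre/post-composition and tensoring distribute over finite sums of parallel morphisms), the coproduct decomposition $\textbf{bit} = I + I$ realized via $d$ (which drives the measurement and case rules), and the fixpoint characterization of the least fixed point used for \textbf{while} and \textbf{proc}.

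For the rules whose right-hand side is a single skip-configuration with updated value assignment and state (unit, discard, copy, qbit, unitary, left, right, pair, unpair, fold, unfold, proc), the verification is essentially a direct calculation. One checks that the definition of $tr_v$ matches composition with $\diamond$ on the appropriate tensor factors, that $r_v$ is just the bookkeeping reflecting the unique splitting of the qubit pointer context $Q$, that $\mathrm{new}_{\rho \otimes \ket{0}\bra{0}}^\ddagger = (\mathrm{new}_\rho^\ddagger \otimes \mathrm{new}_{\ket 0 \bra 0}^\ddagger) \circ \mathrm{id}$, and that Figure~\ref{fig:value-interpretation} makes $\lrb{\text{left}\, v} = \mathrm{left} \circ \lrb v$, etc. The (proc) case is immediate because its interpretation is the identity in the second component, and the procedure store interpretation absorbs the new entry via the $\mathrm{lfp}$ clause.

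The structural cases require a little more care. For (seq1) one uses $\lrb{\sskip} = \mathrm{id}$. For (seq2) one applies the induction hypothesis to the subterm $P$ and then uses that $\lrb{Q}(\lrb \Omega') \circ (-)$ is additive by $\cpobs$-enrichment. The (caseLeft)/(caseRight) rules reduce to the defining property of the cocone map $[f_1,f_2]$ composed with an injection. The (while) case follows from the fixpoint equation $\mathrm{lfp}(W_{\lrb M}) = W_{\lrb M}(\mathrm{lfp}(W_{\lrb M}))$ together with the syntactic sugar for the \textbf{if}-\textbf{then} term, whose interpretation matches $W_{\lrb M}$ applied at the current fixpoint. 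The measurement case (which is the only source of two reducts) is where the sum in the statement is used: one shows that $\mathrm{id} \otimes \mathrm{meas}^\ddagger$ composed with $\mathrm{new}_\rho^\ddagger$ decomposes as the sum of the two compositions corresponding to outcomes $0$ and $1$, which in the Schrödinger picture is the familiar identity $\rho = \bra 0 \rho \ket 0 \oplus \bra 1 \rho \ket 1$ at the relevant qubit index, and transfers to $\CC$ via $(-)^\ddagger$.

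The main obstacle is the (call) rule, because this is the only case where the semantics hides a fixpoint \emph{inside} the interpretation of $\Omega$. Writing $\Omega' = \Omega, f :: x_2:A \to y_2:B\{M\}$ and $g_f \coloneqq \mathrm{lfp}(\lrb M(\lrb \Omega, -))$, the interpretation of the procedure store gives $\lrb{\Omega'} = (\lrb \Omega, g_f)$ and hence $\lrb{y_1 = f(x_1)}(\lrb{\Omega'}) = \mathrm{id}_{\lrb \Gamma} \otimes g_f$. On the other hand, after reduction we must interpret $M_\alpha$ under the same store $\Omega'$; by $\alpha$-equivalence (renamings do not affect the denotation, since the interpretation only depends on the derivation modulo variable names) this equals $\mathrm{id}_{\lrb \Gamma} \otimes \lrb M(\lrb{\Omega'})$. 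The equality of the two sides is then exactly the fixpoint equation $g_f = \lrb M(\lrb \Omega, g_f)$. Stating this cleanly may require a small renaming/weakening lemma asserting that $\lrb{M_\alpha}$, under an enlarged variable context, factors as $\mathrm{id} \otimes \lrb M$, which I would prove by a separate straightforward induction on $M$.
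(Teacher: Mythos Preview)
Your proposal is correct and follows exactly the approach the paper indicates: induction on the shape of the term component of $\mathcal C$. The paper's own proof is a single sentence to that effect, so your write-up is already considerably more detailed than what appears there; in particular your identification of the weakening/renaming lemma needed for the (call) and (seq2) cases is a genuine subtlety the paper leaves implicit.
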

\begin{proof}
By induction on the shape of the term component of $\mathcal C$.
\qed\end{proof}

\begin{remark}
The above sum and all sums that follow are well-defined convex sums of NCPSU-maps where the probability weights $p_i$ have been encoded in the density matrices.
\end{remark}

A natural question to ask is whether $\lrb{\mathcal C}$ is also equal to the (potentially infinite) sum of all terminal configurations that $\mathcal C$ reduces to.
In other words, is the interpretation of configurations also invariant with respect to big-step reduction.
This is indeed the case and proving this requires considerable effort.

\begin{theorem}[Big-step Invariance]\label{thm:big-step-invariance}
For any configuration $\mathcal C$, we have:
\[ \lrb{\mathcal C} = \bigvee_{n=0}^\infty \sum_{r \in \TerSeq_{\leq n}(\mathcal C)} \lrb{\End(r)} \]
\end{theorem}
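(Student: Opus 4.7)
The plan is to bootstrap the small-step soundness theorem and take a limit, relying on least-fixpoint semantics to ensure that the ``mass'' of non-terminating branches is already absent in the interpretation. First, by iterating Theorem~\ref{thm:soundness}, for each $n$ we obtain
\[
\lrb{\mathcal C} \;=\; A_n \,+\, R_n,
\qquad A_n := \sum_{r \in \TerSeq_{\leq n}(\mathcal C)} \lrb{\End(r)},
\]
where $R_n$ is the sum of $\lrb{\mathcal D}$ over all non-terminal configurations $\mathcal D$ reached from $\mathcal C$ by exactly $n$ small steps (indexed by the corresponding length-$n$ reduction sequences). Since convex sums of NCPSU-maps are completely positive, $R_n$ is positive in the $\cpobs$-enrichment of $\CC$, giving $A_n \leq \lrb{\mathcal C}$ for every $n$ and hence $\bigvee_n A_n \leq \lrb{\mathcal C}$.

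The converse inequality is the substantive part and requires an approximation argument. The only fixpoints appearing in the semantics come from (i)~\textbf{while}-loops, interpreted as $\mathrm{lfp}(W_f) = \bigvee_k W_f^k(\ZERO)$, and (ii)~procedure stores, where $\mathrm{lfp}$ is applied to $\lrb M(\lrb \Omega, -)$. I would introduce a depth-indexed family of approximants $\lrb{\mathcal C}^{(k)}$ obtained by simultaneously replacing each such $\mathrm{lfp}$ occurring inside $\lrb{\mathcal C}$ by its $k$-th iterate from $\ZERO$. Scott-continuity of composition, tensor, coproduct and the distributor in the $\cpobs$-enriched category $\CC$ then yields $\lrb{\mathcal C} = \bigvee_k \lrb{\mathcal C}^{(k)}$.

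The heart of the proof is an approximation lemma: for every $k$ there exists $n = n(k)$ with $\lrb{\mathcal C}^{(k)} \leq A_{n(k)}$. Operationally, the $k$-th approximant corresponds to ``unroll each loop and each procedure call at most $k$ times, then abort with $\ZERO$'', a bounded computation that can be simulated by finitely many small steps; any branch that would require a $(k{+}1)$-th unroll is already $\ZERO$ in the approximant and therefore need not be matched in $A_{n(k)}$. I would prove this by induction on the lexicographic pair $(k, \text{size of the term component of } \mathcal C)$, handling the \textbf{while}-case via the defining equation of $W_f$ together with coproduct--tensor distributivity, and handling the \texttt{call}-case via an inner induction over the procedure store. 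Combined with the earlier direction, this yields $\lrb{\mathcal C} = \bigvee_k \lrb{\mathcal C}^{(k)} \leq \bigvee_n A_n \leq \lrb{\mathcal C}$.

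The main obstacle is the bookkeeping in the approximation lemma in the presence of procedure stores: a single \texttt{call} transition substitutes an entire procedure body that may invoke further (possibly mutually) recursive procedures and contain further while-loops, so one must coordinate a uniform depth index $k$ across all these nested fixpoints while still producing a finite bound $n(k)$ on the number of small steps. Beyond this combinatorial subtlety, the categorical ingredients required --- $\cpobs$-enrichment of $\CC$, Scott-continuity of the monoidal and coproduct structure, and preservation properties of least fixpoints --- are all in place from Section~\ref{sec:model}, so the remainder of the argument is mechanical once the approximants are correctly defined.
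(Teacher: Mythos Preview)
Your proposal is essentially the paper's strategy: establish $\bigvee_n A_n \leq \lrb{\mathcal C}$ by iterating soundness, then obtain the reverse inequality via finitary truncations of the two sources of fixpoints (while-loops and recursive procedures) together with a denotational-approximation statement $\lrb{\mathcal C} = \bigvee_k \lrb{\mathcal C}^{(k)}$. The paper's proof in Appendix~\ref{app:big-step} follows exactly this architecture.

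The one notable difference in packaging is that the paper realises your approximants \emph{syntactically}: it extends the language with a term $\ZERO_{\Gamma,\Sigma}$, bounded loops $\textbf{while}^n$, and indexed procedure names $f^n$, and equips these with operational and denotational rules. A finitary approximant is then a genuine configuration $\mathcal C' \sqsubset \mathcal C$ in the extended language, and your key approximation lemma is decomposed as: (i) finitary configurations are strongly normalising, so $\lrb{\mathcal C'} = \lrb{\mathcal C' \Downarrow}$ follows directly from soundness for the extended language; (ii) a one-step lockstep correspondence between reductions of $\mathcal C'$ and of $\mathcal C$ yields $\lrb{\mathcal C' \Downarrow} \leq \lrb{\mathcal C \Downarrow}$; (iii) the denotational approximation $\lrb{\mathcal C} = \bigvee_{\mathcal C' \sqsubset \mathcal C} \lrb{\mathcal C'}$ holds by Scott-continuity. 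This neatly sidesteps the ``bookkeeping'' obstacle you correctly identify --- coordinating a uniform depth index across nested and mutually recursive fixpoints --- because the approximation relation $\blacktriangleleft$ lets each loop and each procedure carry its own index independently, and the lockstep lemma is local (one small step at a time). Your purely denotational formulation with a single global index $k$ should also go through, but the induction on $(k,\text{term size})$ would in effect have to reconstruct this operational correspondence inside the inductive step; the paper's language extension externalises it and reuses soundness wholesale.
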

\begin{proof}
In Appendix~\ref{app:big-step}.
\qed\end{proof}

The above theorem is the main result of our paper. This is a powerful result, because
with big-step invariance in place, computational adequacy at all types is now a simple consequence of the causal properties of our interpretation.
Observe that for any configuration $\mathcal C,$ we have a subunital map $\diamond \circ \lrb{\mathcal C} : \mathbb C \to \mathbb C$
and evaluating it at 1 yields a real number $\left( \diamond \circ \lrb{\mathcal C} \right)(1) \in [0,1]$.

\begin{theorem}[Adequacy]\label{thm:adequacy}
For any normalised $\mathcal C$ :
$ \left( \diamond \circ \lrb{\mathcal C} \right) \left( 1 \right) = \Halt(\mathcal C) . $
\end{theorem}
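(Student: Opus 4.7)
The plan is to reduce the statement to two previously established results: big-step invariance (Theorem~\ref{thm:big-step-invariance}) and the causality of values (Theorem~\ref{thm:causal-values}), together with the observation that the scalar operation $f \mapsto (\diamond \circ f)(1)$ commutes with directed suprema and finite convex sums.

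First, I would apply big-step invariance to rewrite
\[ \lrb{\mathcal{C}} = \bigvee_{n=0}^\infty \sum_{r \in \TerSeq_{\leq n}(\mathcal{C})} \lrb{\End(r)}. \]
Post-composition with $\diamond : \lrb{\Sigma} \to I$ is Scott-continuous (since $\CC$ is $\cpobs$-enriched and the enrichment is compatible with the monoidal/coproduct structure), and evaluation at $1$ is linear and Scott-continuous on $\CC(I,I)$. Both pass through the (finite) convex sums. Hence
\[ (\diamond \circ \lrb{\mathcal{C}})(1) \;=\; \bigvee_{n=0}^\infty \sum_{r \in \TerSeq_{\leq n}(\mathcal{C})} (\diamond \circ \lrb{\End(r)})(1). \]

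Second, I would show that for each terminal configuration $\mathcal{T} = (\sskip\ |\ V\ |\ \Omega\ |\ \rho)$ one has $(\diamond \circ \lrb{\mathcal{T}})(1) = \trace{\mathcal{T}}$. Unfolding the definition of configuration interpretation and using $\lrb{\sskip} = \id$ gives $\lrb{\mathcal{T}} = \lrb{V} \circ \mathrm{new}_\rho^\ddagger$. By Theorem~\ref{thm:causal-values}(1), $\diamond_{\lrb{\Sigma}} \circ \lrb{V} = \diamond_{\lrb{Q}} = \mathrm{tr}^\ddagger$, so it remains to compute $(\mathrm{tr}^\ddagger \circ \mathrm{new}_\rho^\ddagger)(1)$ in $\CC(I,I)$. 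Translating to the Schr\"odinger picture via Figure~\ref{fig:quantum-pictures}, this $\CC$-morphism corresponds to the $\Wstar$-map $\mathrm{new}_\rho^\dagger \circ \mathrm{tr}^\dagger : \mathbb{C} \to \mathbb{C}$, which sends $1 \mapsto \mathrm{tr}^\dagger(1) = I_{2^{\dim(\rho)}} \mapsto \trace{I \cdot \rho} = \trace{\rho} = \trace{\mathcal{T}}$.

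Combining the two steps,
\[ (\diamond \circ \lrb{\mathcal{C}})(1) \;=\; \bigvee_{n=0}^\infty \sum_{r \in \TerSeq_{\leq n}(\mathcal{C})} \trace{\End(r)} \;=\; \Halt(\mathcal{C}) \cdot \trace{\mathcal{C}} \;=\; \Halt(\mathcal{C}), \]
where the middle equality is the definition of $\Halt$ and the final one uses that $\mathcal{C}$ is normalised. The serious mathematical content has been absorbed into Theorems~\ref{thm:big-step-invariance} and \ref{thm:causal-values}; consequently no real obstacle is expected here. The only mildly delicate point is justifying interchange of $(\diamond\circ -)(1)$ with the directed supremum, but this is immediate from $\cpobs$-enrichment of $\CC$ and strictness/continuity of composition and evaluation.
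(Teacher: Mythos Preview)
Your proposal is correct and follows essentially the same route as the paper: the paper factors out your second step as a separate lemma (Lemma~\ref{lem:terminal-trace}) with exactly the causality-then-$\mathrm{tr}^\ddagger\circ\mathrm{new}_\rho^\ddagger$ computation, and then chains big-step invariance, that lemma, and the definition of $\Halt$ just as you do. Your explicit justification of the interchange via $\cpobs$-enrichment is a point the paper leaves implicit.
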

\begin{proof}
In Appendix~\ref{app:adequacy}.
\qed\end{proof}

If $\mathcal C $ is not normalised, then adequacy can be recovered simply by normalising:
$ \left( \diamond \circ \lrb{\mathcal C} \right) \left( 1 \right) = \trace{\mathcal C} \Halt(\mathcal C) , $ for any possible configuration $\mathcal C$.
The adequacy formulation of~\cite{quantitative} and~\cite{quantum-game} is now a special case of our more general formulation.

\begin{corollary}
Let $M$ be a closed program of unit type, i.e. $\cdot \vdash \langle \cdot \rangle\ M\ \langle \cdot \rangle.$ Then:
\[ \lrb{(M\ |\ \cdot\ |\ \cdot\ |\ 1)} \left( 1 \right) = \Halt(M\ |\ \cdot\ |\ \cdot\ |\ 1) . \]
\end{corollary}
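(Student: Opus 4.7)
The corollary is the specialization of Theorem~\ref{thm:adequacy} to the closed, unit-typed case, and the plan is simply to check that the two pieces of bookkeeping needed to identify the two statements go through. Concretely, I want to (i) observe that the configuration $\mathcal C \coloneqq (M\ |\ \cdot\ |\ \cdot\ |\ 1)$ satisfies the hypothesis of Theorem~\ref{thm:adequacy}, and (ii) justify the disappearance of the prefix $\diamond \circ (-)$ in the statement.

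For (i), since the quantum state component is the $1 \times 1$ density matrix $1$, we have $\trace{\mathcal C} = 1$, so $\mathcal C$ is normalised. Moreover $\mathcal C$ is well-formed in the empty procedure, variable and qubit-pointer contexts, because $\cdot \vdash \langle \cdot \rangle\ M\ \langle \cdot \rangle$, the empty value assignment fits the empty variable context, and $\dim(\rho) = 0 = |Q|$. Thus Theorem~\ref{thm:adequacy} applies and gives $(\diamond \circ \lrb{\mathcal C})(1) = \Halt(\mathcal C)$.

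For (ii), the output context $\Sigma$ of $M$ is empty, hence $\lrb \Sigma$ is the empty tensor product, which is the monoidal unit $I$. The interpretation $\lrb{\mathcal C}$ is therefore a $\CC$-morphism $I \to I$. By Definition~\ref{def:discarding-map}, the discarding map $\diamond_{\lrb \Sigma} = \diamond_I : I \to I$ is the unique $\VV$-morphism with that type; but $\id_I$ already has that type, so by uniqueness $\diamond_I = \id_I$ in $\VV$, and pushing through $J^\op$ into $\CC$ this remains the identity. Consequently $\diamond_{\lrb \Sigma} \circ \lrb{\mathcal C} = \lrb{\mathcal C}$, so $(\diamond \circ \lrb{\mathcal C})(1) = \lrb{\mathcal C}(1)$. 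Combining (i) and (ii) yields $\lrb{(M\ |\ \cdot\ |\ \cdot\ |\ 1)}(1) = \Halt(M\ |\ \cdot\ |\ \cdot\ |\ 1)$, which is exactly the claim.

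There is no real obstacle here: all the substantive work has already been absorbed into Theorem~\ref{thm:big-step-invariance} and Theorem~\ref{thm:adequacy}. The only minor care required is to remember that $\diamond$ is defined in $\VV$ rather than in $\CC$, so the identification $\diamond_I = \id_I$ should be spelled out via the uniqueness of maps into the terminal object $I$ of $\VV$ before transporting the equality back to $\CC$ along the subcategory inclusion $J^\op$.
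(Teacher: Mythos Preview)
Your proposal is correct and follows exactly the paper's approach: apply Theorem~\ref{thm:adequacy} to the normalised configuration and use $\diamond_I = \id_I$. The additional bookkeeping you spell out (well-formedness of $\mathcal C$, terminality of $I$ in $\VV$, transport along $J$) is just an expanded version of what the paper compresses into one line.
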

\begin{proof}
By Theorem~\ref{thm:adequacy} and because $\diamond_I = \id$.
\qed\end{proof}

%%%%%%%%%%%%%%%%%%%%%%%%%%%%%%%%%%%%%%
\section{Conclusion and Related Work}
%%%%%%%%%%%%%%%%%%%%%%%%%%%%%%%%%%%%%%

There are many quantum programming languages described in the literature. For a
survey see~\cite{quantum-survey} and~\cite[pp. 129]{dagstuhl-survey}. Circuit
programming languages, such as Proto-Quipper~\cite{pqm,ross-pq} and
ECLNL~\cite{eclnl}, generate quantum circuits, but these
languages do not support executing quantum measurements. Here we focus on
quantum languages which support measurement and which have either inductive
datatypes or some computational adequacy result. 

Our work is the first to present a detailed semantic treatment of user-defined
inductive datatypes for quantum programming. In~\cite{quantitative}
and~\cite{quantum-game}, the authors show how to interpret a quantum lambda
calculus extended with a datatype for lists, but their syntax does not support
any other inductive datatypes. These languages are equipped with lambda
abstractions, whereas our language has only support for procedures. Lambda
abstractions are modelled using constructions from quantitative semantics of
linear logic in~\cite{quantitative} and techniques from game semantics
in~\cite{quantum-game}. We believe our model is simpler and
certainly more physically natural, because we work only with mathematical
structures used by physicists in their study of quantum mechanics.
Both~\cite{quantitative} and~\cite{quantum-game} prove an adequacy result for programs of unit type.
In~\cite{ewire}, the authors discuss potential categorical models for inductive datatypes in quantum programming, but there is no detailed semantic treatment provided and there is no adequacy result, because the language lacks recursion.

Other quantum programming languages without inductive datatypes, but which
prove computational adequacy results
include~\cite{quantum-goi,geometry-parallelism}.
A simple quantum language described in~\cite{ying-book}
is equipped with a denotational semantics which is defined in terms of its
operational semantics in such a way that soundness and adequacy trivially hold
by definition (and are indeed not stated).

A model based on W*-algebras for a quantum lambda calculus without recursion or
inductive datatypes was described in a recent manuscript~\cite{kenta-bram}. In
that model, it appears that currying is \emph{not} a Scott-continuous operation, and if so, the addition
of recursion renders the model neither sound, nor adequate.
For this reason, we use procedures and not lambda abstractions in our language.

To conclude, we presented two novel results in quantum programming: (1) we
provided a denotational semantics for a quantum programming language with inductive datatypes; (2)
we proved that our denotational semantics is invariant with respect to big-step
reduction. We also showed that the latter result is quite powerful by
demonstrating how it immediately implies computational adequacy.

Our denotational model is based on W*-algebras, which are used by physicists to
study quantum foundations.  We hope this would make it useful for developing
static analysis methods based on abstract interpretation and we plan on
investigating this as part of future work.

% The invariance of the interpretation for big-step reduction was recently
% shown in SFPC~\cite{sfpc} in the context of \emph{probabilistic
% programming}. The adequacy result presented there is
% equivalent to characterising termination at unit type and
% we conjecture that by using our second observation (composing with the
% discarding map) a stronger adequacy formulation at all types can be obtained for SFPC and other probabilistic programming languages.

\newpage
\bibliography{refs}

\newpage
\appendix

\appendix

%%%%%%%%%%%%%%%%%%%%%%%%%%%%%%%%%%%%%%%%%%%%%%%%%%%%%%%%%%%%%%%%%%%%%%%%%%%%%%
\section{Comonoid Structure of Classical Types}\label{app:comonoids}
%%%%%%%%%%%%%%%%%%%%%%%%%%%%%%%%%%%%%%%%%%%%%%%%%%%%%%%%%%%%%%%%%%%%%%%%%%%%%%
In this appendix we describe the comonoid structure of classical types within
our category $\VV$ (and therefore also in $\CC$). Our methods are based on those of~\cite{lnl-fpc}, but there
are some small differences compared to that work. In particular, we have less
structure to work with, but our type expressions are also simpler.
Nevertheless, the main idea is the same: we present an additional (classical) type
interpretation for classical types within a cartesian category (in our case
$\Set$) which then allows us to carry the comonoid structure into $\VV$ via a
symmetric monoidal adjunction. In our case, the adjunction is denoted
\stikz{lnl-fpc-model2.tikz} and it is defined in~\cite{kenta-bram}, but here we only use the categorical properties mentioned in \secref{sec:model}, so we omit the definition of the adjunction.

%%%%%%%%%%%%%%%%%%%%%%%%%%%%%%%%%%%%%%%
\subsection{Classical Interpretation of Classical Types}
%%%%%%%%%%%%%%%%%%%%%%%%%%%%%%%%%%%%%%%

The classical interpretation of a classical type $\Theta \vdash P$ is given by a functor 
$\flrb{\Theta \vdash P} : \SET^{|\Theta|} \to \SET$, defined by induction on the derivation of $\Theta \vdash P$ in the following way:
    \begin{align*}
    \flrb{\Theta \vdash P}           &: \SET^{|\Theta|} \to \SET \\
    \flrb{\Theta \vdash \Theta_i}    &= \Pi_i \\
    \flrb{\Theta \vdash I}           &= K_{1} \\
    \flrb{\Theta \vdash P + R}       &= \amalg \circ \langle \flrb{\Theta \vdash P} , \flrb{\Theta \vdash R} \rangle \\
    \flrb{\Theta \vdash P \otimes R} &= \times \circ \langle \flrb{\Theta \vdash P} , \flrb{\Theta \vdash R} \rangle \\
    \flrb{\Theta \vdash \mu X. P}    &= \flrb{\Theta, X \vdash P}^\sharp ,
    \end{align*}
where $\times$ is the categorical product in $\Set$ and where $\amalg$ is the categorical coproduct in $\Set$.
This assignment is well-defined, because every $\flrb{\Theta \vdash P}$ is an $\omega$-cocontinuous functor on $\SET$ and thus it has a parameterised initial algebra \cite[\S 4.3]{lnl-fpc}.
For any closed classical type $\cdot \vdash P$, we write
$ \flrb{P} \coloneqq \flrb{\cdot \vdash P}(*) \in \Ob(\SET).$

Our next proposition shows that the standard and classical type interpretations are strongly related via a natural isomorphism, which is crucial for defining the comonoid structure.

\begin{proposition}
  For any classical type $\Theta \vdash P$
  there exists a natural isomorphism
  \[ \iota^{\Theta \vdash P} : \lrb{\Theta \vdash P} \circ F^{\times |\Theta|}\cong F \circ \flrb{\Theta \vdash P} \qquad \text{(see Figure~\ref{fig:type-relationship}),} \]
  defined by induction on the derivation of $\Theta \vdash P.$
  Therefore, in the special case when $\Theta = \cdot$, there exists an isomorphism $\iota^P : \lrb P \cong F \flrb P$ given by $\iota^P \coloneqq \iota^{\cdot \vdash P}_{*}.$
  \begin{figure}[t]
    \centering
    \stikz{types-relation.tikz}
    \caption{Relationship between type interpretations.}
    \label{fig:type-relationship}
  \end{figure}
\end{proposition}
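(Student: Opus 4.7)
The plan is to proceed by induction on the derivation of the classical type judgement $\Theta \vdash P$, exploiting two key features of the functor $F : \Set \to \VV$ coming from the symmetric monoidal adjunction: as the left adjoint of a symmetric monoidal adjunction it is strong symmetric monoidal, so it carries the cartesian structure $(1,\times)$ of $\Set$ to the monoidal structure $(I,\otimes)$ of $\VV$ via coherent isomorphisms $F(1)\cong I$ and $F(X\times Y)\cong FX\otimes FY$; and as a left adjoint it preserves all colimits, hence finite coproducts (so $F(X\amalg Y)\cong FX+FY$ and $F(\emptyset)\cong 0$) and $\omega$-colimits.

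The non-inductive cases are routine and follow the shape of the two type interpretations. For $\Theta\vdash \Theta_i$ the two composites $\Pi_i\circ F^{\times |\Theta|}$ and $F\circ \Pi_i$ coincide, so $\iota$ is the identity. For $\Theta\vdash I$ we take the constant family at the strong-monoidal structure map $F(1)\cong I$. For $\Theta\vdash P+R$ and $\Theta\vdash P\otimes R$ we combine the inductive hypotheses $\iota^{\Theta\vdash P}$ and $\iota^{\Theta\vdash R}$ componentwise with, respectively, the coproduct-preservation iso $F(\flrb{P}\amalg\flrb{R})\cong F\flrb{P}+F\flrb{R}$ and the strong-monoidal iso $F(\flrb{P}\times\flrb{R})\cong F\flrb{P}\otimes F\flrb{R}$; coherence of $F$ guarantees naturality.

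The real obstacle is the inductive type case $\Theta\vdash \mu X.P$, where one must commute the parameterised initial-algebra construction past $F$. Unpacked, the inductive hypothesis supplies a natural iso
\[
\alpha : \lrb{\Theta, X\vdash P}\circ (F^{\times |\Theta|}\times F) \;\cong\; F\circ \flrb{\Theta, X\vdash P},
\]
and what is needed is $\iota^{\mu X.P}:\lrb{\Theta, X\vdash P}^\sharp\circ F^{\times|\Theta|}\cong F\circ \flrb{\Theta, X\vdash P}^\sharp$. Fix parameters $\vec X\in\Set^{|\Theta|}$ and write $T(-) = \flrb{\Theta,X\vdash P}(\vec X,-)$ and $S(-) = \lrb{\Theta,X\vdash P}(F^{\times|\Theta|}(\vec X),-)$; both are $\omega$-cocontinuous, so by Adámek's construction $T^\sharp(\vec X)$ is the $\omega$-colimit of the initial chain $\emptyset\to T\emptyset\to T^2\emptyset\to\cdots$ in $\Set$, and analogously for $S^\sharp(F^{\times|\Theta|}\vec X)$ in $\VV$ starting from $0$. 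Apply $F$ to the $\Set$-chain: since $F$ preserves $\omega$-colimits and the initial object, $F(T^\sharp\vec X)$ is the colimit of $0\to F(T\emptyset)\to F(T^2\emptyset)\to\cdots$. A straightforward sub-induction on $n$ using $\alpha$ yields isomorphisms $F(T^n\emptyset)\cong S^n(0)$ compatible with the transition maps, so the two colimits are canonically isomorphic. Naturality in $\vec X$ is forced by the universal property of the colimit together with the naturality of $\alpha$.

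The only remaining bookkeeping is to check that the family $\iota^{\Theta\vdash P}$ so constructed is natural in $\Theta$ and compatible with type substitution (the analogue of Lemma~\ref{lem:type-sub-simple} for $\flrb{-}$), so that when restricted to closed classical types $\iota^P$ intertwines the standard-interpretation $\sfold/\sunfold$ of Definition~\ref{def:fold-unfold-simple} with the corresponding structural isos of $\flrb{\mu X.P}$. This compatibility is exactly what Proposition~\ref{prop:comono} will use downstream to transport the canonical comonoid structure on $\flrb{P}$ in $\Set$ along $F$ into $\VV$.
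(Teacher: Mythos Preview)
Your proposal is correct and follows essentially the same approach the paper points to (the paper itself defers the construction to \cite[Theorem 6.1.2]{lnl-fpc} and \cite[A.8]{lnl-fpc-lmcs}, where the argument is precisely the induction you describe: identity on variables, the strong-monoidal unit iso for $I$, coproduct- and tensor-preservation for $+$ and $\otimes$, and the Ad\'amek chain argument for $\mu$). One small remark: your final paragraph about compatibility with type substitution and with $\sfold/\sunfold$ goes beyond what this proposition asserts; in the paper those are stated and proved separately as Lemma~\ref{lem:classical-sub} and Proposition~\ref{prop:fold-unfold-relationship}, so you need not fold them into this proof.
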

\begin{proof}
The definition of $\iota^{\Theta \vdash P}$ and the proof of the theorem is essentially the same as~\cite[Theorem 6.1.2]{lnl-fpc} and it is presented in detail in \cite[A.8]{lnl-fpc-lmcs}.
\qed\end{proof}

%%%%%%%%%%%%%%%%%%%%%%%%%%%%%%%%%%%%%%%
\subsection{Constructing the copy map}
%%%%%%%%%%%%%%%%%%%%%%%%%%%%%%%%%%%%%%%

Using $\iota^P$, we can define a copy map on the interpretation of closed classical types.

\begin{definition}\label{def:copy-map}
For any closed classical type $\cdot \vdash P$ we define a copy morphism: 
\begin{align*}
  \triangle_{\lrb P} &\coloneqq \left( \lrb{P} \xrightarrow{\iota} F \flrb{P} \xrightarrow{F \langle \id, \id \rangle } F \left( \flrb{P} \times \flrb{P} \right) \xrightarrow{\cong} F \flrb{P} \otimes F \flrb{P}
    \xrightarrow{\iota^{-1} \otimes \iota^{-1}} \lrb{P} \otimes \lrb{P} \right) .
\end{align*}
\end{definition}

\begin{proposition}\label{prop:comonoid}
$\left( \lrb P, \triangle_{\lrb P}, \diamond_{\lrb P} \right)$ is a cocommutative comonoid in $\VV$.
\end{proposition}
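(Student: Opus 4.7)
The plan is to exhibit $(\lrb P, \triangle_{\lrb P}, \diamond_{\lrb P})$ as a transport along $\iota^P$ of the canonical comonoid that already lives on $F\flrb{P}$. First, I would observe that in $\Set$, equipped with its cartesian monoidal structure, every object $X$ carries a canonical cocommutative comonoid structure given by the diagonal $\delta_X = \langle \id_X, \id_X \rangle : X \to X \times X$ together with the unique map $!_X : X \to 1$. This is the well-known fact that in a cartesian category every object is a cocommutative comonoid, and these structures are functorial (preserved by every morphism).

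Next, the left adjoint $F : \Set \to \VV$ is part of a symmetric monoidal adjunction, hence it is a strong symmetric monoidal functor with coherent isomorphisms $m_{X,Y}: FX \otimes FY \xrightarrow{\cong} F(X \times Y)$ and $F1 \cong I$. Strong symmetric monoidal functors send cocommutative comonoids to cocommutative comonoids; applying this to $(\flrb P, \delta_{\flrb P}, !_{\flrb P})$ yields a cocommutative comonoid in $\VV$ with carrier $F\flrb P$, comultiplication $\triangle' \coloneqq m^{-1}_{\flrb P, \flrb P} \circ F\delta_{\flrb P}$, and counit $\diamond' \coloneqq (F1 \cong I) \circ F!_{\flrb P}$.

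Finally, I transport this comonoid along the isomorphism $\iota^P : \lrb P \xrightarrow{\cong} F\flrb P$ in $\VV$. Since comonoid structure is preserved and reflected by isomorphisms in a symmetric monoidal category, the triple
\[
\bigl(\lrb P,\ (\iota^{-1} \otimes \iota^{-1}) \circ \triangle' \circ \iota,\ \diamond' \circ \iota \bigr)
\]
is again a cocommutative comonoid in $\VV$. Unfolding the definition of $\triangle'$, the comultiplication above is exactly $\triangle_{\lrb P}$ as given in Definition~\ref{def:copy-map}. For the counit, observe that $I$ is terminal in $\VV$ (the unit $\mathbb{C}$ is terminal in $\WNMIU$), so $\diamond' \circ \iota : \lrb P \to I$ must coincide with the unique such morphism, namely $\diamond_{\lrb P}$ (Definition~\ref{def:discarding-map}).

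The only real content beyond bookkeeping is the two standard categorical facts invoked above: that $F$ is strong symmetric monoidal (which follows from the symmetric monoidal adjunction structure discussed in \secref{sec:model}), and that strong symmetric monoidal functors preserve cocommutative comonoids. I expect the main subtlety to be keeping track of the coherence isomorphisms for the tensor in $\VV$ and the cartesian product in $\Set$ so that the transported structure matches the composite written in Definition~\ref{def:copy-map} on the nose; but no inductive argument on the structure of $P$ is needed, because all of the type-specific work has been absorbed into the naturality and existence of $\iota^P$.
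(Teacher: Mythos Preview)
Your proposal is correct and follows essentially the same approach as the paper: the paper first observes that $\diamond_{\lrb P}$ equals the composite $\lrb{P} \xrightarrow{\iota} F\flrb{P} \xrightarrow{F!} F1 \xrightarrow{\cong} I$ by terminality of $I$ in $\VV$, and then defers to \cite[Proposition 6.3.3]{lnl-fpc} for the remainder, which is exactly the transport-of-comonoids argument you have spelled out in full.
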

\begin{proof}
First, observe that 
  $ \diamond_{\lrb P} = \left( \lrb{P} \xrightarrow{\iota} F \flrb{P} \xrightarrow{F1} F1 \xrightarrow{\cong} I \right),$ because $I$ is terminal in $\VV$.
The rest of the proof is then identical to \cite[Proposition 6.3.3]{lnl-fpc}.
\qed\end{proof}

Next, we identify the comonoid homomorphisms with respect to the above comonoid structure.

\begin{definition}\label{def:comonoid-homomorphism}
Given closed classical types $P$ and $R$, we say that a morphism $f : \lrb P \to \lrb R$ is \emph{classical} if $f = \left( \lrb P \xrightarrow \iota F \flrb P \xrightarrow{Ff'} F \flrb R \xrightarrow{\iota^{-1}} \lrb R \right)$,
for some $f' : \flrb P \to \flrb R$ in $\Set.$
\end{definition}

\begin{proposition}\label{prop:comonoid-homomorphism}
For every classical morphism $f : \lrb P \to \lrb R $, we have:
\begin{align*}
\triangle_{\lrb R} \circ f = (f \otimes f) \circ \triangle_{\lrb P}  \qquad \text{ and } \qquad \diamond_{\lrb R} \circ f  = \diamond_{\lrb P}.
\end{align*}
Therefore, $f$ is a comonoid homomorphism with respect to Proposition~\ref{prop:comonoid}.
\end{proposition}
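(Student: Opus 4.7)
The plan is to unfold the definition of a classical morphism together with Definition~\ref{def:copy-map} and then run a short diagram chase. Fix $f' : \flrb P \to \flrb R$ in $\Set$ witnessing $f = \iota_R^{-1} \circ Ff' \circ \iota_P$, and write $m_{X,Y} : F(X \times Y) \xrightarrow{\cong} FX \otimes FY$ for the strong-monoidal coherence of $F$, which I may freely use since the adjunction $F \dashv G$ in \secref{sec:model} is symmetric monoidal.

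For the discarding equation $\diamond_{\lrb R} \circ f = \diamond_{\lrb P}$, I would observe that the three factors $\iota_P$, $Ff'$ and $\iota_R^{-1}$ all live in $\VV$, so $f \in \VV(\lrb P, \lrb R)$. Since $I$ is terminal in $\VV$ by Definition~\ref{def:discarding-map}, both $\diamond_{\lrb R} \circ f$ and $\diamond_{\lrb P}$ are $\VV$-morphisms into $I$ and must therefore coincide.

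For the comultiplication equation, I would substitute Definition~\ref{def:copy-map} for $\triangle_{\lrb P}$ and $\triangle_{\lrb R}$ and the factorisation of $f$ into both sides of $\triangle_{\lrb R} \circ f = (f \otimes f) \circ \triangle_{\lrb P}$. On the left, the pair $\iota_P \circ \iota_P^{-1}$ cancels in the middle; on the right, $(\iota_P \otimes \iota_P) \circ (\iota_P^{-1} \otimes \iota_P^{-1})$ cancels, and after pushing the outermost $\iota_R^{-1} \otimes \iota_R^{-1}$ through the equation as well, it suffices to verify
\[ m_{\flrb R, \flrb R} \circ F\langle \id, \id \rangle \circ Ff' = (Ff' \otimes Ff') \circ m_{\flrb P, \flrb P} \circ F\langle \id, \id \rangle \]
in $\VV$. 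Functoriality of $F$ applied to the naturality of the diagonal in $\Set$, namely $\langle \id, \id \rangle \circ f' = (f' \times f') \circ \langle \id, \id \rangle$, rewrites the left-hand side as $m \circ F(f' \times f') \circ F\langle \id, \id \rangle$; naturality of the coherence $m$ at $(f', f')$ then slides $m$ past $F(f' \times f')$ to produce the right-hand side.

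The main obstacle, if it can be called one, is only the bookkeeping of coherence isomorphisms and the factorisation through $F$; the argument uses nothing beyond functoriality of $F$, naturality of its monoidal coherence, naturality of the set-theoretic diagonal, and terminality of $I$ in $\VV$. The final sentence of the proposition is then immediate from the equations just established together with the comonoid structure of Proposition~\ref{prop:comonoid}.
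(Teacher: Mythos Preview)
Your argument is correct and is precisely the expected unfolding: the counit equation is immediate from terminality of $I$ in $\VV$ once you note that $\iota$, $Ff'$ and $\iota^{-1}$ are $\VV$-morphisms, and the comultiplication equation reduces, after cancelling the $\iota$'s, to naturality of the set-theoretic diagonal together with naturality of the monoidal coherence of $F$. The paper does not spell this out either; it simply cites \cite[Proposition~6.3.6]{lnl-fpc}, whose proof is this same diagram chase, so your approach coincides with the paper's.
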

\begin{proof}
Essentially the same as \cite[Proposition 6.3.6]{lnl-fpc}.
\qed\end{proof}

%%%%%%%%%%%%%%%%%%%%%%%%%%%%%%%%%%%%%%%
\subsection{Folding and Unfolding of Classical Types}
%%%%%%%%%%%%%%%%%%%%%%%%%%%%%%%%%%%%%%%

In order to prove soundness of our semantics, we have to be able to copy all
classical values. Thus, we have to show that folding and unfolding of classical
inductive types are classical morphisms in the sense of
Definition~\ref{def:comonoid-homomorphism}. This is the purpose of the present subsection.

\begin{lemma}[Type Substitution]
\label{lem:classical-sub}
Let $\Theta, X \vdash P$ and $\Theta \vdash R$ be classical types.
Then (see Figure~\ref{fig:substitution}):
\begin{enumerate}
  \item $\flrb{\Theta \vdash P[R/X]} = \flrb{\Theta, X \vdash P} \circ \langle \Id, \flrb{\Theta \vdash R} \rangle ;$
  \item  $ \iota^{\Theta \vdash P[R/X]} = \iota^{\Theta, X \vdash P} \langle \Id, \flrb{\Theta \vdash R}\rangle \circ \lrb{\Theta, X\vdash P} \langle F^{\times |\Theta|}, \iota^{\Theta \vdash R} \rangle$
\end{enumerate}
\end{lemma}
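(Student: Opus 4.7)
The plan is to prove both statements simultaneously by induction on the derivation of $\Theta, X \vdash P$. Statement (1) is the classical-interpretation analogue of Lemma~\ref{lem:type-sub-simple} and its proof is essentially the same as the one sketched there, so the genuine work lies in verifying the coherence equation (2) at each step, using (1) for the same case and the inductive hypothesis for smaller subtypes.

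For the base cases I would proceed as follows. When $P$ is a type variable $\Theta_i$ distinct from $X$, both sides of (1) are the projection $\Pi_i$ and both sides of (2) unfold to the identity natural transformation on $F \circ \Pi_i$ by the definition of $\iota$ on projections. When $P = X$, substitution gives $P[R/X] = R$ and statement (1) reduces to $\flrb{\Theta \vdash R} = \Pi_{|\Theta|+1} \circ \langle \Id, \flrb{\Theta \vdash R}\rangle$; statement (2) similarly collapses, since $\iota^{\Theta, X \vdash X}_{A, B} = \id_{FB}$, to the tautology $\iota^{\Theta \vdash R} = \iota^{\Theta \vdash R}$. The case $P = I$ is trivial because both $\lrb{\cdot \vdash I}$ and $F \flrb{\cdot \vdash I}$ are constant at $I = F1$ and $\iota$ is the identity there.

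For the inductive cases $P = P_1 + P_2$ and $P = P_1 \otimes P_2$, statement (1) follows directly from the inductive hypotheses applied componentwise together with the fact that $\amalg$ and $\times$ in $\SET$ are sent by $F$ to coproduct and tensor in $\VV$ up to natural isomorphism (this is exactly what is used in constructing $\iota$ in the sum/product cases in~\cite[A.8]{lnl-fpc-lmcs}). Statement (2) then follows by pasting the two inductive hypothesis squares with these coherence natural isomorphisms; this is a routine naturality diagram chase that commutes because the comparison maps $F(A \amalg B) \cong FA + FB$ and $F(A \times B) \cong FA \otimes FB$ are themselves natural.

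The main obstacle, as expected, will be the case $P = \mu Y. P'$. Here the target functor $\flrb{\Theta \vdash (\mu Y. P')[R/X]} = \flrb{\Theta \vdash \mu Y. P'[R/X]}$ is defined as a parameterised initial algebra of $\flrb{\Theta, Y \vdash P'[R/X]}$, and we must compare it with $\flrb{\Theta, X, Y \vdash P'}^\sharp \circ \langle \Id, \flrb{\Theta \vdash R}\rangle$. The strategy is to apply the inductive hypothesis to $\Theta, Y \vdash P'[R/X]$ (viewed as obtained from $\Theta, X, Y \vdash P'$ by substituting $R$ for $X$), obtaining an equality of the generating functors, and then invoke the universal property of parameterised initial algebras to transfer it to the $(-)^\sharp$ construction; uniqueness of parameterised initial algebras (up to unique compatible iso, which is actually an equality given how $(-)^\sharp$ was fixed in Proposition~\ref{prop:initial-algebra-exists}) yields statement (1). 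For statement (2) in the $\mu$-case, $\iota$ is defined as the unique algebra isomorphism between two initial $\omega$-colimit constructions (via $F$ preserving $\omega$-colimits); the required coherence equation then follows by exhibiting both sides as cocone morphisms from the same initial chain and appealing to the initial-algebra uniqueness. This is the step that will require the most care, mirroring the reasoning in \cite[A.8]{lnl-fpc-lmcs} but adapted to our simpler type grammar without !-modalities or arrow types.
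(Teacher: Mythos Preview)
Your proposal is correct and aligned with the paper's approach: the paper does not give its own argument but simply records that this is a special case (with $\gamma = \id$) of \cite[Lemma~6.1.5]{lnl-fpc}, with full details in \cite[A.9]{lnl-fpc-lmcs}, and the simultaneous induction on the derivation of $\Theta, X \vdash P$ that you outline is exactly the proof carried out in those references. The only point to watch in the $\mu$-case is the variable ordering when invoking the inductive hypothesis (you need $\Theta, Y, X \vdash P'$ rather than $\Theta, X, Y \vdash P'$), which is why the cited lemma is stated more generally with a context permutation $\gamma$; in the present setting this is handled by a routine exchange argument, and your reference to \cite[A.8]{lnl-fpc-lmcs} covers it.
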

\begin{proof}
Special case of \cite[Lemma 6.1.5]{lnl-fpc}, where $\gamma = \id$. Detailed proof is available in~\cite[A.9]{lnl-fpc-lmcs}.
\qed\end{proof}

We may now define folding and unfolding for the classical interpretation of classical inductive types.

\begin{definition}\label{def:fold-unfold-classical}
Given a closed classical type $\cdot \vdash \mu X. P,$ we define an isomorphism:
\begin{align*}
\ifold_{\mu X.P} &: \flrb{P[\mu X. P/ X]} = \flrb{X \vdash P} \flrb{\mu X. P} \cong \flrb{\mu X. P} : \iunfold_{\mu X.P}
\end{align*}
\end{definition}

Like in the standard type interpretation, substitution holds up to equality, so the above folding/unfolding
is given simply by the initial algebra structure of the
indicated functors. We conclude the subsection with a proposition showing
how the different folds/unfolds relate to one another.

\begin{proposition}\label{prop:fold-unfold-relationship}
Given a closed classical type $\cdot \vdash \mu X. P,$ then
\[ \sfold_{\mu X.P} = (\iota^{\mu X. P})^{-1} \circ F\ifold_{\mu X.P} \circ \iota^{P [\mu X.P / X]} \]
\text{(see Figure~\ref{fig:folds}).}
\begin{figure}[t]
\cstikz{fold-unfold.tikz}
\caption{Relationships between the different fold/unfold maps.}
\label{fig:folds}
\end{figure}
\end{proposition}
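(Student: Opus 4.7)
The plan is to show this identity essentially by unfolding the definition of $\iota^{\mu X.P}$, which by construction is an algebra isomorphism between the standard initial algebra of $\lrb{X \vdash P}$ and the ``transported'' initial algebra coming from $F$ applied to the classical initial algebra.

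First, I would recall how $\iota^{\Theta \vdash \mu X.P}$ is defined inductively (analogous to \cite[Theorem 6.1.2]{lnl-fpc}). Since $F$ preserves $\omega$-colimits (it is the left adjoint in a symmetric monoidal adjunction, hence cocontinuous, and both categories compute parameterised initial algebras as colimits of initial $\omega$-chains), the object $F\flrb{\mu X.P}$ carries a canonical $\lrb{X \vdash P}$-algebra structure, namely
\[ F\ifold_{\mu X.P} \circ \iota^{X \vdash P}_{\flrb{\mu X.P}} : \lrb{X \vdash P}\, F\flrb{\mu X.P} \to F\flrb{\mu X.P}, \]
and the iso $\iota^{\mu X.P}$ is defined precisely as the unique $\lrb{X \vdash P}$-algebra isomorphism from the initial algebra $(\lrb{\mu X.P}, \sfold_{\mu X.P})$ to this algebra. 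Concretely, this uniqueness gives the commuting identity
\[ \iota^{\mu X.P} \circ \sfold_{\mu X.P} = F\ifold_{\mu X.P} \circ \iota^{X \vdash P}_{\flrb{\mu X.P}} \circ \lrb{X \vdash P}(\iota^{\mu X.P}). \]

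Second, I would invoke Lemma~\ref{lem:classical-sub}(2) in the special case $\Theta = \cdot$ and $R = \mu X.P$, which reads
\[ \iota^{P[\mu X.P/X]} = \iota^{X \vdash P}_{\flrb{\mu X.P}} \circ \lrb{X \vdash P}(\iota^{\mu X.P}). \]
Substituting this into the right-hand side of the displayed equation above yields
\[ \iota^{\mu X.P} \circ \sfold_{\mu X.P} = F\ifold_{\mu X.P} \circ \iota^{P[\mu X.P/X]}, \]
and pre-composing with $(\iota^{\mu X.P})^{-1}$ delivers exactly the claimed equation. Thus the result reduces to (i) the defining property of $\iota^{\mu X.P}$ as an algebra iso at the $\mu$-case, and (ii) the substitution lemma.

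The main obstacle is bookkeeping around the $\mu$-case definition of $\iota$: one must verify that the chosen definition really is the algebra iso described above (rather than some other coherent isomorphism), which amounts to checking that the construction via initial-$\omega$-chains in $\Set$ is transported by $F$ to the corresponding chain in $\VV$ under the components $\iota^{X \vdash P}$. Since $F$ preserves $\omega$-colimits and $\iota^{X \vdash P}$ is natural, this is a routine but careful diagram chase, and it is essentially the same argument already carried out in \cite[A.8--A.9]{lnl-fpc-lmcs}, from which the present proposition can be cited.
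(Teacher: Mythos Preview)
Your argument is correct and is precisely the content of \cite[Theorem~6.1.7]{lnl-fpc}, which is all the paper's own proof invokes. In other words, the paper simply cites the external result, while you have unpacked it: the defining algebra-morphism square for $\iota^{\mu X.P}$ together with the substitution Lemma~\ref{lem:classical-sub}(2) at $\Theta=\cdot$, $R=\mu X.P$ is exactly the two-step derivation behind that citation, so your approach and the paper's coincide.
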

\begin{proof}
This is simply a special case of \cite[Theorem 6.1.7]{lnl-fpc}.
\qed\end{proof}

This shows that
folding/unfolding of classical types is a classical isomorphism
(Definition~\ref{def:comonoid-homomorphism}) and may thus be copied.

%%%%%%%%%%%%%%%%%%%%%%%%%%%%%%%%%%%%%%%
\newpage
\subsection{Copying of Classical Values}
%%%%%%%%%%%%%%%%%%%%%%%%%%%%%%%%%%%%%%%

Finally, we may state the main theorem from this appendix.

\begin{figure}[t]
	\cstikz{alpha-substitution.tikz}
	\caption{The commuting diagram of natural isomorphisms for Lemma~\ref{lem:classical-sub}.}\label{fig:substitution}
\end{figure}

\begin{theorem}
The interpretation of a classical value is a classical morphism (see Definition~\ref{def:comonoid-homomorphism}) and may thus be copied.
\end{theorem}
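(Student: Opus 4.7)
The plan is to proceed by structural induction on the derivation of $\cdot \vdash v : P$, where $P$ is classical. By Lemma~\ref{lem:classical-value-syntax}, the qubit pointer context $Q$ is empty, so $\lrb v : I \to \lrb P$ is indeed a morphism of the form required by Definition~\ref{def:comonoid-homomorphism} (with domain $\lrb I \cong I$). Once classicality is established, the ``may thus be copied'' clause is an immediate application of Proposition~\ref{prop:comonoid-homomorphism}.

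For the base case $v = *$, we have $\lrb * = \id_I$, which is exhibited as classical by taking $f' = \id_{1} : 1 \to 1 = \flrb I$, using that $\iota^I : \lrb I \to F\flrb I = F 1$ is an isomorphism and $F$ preserves identities. For the inductive cases corresponding to $\lleft$, $\rright$ and pairing, I would exploit the fact that $\iota^{\Theta \vdash P}$ was defined by induction on the type derivation in a manner compatible with the type constructors: at a coproduct type $A+B$, the component $\iota^{A+B}$ is built from $\iota^A$, $\iota^B$ together with the coproduct injections and the isomorphism $F(X \amalg Y) \cong FX + FY$ coming from the fact that $F$ preserves coproducts; at a tensor type $A \otimes B$, analogously using that $F$ is strong symmetric monoidal so $F(X \times Y) \cong FX \otimes FY$. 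A short diagram chase in each case converts the composite $\mathrm{left} \circ \lrb{v'}$ (respectively $\mathrm{right} \circ \lrb{v'}$, $\lrb{v_1} \otimes \lrb{v_2}$) into the classical form $(\iota^{A+B})^{-1} \circ F(\text{left}_{\Set} \circ f') \circ \iota^I$ (and similarly for the other constructors), where $f'$ comes from the inductive hypothesis.

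The fold case is where Proposition~\ref{prop:fold-unfold-relationship} does the essential work. Suppose $v = \fold_{\mu X.P} v'$ with $\lrb{v'} = (\iota^{P[\mu X.P/X]})^{-1} \circ Ff' \circ \iota^I$ by induction. Then
\[
  \lrb{v} = \sfold_{\mu X.P} \circ \lrb{v'}
         = (\iota^{\mu X.P})^{-1} \circ F\ifold_{\mu X.P} \circ \iota^{P[\mu X.P/X]} \circ (\iota^{P[\mu X.P/X]})^{-1} \circ Ff' \circ \iota^I,
\]
which simplifies, using functoriality of $F$, to $(\iota^{\mu X.P})^{-1} \circ F(\ifold_{\mu X.P} \circ f') \circ \iota^I$, exhibiting $\lrb v$ as classical with witnessing $\Set$-morphism $\ifold_{\mu X.P} \circ f'$.

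The main obstacle I expect is the bookkeeping in the inductive cases for $+$ and $\otimes$: one has to verify that the interpretation of the value constructors $\mathrm{left}$, $\mathrm{right}$ and pairing in $\VV$ corresponds, through the natural isomorphism $\iota$, to the coproduct injections and pairing in $\Set$ applied under $F$. This is essentially bureaucratic naturality, but it relies on the precise inductive definition of $\iota^{\Theta \vdash P}$ (not reproduced in the excerpt, only alluded to via the reference to \cite[A.8]{lnl-fpc-lmcs}), so the cleanest route is to state the coherence of $\iota$ with the type-forming operations as a small lemma and then apply it uniformly. The fold case, despite looking the most daunting, is in fact straightforward once Proposition~\ref{prop:fold-unfold-relationship} is in hand.
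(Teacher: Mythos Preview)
Your proposal is correct and follows essentially the same approach as the paper: the paper defines an auxiliary classical value interpretation $\flrb{\cdot \vdash v : P} : 1 \to \flrb P$ in $\Set$ by induction on the value derivation (your witnessing morphism $f'$, case by case), and then shows $\lrb v = (\iota^P)^{-1} \circ F\flrb v \circ \iota^I$, with the $\fold$ case handled by Proposition~\ref{prop:fold-unfold-relationship} and the remaining cases by the axioms of the symmetric monoidal adjunction --- exactly as you outline.
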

\begin{proof}
The proof is essentially the same as~\cite[Proposition 6.3.7]{lnl-fpc}.
Recall that by Lemma~\ref{lem:classical-value-syntax}, if $Q \vdash v : P$ is a classical value, then $Q = \cdot.$ The proof proceeds by defining for every classical value $\cdot \vdash v : P$
a classical value interpretation  $ \flrb{\cdot \vdash v : P} : 1 \to \flrb P $ in $\Set.$ It is defined by induction on the derivation of $\cdot \vdash v : P$ in the following way:
\begin{align*}
  \flrb{\cdot \vdash * : I}                                          &\coloneqq \id_1 \\
  \flrb{\cdot \vdash \lleft_{P,R} v : P +R }                         &\coloneqq \mathrm{inl} \circ \flrb v \\
  \flrb{\cdot \vdash \rright_{P,R} v : P +R }                        &\coloneqq \mathrm{inr} \circ \flrb v \\
  \flrb{\cdot \vdash ( v, w ) : P \otimes R }                        &\coloneqq \left\langle \flrb v, \flrb w \right\rangle \\
  \flrb{\cdot \vdash \fold_{\mu X. P} v : \mu X. P}                  &\coloneqq \ifold \circ \flrb v ,
\end{align*}
%\begin{align*}
%\flrb{\cdot \vdash * : I}                                          &\coloneq 1          \xrightarrow{\id} 1 \\
%\flrb{\cdot \vdash \lleft_{P,R} v : P +R }                         &\coloneq \flrb \Phi \xrightarrow{\flrb v} \flrb P \xrightarrow{\mathrm{inl}} \flrb P \amalg \flrb R = \flrb{P + R} \\
%\flrb{\cdot \vdash \rright_{P,R} v : P +R }                        &\coloneq \flrb \Phi \xrightarrow{\flrb v} \flrb R \xrightarrow{\mathrm{inr}} \flrb P \amalg \flrb R = \flrb{P + R} \\
%\flrb{\cdot \vdash \left\langle v, w \right\rangle : P \otimes R } &\coloneq \flrb \Phi \xrightarrow{ \left\langle \flrb v, \flrb w \right\rangle } \flrb P \times \flrb R = \flrb{P \otimes R} \\
%\flrb{\cdot \vdash \fold_{\mu X. P} v : \mu X. P}                  &\coloneq \flrb \Phi \xrightarrow{\flrb v} \flrb{P[\mu X. P / X]} \xrightarrow{\ifold} \flrb{\mu X. P},
%\end{align*}
where $\mathrm{inl}$ and $\mathrm{inr}$ are the coproduct injections in $\Set$;  $\langle f, g \rangle$ is the unique map induced by the product in $\Set$ and $\ifold$ is defined in Definition~\ref{def:fold-unfold-classical}.
To finish the proof, one has to show
\[ \lrb{\cdot \vdash v : P } = \left( I \xrightarrow{\iota^I} F1 \xrightarrow{F \flrb{\cdot \vdash v : P} } F \flrb{P} \xrightarrow{ (\iota^P)^{-1} } \lrb{P} \right) . \]
The $\fold$ case follows by Proposition~\ref{prop:fold-unfold-relationship} and the remaining cases follow easily from the axioms of symmetric monoidal adjunctions.
\qed\end{proof}

%%%%%%%%%%%%%%%%%%%%%%%%%%%%%%%%%%%%%%%%%%%%%%%%%%%%%%%%%%%%%%%%%%%%%%%%%%%%%%
\newpage
\section{Proof of Theorem~\ref{thm:big-step-invariance}}\label{app:big-step}
%%%%%%%%%%%%%%%%%%%%%%%%%%%%%%%%%%%%%%%%%%%%%%%%%%%%%%%%%%%%%%%%%%%%%%%%%%%%%%

For brevity, we define
\[ \lrb{\mathcal C \Downarrow} \coloneqq \bigvee_{n=0}^\infty \sum_{r \in \TerSeq_{\leq n}(\mathcal C)} \lrb{\End(r)} . \]
Note, that the sequence $\left( \sum_{r \in \TerSeq_{\leq n}(\mathcal C)} \lrb{\End(r)}\right)_{n \in \mathbb N}$ is an increasing chain in our order and thus we may take its supremum, as we did above.

To show Theorem~\ref{thm:big-step-invariance} is to show:
\[ \lrb{\mathcal C} = \lrb{\mathcal C \Downarrow} . \]

As a simple consequence of Soundness (Theorem~\ref{thm:soundness}), we get the following.

\begin{corollary}\label{cor:easy-part-adequacy}
For any configuration $\mathcal C$, we have
$ \lrb{\mathcal C} \geq \lrb{\mathcal C \Downarrow} . $
\end{corollary}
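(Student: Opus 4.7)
The plan is to prove the inequality by induction on $n$, showing that
\[ \lrb{\mathcal C} \geq \sum_{r \in \TerSeq_{\leq n}(\mathcal C)} \lrb{\End(r)} \]
for every $n \in \mathbb N$ and every configuration $\mathcal C$, and then to take the supremum. Since $\CC$ is $\cpobs$-enriched and addition of NCPSU-maps is monotone with respect to $\leq$, once the bound holds uniformly in $n$, taking the directed join yields $\lrb{\mathcal C} \geq \lrb{\mathcal C \Downarrow}$ immediately.

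For the base case $n = 0$, note that $\TerSeq_{\leq 0}(\mathcal C)$ is $\{\mathcal C\}$ if $\mathcal C$ is terminal and $\varnothing$ otherwise. In the terminal case the required inequality reduces to $\lrb{\mathcal C} \geq \lrb{\mathcal C}$; in the non-terminal case the right-hand side is the zero morphism, which is the least element of $\CC(\lrb\Gamma, \lrb\Sigma)$, so the inequality is trivial.

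For the inductive step, fix $n$ and assume the bound holds at $n$ for every configuration. If $\mathcal C$ is terminal, then $\TerSeq_{\leq n+1}(\mathcal C) = \{\mathcal C\}$ and the inequality is trivial again. Otherwise, every element of $\TerSeq_{\leq n+1}(\mathcal C)$ has length at least $1$ and decomposes uniquely as a one-step reduction $\mathcal C \leadsto \mathcal D$ followed by some $r' \in \TerSeq_{\leq n}(\mathcal D)$, with $\End(r) = \End(r')$. Regrouping the sum along this decomposition and applying the induction hypothesis to each reduct $\mathcal D$ gives
\[ \sum_{r \in \TerSeq_{\leq n+1}(\mathcal C)} \lrb{\End(r)} \;=\; \sum_{\mathcal C \leadsto \mathcal D} \;\sum_{r' \in \TerSeq_{\leq n}(\mathcal D)} \lrb{\End(r')} \;\leq\; \sum_{\mathcal C \leadsto \mathcal D} \lrb{\mathcal D} \;=\; \lrb{\mathcal C}, \]
where the last equality is Soundness (Theorem~\ref{thm:soundness}) and the inequality uses monotonicity of the (finite, by Theorem~\ref{thm:progress}) convex sum over reducts.

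There is essentially no hard step here: the one point that warrants a sentence in the write-up is the unique decomposition of terminating sequences used to regroup the sum, together with the observation that the outer sum is finite (at most two reducts, by Theorem~\ref{thm:progress}) so that monotonicity of addition can be applied termwise. Taking $\bigvee_n$ on both sides and recalling that $\lrb{\mathcal C}$ is an upper bound of the chain on the right finishes the proof.
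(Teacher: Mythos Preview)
Your proof is correct and follows exactly the paper's approach: prove $\lrb{\mathcal C} \geq \sum_{r \in \TerSeq_{\leq n}(\mathcal C)} \lrb{\End(r)}$ by induction on $n$ using Soundness, then take the supremum. You have simply spelled out the base case and the regrouping of sums in the inductive step in more detail than the paper does.
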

\begin{proof}
We have
$ \lrb{\mathcal C} \geq \sum_{r \in \TerSeq_{\leq n}(\mathcal C)} \lrb{\End(r)} $
which follows by induction on $n$ using Theorem~\ref{thm:soundness}. 
The corollary follows by taking the supremum over $n$.
\qed\end{proof}

In the remainder of this appendix, we will show that the converse inequality also holds, thereby finishing the proof of Theorem~\ref{thm:big-step-invariance}.

%A \emph{program} is a term $M$, such that $\cdot \vdash \langle \cdot \rangle\ M\ \langle \Sigma \rangle.$
%A \emph{program of unit type} is a program, such that $\cdot \vdash \langle \cdot \rangle\ M\ \langle \cdot \rangle.$
%An \emph{initial configuration of unit type} is a configuration $\cdot ; \cdot ; \cdot ; \cdot \vdash (M\ |\ \cdot\ |\ \cdot\ |\ 1)$
%and we shall use $\mathcal P$ to range over such configurations.

%A \emph{term of unit type} is a term $M$, such that $\Pi \vdash \langle \Gamma \rangle\ M\ \langle \cdot \rangle.$
%A \emph{configuration of unit type} is a configuration $\Pi ; \Gamma ; \cdot ; Q \vdash (M\ |\ V\ |\ \Omega\ |\ \rho),$
%i.e., a configuration whose term is of unit type. Any such configuration $\mathcal C$ is interpreted as a 
%completely-positive map $\lrb{\mathcal C} : \mathbb C \to \mathbb C$ and in this situation we define
%$\lrb{\mathcal C}_1 \coloneqq \lrb{\mathcal C}(1) \in \mathbb R_{\geq 0}.$
%Observe that any terminal configuration of unit type is necessarily of the form $(\sskip\ |\ \cdot\ |\ \Omega\ |\ r),$
%where $r \in \mathbb R_{\geq 0}$ (because $r$ is a $1 \times 1$ density matrix). Then $\lrb{(\sskip\ |\ \cdot\ |\ \Omega\ |\ r)}_1 = r.$
%A configuration is \emph{normalised} whenever $\trace \rho = 1$.

%%%%%%%%%%%%%%%%%%%%%%%%%%%%%%%%%%%%%%
\subsection{Proof Strategy}
%%%%%%%%%%%%%%%%%%%%%%%%%%%%%%%%%%%%%%

Our proof strategy is based on that of~\cite{quantitative} where the authors
establish computational adequacy at unit type.

In~\secref{sub:language-extension} we extend the QPL language with finitary (or
bounded) primitives for recursion and we update the operational and
denotational semantics in an appropriate way, so that the established language
properties continue to hold (Theorem~\ref{thm:extended-properties}).
In~\secref{sub:adequacy-proof} we prove adequacy via the following steps: we
show that any finitary configuration is strongly normalising
(Lemma~\ref{lem:finitary-strong-normalisation}) which then allows us to prove
invariance of the interpretation for finitary configurations with respect to
big-step reduction (Corollary~\ref{cor:finite-hard}); we show that the finitary
configurations approximate the ordinary configurations both
operationally (Lemma~\ref{lem:operational-approximation}) and
denotationally (Lemma~\ref{lem:denotational-approximation}); using these results we
prove invariance of the denotational semantics with respect to big-step
reduction for the ordinary QPL language (Theorem~\ref{thm:big-step-invariance}).

%%%%%%%%%%%%%%%%%%%%%%%%%%%%%%%%%%%%%%
\subsection{Language Extension}\label{sub:language-extension}
%%%%%%%%%%%%%%%%%%%%%%%%%%%%%%%%%%%%%%

We extend the syntax of QPL by adding \emph{indexed procedure names}. We use $f^n, g^n$ with $n \in \mathbb N$ to range over indexed procedure names.
An indexed procedure name $f^n$ can be used at most $n$ times in the operational semantics (see below), whereas an (ordinary) procedure name $f$ may be used any number of times.
We write $f^n : A \to B$ to indicate that the indexed procedure name has input type $A$ and output type $B$. Procedure contexts are now also allowed to contain indexed procedure names, in addition to standard procedure names.
Procedure contexts which contain a procedure name $f^n$ cannot contain the unindexed procedure name $f$ or any other other indexed procedure names $f^m$ with $m \neq n$.

The term language is extended by adding new terms, some of which are indexed by natural numbers $n \geq 0.$ These terms are governed by the following formation rules:

  \centerline{
  % (bottom)
  \begin{bprooftree}
  \AxiomC{\phantom{$\Pi \vdash \langle$}}
  \UnaryInfC{$\Pi \vdash \langle \Gamma \rangle\ \ZERO_{\Gamma, \Sigma}\ \langle  \Sigma \rangle$}
  \end{bprooftree}
  % (while^n)
  \begin{bprooftree}
  \AxiomC{$\Pi \vdash \langle \Gamma, b: \textbf{bit} \rangle\ M\ \langle  \Gamma, b: \textbf{bit} \rangle$}
  \UnaryInfC{$\Pi \vdash \langle \Gamma, b: \textbf{bit} \rangle\ 
    \textbf{while}^n\ b\ \textbf{do}\ M\ \langle \Gamma, b: \textbf{bit} \rangle$}
  \end{bprooftree}
  }
  \vspace{4mm}
  \centerline{
    % (proc^n)
    \begin{bprooftree}
    \AxiomC{$\Pi, f^n: A \to B \vdash \langle x:A \rangle\ M\ \langle y:B \rangle$}
    \UnaryInfC{$\Pi \vdash \langle \Gamma \rangle\ \textbf{proc}\ f^n ::\ x:A \to y:B\ \{M\}\ \langle \Gamma \rangle$}
    \end{bprooftree}
    % (call^n)
    \begin{bprooftree}
    \AxiomC{\phantom{$f^n$}}
    \UnaryInfC{$\Pi, f^n: A \to B \vdash \langle \Gamma, x: A \rangle\ y = f^n(x)\ \langle \Gamma, y:B \rangle$}
    \end{bprooftree}
  }
  \vspace{4mm}
The formation rules for procedure stores are extended by adding a rule for indexed procedures:
  \[
    \begin{bprooftree}
    \AxiomC{$\Pi \vdash \Omega$}
    \AxiomC{$\Pi, f^n : A \to B \vdash \langle x:A \rangle\ M\ \langle y:B \rangle$}
    \BinaryInfC{$\Pi, f^n : A \to B \vdash  \Omega, f^n :: x:A \to y:B\ \{M\}$}
    \end{bprooftree}
  \]
The notion of a well-formed configuration is defined in the same way as before provided one uses the updated notions of well-formed terms and well-formed procedure stores.
The operational semantics is extended by adding the rules:

\vspace{4mm}
{
  \small{
  \centerline{
    % (zero)
    \begin{bprooftree}
    \AxiomC{}
    \UnaryInfC{$ 
      (\ZERO ; P\ |\ V \ |\ \Omega\ |\ \rho)
      \leadsto
      (\ZERO\ |\ V \ |\ \Omega\ |\ \rho)
    $}
    \end{bprooftree}
    % (while 0)
    \begin{bprooftree}
    \AxiomC{}
    \UnaryInfC{$ 
      (\textbf{while}^{0}\ b\ \textbf{do}\ M\ |\ V, b = v\ |\ \Omega\ |\ \rho)
      \leadsto
      (\ZERO\ |\ V, b = v \ |\ \Omega\ |\ \rho)
    $}
    \end{bprooftree}
  }
  \vspace{4mm}
  \centerline{
    % (while n+1)
    \begin{bprooftree}
    \AxiomC{}
    \UnaryInfC{$ 
      (\textbf{while}^{n+1}\ b\ \textbf{do}\ M\ |\ V, b = v\ |\ \Omega\ |\ \rho)
      \leadsto
      (\textbf{if}\ b\ \textbf{then}\ \{M; \textbf{while}^n\ b\ \textbf{do}\ M\}\ \textbf{else}\ \sskip\ |\ V, b = v \ |\ \Omega\ |\ \rho)
    $}
    \end{bprooftree}
  }
  \vspace{4mm}
  \centerline{
    % (proc)
    \begin{bprooftree}
    \AxiomC{}
    \UnaryInfC{$
      (\textbf{proc}\ f^n::\ x:A \to y:B\ \{M\}\ |\ V\ |\ \Omega\ |\ \rho)
      \leadsto
      (\sskip\ |\ V\ |\ \Omega, f^n::\ x:A \to y:B\ \{M\}\ |\ \rho)
    $}
    \end{bprooftree}
  }
  \vspace{4mm}
  \centerline{
    % (call)
    \begin{bprooftree}
    \AxiomC{}
    \UnaryInfC{$
      (y_1 = f^{0}(x_1)\ |\ V, x_1 = v\ |\ \Omega, f^0::\ x_2:A \to y_2:B\ \{M\}\ |\ \rho)
      \leadsto
      (\ZERO\ |\ V, x_1 =v\ |\ \Omega, f^0::\ x_2:A \to y_2:B\ \{M\}\ |\ \rho)
    $}
    \end{bprooftree}
  }
  \vspace{4mm}
  }
  \footnotesize{
  \centerline{
    % (call)
    \begin{bprooftree}
    \AxiomC{}
    \UnaryInfC{$
      (y_1 = f^{n+1}(x_1)\ |\ V, x_1 = v\ |\ \Omega, f^{n+1}::\ x_2:A \to y_2:B\ \{M\}\ |\ \rho)
      \leadsto
      (M_{\alpha_1}\ |\ V, x_1 =v\ |\ \Omega_{\alpha}, f^n::\ x_2 : A \to y_2 : B\ \{M_{\alpha_2}\}\ |\ \rho)
    $}
    \end{bprooftree}
  }
  }
}
\vspace{4mm}

where in the last rule, $\Omega_\alpha$ is the procedure store obtained from $\Omega$ by renaming $f^{n+1}$ to $f^n$ within all procedure bodies contained in $\Omega$. Similarly, $M_{\alpha_2}$ is the result of renaming $f^{n+1}$ to $f^n$ within $M$.
The term $M_{\alpha_1}$ is obtained from $M$ by also renaming $f^{n+1}$ to $f^n$ and then doing the same term variable renamings as in the unindexed (call) rule (see \secref{sec:operational}).

We extend the denotational semantics by adding interpretations for the newly added terms:

{
\small
% (unit)
\centerline{
$
\lrb{\Pi \vdash \langle \Gamma \rangle\  \ZERO_{\Gamma, \Sigma}\ \langle \Sigma \rangle} \coloneqq \pi \mapsto \left(
      \lrb{\Gamma}
        \xrightarrow{\ZERO_{\lrb \Gamma, \lrb \Sigma}}
      \lrb{\Sigma}
      \right) 
$
}
% (while)
\centerline{
$
\lrb{\Pi \vdash \langle \Gamma, b: \textbf{bit} \rangle\ \textbf{while}^n\ b\ \textbf{do}\ {M}\ \langle \Gamma, b: \textbf{bit} \rangle} \coloneqq \pi \mapsto \left(
      \lrb{\Gamma} \otimes \textbf{bit} 
        \xrightarrow{W^n_{\lrb M(\pi)}(\ZERO) }
      \lrb{\Gamma} \otimes \textbf{bit} \right)
$
}
% (proc)
\centerline{
$ \lrb{\Pi \vdash \langle \Gamma \rangle\ \textbf{proc}\ f^n ::\ x:A \to y:B\ \{M\}\ \langle \Gamma \rangle} \coloneqq \pi \mapsto\ \left( \lrb{\Gamma} \xrightarrow{\id} \lrb{\Gamma} \right) $
}
% (call)
\centerline{
$ \lrb{\Pi, f^n: A \to B \vdash \langle \Gamma, x: A \rangle\ y = f^n(x)\ \langle \Gamma, y:B \rangle} \coloneqq (\pi, f) \mapsto \left( \lrb \Gamma \otimes \lrb A \xrightarrow{ \id \otimes f} \lrb \Gamma \otimes \lrb B \right) $
}
}

Notice that the interpretations of the indexed (proc) and (call) term rules
contain no fixpoint calculations, just like their unindexed counterparts.
Similarly to the unindexed case, the non-trivial calculations take place in the
interpretation of the procedure store rules. The newly added rule for formation
of procedure stores is interpreted in the following way:
\begin{align*}
\lrb{\Pi, f^n : A \to B \vdash \Omega, f^n :: x:A \to y:B\ \{M\}} = \left(\lrb \Omega, \left( \lrb{M}(\lrb \Omega, -) \right)^n(\ZERO) \right).
\end{align*}
The interpretation of configurations is now defined in the same way as before, but also using the newly added rules and updated notions.
Finally, the notion of terminal configuration is also updated to include configurations of the form $(\ZERO\ |\ V \ |\ \Omega\ |\ \rho)$.
With this in place, all language properties stated in the previous sections also hold true for the extended language.

\begin{theorem}\label{thm:extended-properties}
Subject Reduction (Theorem~\ref{thm:subject}), Progress (Theorem~\ref{thm:progress}) and Soundness (Theorem~\ref{thm:soundness}) also hold true for the extended language (using the updated language notions).
\end{theorem}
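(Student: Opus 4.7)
The plan is to extend the proofs of Subject Reduction, Progress, and Soundness by rule induction, treating only the newly added reduction rules since the original rules are unchanged. Subject Reduction and Progress will be straightforward syntactic checks, while Soundness requires unfolding the definitions of the bounded while operator $W^n$ and of the procedure store interpretation and matching them against the new operational rules. The key preliminary observation I will use throughout is a syntactic renaming lemma: if $\Pi, f^{n+1} : A \to B \vdash \langle \Gamma \rangle\ M\ \langle \Sigma \rangle$, then renaming every occurrence of $f^{n+1}$ in $M$ to $f^n$ yields a well-formed derivation $\Pi, f^n : A \to B \vdash \langle \Gamma \rangle\ M_{\alpha_2}\ \langle \Sigma \rangle$, and the same holds for procedure stores. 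Moreover, the denotation of the renamed term, regarded as a function of its procedure context, is literally the same function — this follows because $f^{n+1}$ and $f^n$ play identical semantic roles in the $\lrb \cdot$ function (they only act as dcpo-indices into $\lrb{\Pi}$).

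For Subject Reduction I will inspect each new reduction rule. The $\mathbf{0}; P \leadsto \mathbf{0}$ rule is typed because $\mathbf{0}_{\Gamma,\Sigma}; P$ with $P : \Sigma \to \Sigma'$ is well-typed at $\Gamma \to \Sigma'$ and so is $\mathbf{0}_{\Gamma, \Sigma'}$. The $\textbf{while}^0$ and $\textbf{while}^{n+1}$ rules reduce to terms whose typing is immediate from the induction hypothesis and the typing of $\mathbf{0}$, respectively of $\textbf{while}^n$. The indexed (proc) rule adds a well-formed entry to the procedure store. The $f^0$ call reduces to $\mathbf{0}$ at the appropriate typing. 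The $f^{n+1}$ call rule is the only nontrivial case: here one uses the renaming lemma above to conclude that $\Omega_\alpha$ and $M_{\alpha_2}$ are well-formed with $f^n$ in place of $f^{n+1}$, and the term-variable $\alpha$-renaming in $M_{\alpha_1}$ is handled exactly as in the unindexed (call) rule.

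For Progress, the new terms are either terminal ($\mathbf{0}$) or reduce deterministically: $\mathbf{0}; P$, $\textbf{while}^0$, $\textbf{while}^{n+1}$, indexed \textbf{proc}, and indexed calls all have a unique applicable reduction rule. Since none of these rules introduces branching, the bound of at most two reducts (coming exclusively from measurement) is preserved. The trace identity $\trace{\mathcal C} = \sum_{\mathcal C \leadsto \mathcal D} \trace{\mathcal D}$ is trivial here because all new reductions are deterministic and preserve the density matrix.

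Soundness is the only non-routine part. For $\textbf{while}^0$ the reduct $\mathbf{0}$ has interpretation $\mathbf{0}_{\lrb \Gamma \otimes \bit, \lrb \Gamma \otimes \bit} = W^0_{\lrb M(\pi)}(\mathbf{0})$, by the convention that the zeroth iterate is $\mathbf{0}$. For $\textbf{while}^{n+1}$ one computes $W^{n+1}_{f}(\mathbf{0}) = W_f(W^n_f(\mathbf{0}))$, and by direct calculation $W_f(g)$ matches precisely the interpretation of $\textbf{if}\ b\ \textbf{then}\ \{M; \textbf{while}^n\ b\ \textbf{do}\ M\}\ \textbf{else}\ \sskip$ (with $g$ playing the role of the recursive tail), reusing the coproduct/distributivity computation already made in the unindexed \textbf{while} case. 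The main obstacle is the $f^{n+1}$ (call) rule, where I must show
\[
\lrb{f^{n+1}(x_1)\ \text{in context}} \circ \lrb V \circ \mathrm{new}_\rho^\ddagger \;=\; \lrb{M_{\alpha_1}\ \text{in updated context}} \circ \lrb V \circ \mathrm{new}_\rho^\ddagger.
\]
Unfolding the left-hand side, the procedure store interpretation supplies $(\lrb{M}(\lrb \Omega, -))^{n+1}(\mathbf{0}) = \lrb{M}\bigl(\lrb \Omega,\,(\lrb M(\lrb \Omega, -))^n(\mathbf{0})\bigr)$, which by the renaming lemma equals $\lrb{M_{\alpha_2}}(\lrb{\Omega_\alpha}, g^n)$ for $g^n$ the $n$-th iterate, i.e.\ the interpretation of $f^n$ in $\Omega_\alpha, f^n :: \ldots\ \{M_{\alpha_2}\}$. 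A substitution/$\alpha$-renaming lemma on the term variable side (already implicit in the unindexed soundness proof) then turns this into $\lrb{M_{\alpha_1}}$ applied to the updated configuration. The indexed (proc) rule is handled exactly as the unindexed one, since its term interpretation is the identity and the entire semantic content is absorbed into the (extended) procedure store interpretation. The principal technical obstacle is thus the bookkeeping around combining the two $\alpha$-renamings (procedure-name and term-variable) with the fixpoint unrolling, but both are essentially syntactic once the renaming lemma is established.
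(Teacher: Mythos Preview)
The paper does not supply a proof of this theorem at all; it is simply stated and used. Your case-by-case sketch is a correct and appropriately detailed elaboration of what the omitted argument looks like, and the ingredients you single out (the syntactic renaming lemma for $f^{n+1}\mapsto f^n$, the unrolling $W^{n+1}_f(\ZERO)=W_f(W^n_f(\ZERO))$, and the fact that $\id\otimes\ZERO=\ZERO$ by $\cpobs$-enrichment of the tensor) are exactly the ones needed. One small point: you do not explicitly treat the soundness case for the $\ZERO;P$ rule or the $f^0$ call, but both are immediate from the zero-morphism property ($g\circ\ZERO=\ZERO$ and $\id\otimes\ZERO=\ZERO$), so nothing is missing.
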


%%%%%%%%%%%%%%%%%%%%%%%%%%%%%%%%%%%%%%
\subsection{The Proof}\label{sub:adequacy-proof}
%%%%%%%%%%%%%%%%%%%%%%%%%%%%%%%%%%%%%%

A term in the extended language is called \emph{finitary} if it does not
contain any unindexed procedure names or while loops. A term is called
\emph{ordinary} if it does not contain any indexed procedure names, indexed while
loops or $\ZERO_{\Gamma, \Sigma}$ subterms. In other words, an ordinary term is simply a term in the base QPL
language as described in~\secref{sec:syntax}. Similarly, a \emph{finitary (ordinary) procedure store} $\Omega$ is a
procedure store where each procedure name of $\Omega$ is indexed
(unindexed) and such that its procedure body is a finitary (ordinary) term.  A
\emph{finitary (ordinary) configuration} is a configuration $(M\ |\ V \ |\ \Omega\ |\ \rho)$ where $M$ and $\Omega$ are finitary (ordinary).
A finitary configuration is true to its name, because all of its reduction sequences terminate in a finite number of steps.

\begin{lemma}(Finitary Strong Normalisation)\label{lem:finitary-strong-normalisation}
For any finitary configuration $\mathcal C$, there exists $n \in \mathbb N$, such that the length of every reduction sequence from $\mathcal C$ is at most $n$.
\end{lemma}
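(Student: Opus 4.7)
My plan is to combine a well-founded measure argument with K\"onig's Lemma. Since Progress (Theorem~\ref{thm:progress}, extended as in Theorem~\ref{thm:extended-properties}) guarantees that every non-terminal configuration has at most two one-step reducts, the reduction tree rooted at $\mathcal{C}$ is finitely branching. Hence, by K\"onig's Lemma it suffices to show that every individual maximal reduction sequence from $\mathcal{C}$ has finite length; the uniform natural-number bound $n$ on all sequences then follows automatically.

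To establish that every reduction sequence is finite, I would define a measure $\mu$ on finitary configurations valued in a well-founded order (for instance, a lexicographic tuple of natural numbers, or equivalently an ordinal below $\omega^\omega$), such that $\mathcal{C} \leadsto \mathcal{D}$ implies $\mu(\mathcal{C}) > \mu(\mathcal{D})$. The measure should track three sources of progress: a weighted sum $\sigma$ of the procedure indices occurring across $\Omega$ \emph{and} across every pending $\textbf{proc}$-subterm of $M$; a weighted sum $\tau$ of the $\textbf{while}^n$-indices occurring in $M$; and the ordinary structural size $s$ of $M$. Taking $\mu \coloneqq (\sigma, \tau, s)$ in lexicographic order, the intended verification is: structural rules (pair, case, fold, unfold, \ldots) strictly shrink $s$; unrolling $\textbf{while}^{n+1}$ strictly decreases $\tau$, provided $\tau$ weights each while-index multiplicatively by the size of its body so as to overcome the duplication of that body in the $\textbf{if}$-branches; the (call) rule strictly decreases $\sigma$, since the index of $f$ in $\Omega$ drops from $n+1$ to $n$; and the (proc) rule strictly decreases $s$ while preserving $\sigma$, because a pending $\textbf{proc}\ f^n$ in $M$ contributes to $\sigma$ exactly as the corresponding entry does once moved into $\Omega$. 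The degenerate cases $\textbf{while}^{0} \leadsto \ZERO$ and $f^{0}(\cdot) \leadsto \ZERO$ simply reduce $s$.

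The main obstacle is the (call) rule, where the small call $y = f^{n+1}(x)$ is replaced by the potentially much larger body $M_\alpha$, which may itself contain further nested $\textbf{while}$-loops and $\textbf{proc}$-subterms. The measure must be arranged so that the unit drop of $f$'s index dominates any growth in the remaining components; crucially, nested proc-subterms inside $M_f$ must not inflate $\sigma$ upon being instantiated. One way to achieve this is to weight the contribution of each indexed construct by a factor exponential in the nesting depth of the enclosing procedure bodies, guaranteeing that the $\sigma$-contribution of a freshly instantiated copy of $M_f$ is strictly smaller than the pre-call contribution of $f^{n+1}$. Once $\mu$ is correctly defined and the rule-by-rule verification is complete, every reduction sequence from $\mathcal{C}$ is finite, and K\"onig's Lemma delivers the uniform bound $n$.
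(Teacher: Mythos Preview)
The paper states this lemma without proof, so there is no argument to compare your proposal against. Your reduction via K\"onig's Lemma is correct: the extended Progress theorem bounds branching by two, so a uniform bound on sequence length follows once every individual reduction sequence is shown to terminate, and a well-founded lexicographic measure is the natural way to establish that.

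The only genuine gap is your handling of (call). The weighting ``exponential in the nesting depth of the enclosing procedure bodies'' does not do the job as stated: when $f^{n+1}$ is called, a fresh copy of the body $M_f$ becomes the current term at depth zero, and any $\textbf{proc}\ g^m$ inside it now contributes its full weight $m$, which can exceed the single-unit drop $n{+}1\to n$ regardless of how nesting depth is weighted (the copy still in the store contributes exactly as before, so nothing is freed up there either). Two clean repairs are available. First, you can define $\sigma$ by recursion along the linear order of the procedure store---the body of $f_i$ can only call $f_1,\ldots,f_i$, and self-calls strictly decrease the index---letting the contribution of each $f_i^{n_i}$ already absorb the cost of $n_i$ successive unfoldings of its body; then (call) strictly decreases $\sigma$ by construction. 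Second, and more directly, the typing rules for (seq) and (proc) guarantee that any procedure introduced by a $\textbf{proc}$ subterm during execution is never subsequently callable (the continuation and all existing procedure bodies are typed in a context not containing it), so only the procedures present in the \emph{initial} store are ever called, and the total number of (call) steps is bounded outright by the sum of their initial indices. With that observation your remaining components $\tau$ and $s$ suffice to bound the non-(call) steps between successive calls.
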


%Using this lemma, the next corollary follows easily.

\begin{corollary}\label{cor:finite-hard}
For any finitary configuration $\mathcal C$, we have
$ \lrb{\mathcal C} = \lrb{\mathcal C \Downarrow} .$
\end{corollary}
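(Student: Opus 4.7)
Corollary~\ref{cor:easy-part-adequacy} already gives $\lrb{\mathcal C} \geq \lrb{\mathcal C \Downarrow}$, so only the reverse inequality is at stake, and we shall in fact prove equality. By Lemma~\ref{lem:finitary-strong-normalisation}, to each finitary configuration $\mathcal C$ we may attach the natural number $N(\mathcal C)$ equal to the maximum length of any reduction sequence starting at $\mathcal C$. The proof proceeds by strong induction on $N(\mathcal C)$.

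If $N(\mathcal C) = 0$, then $\mathcal C$ is terminal, $\TerSeq_{\leq n}(\mathcal C) = \{(\mathcal C)\}$ for all $n$, and the identity $\lrb{\mathcal C \Downarrow} = \lrb{\mathcal C}$ is immediate. If $N(\mathcal C) = k+1 > 0$, then $\mathcal C$ is non-terminal, so by Soundness (Theorem~\ref{thm:soundness}, in its extended form guaranteed by Theorem~\ref{thm:extended-properties}) we have $\lrb{\mathcal C} = \sum_{\mathcal C \leadsto \mathcal D} \lrb{\mathcal D}$. Each reduct $\mathcal D$ satisfies $N(\mathcal D) \leq k$ and is itself finitary (see the closure observation below), so by the induction hypothesis $\lrb{\mathcal D} = \lrb{\mathcal D \Downarrow}$, and the defining chain of $\lrb{\mathcal D \Downarrow}$ stabilises at index $k$, so $\lrb{\mathcal D \Downarrow} = \sum_{r \in \TerSeq_{\leq k}(\mathcal D)} \lrb{\End(r)}$. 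Since $\mathcal C$ is non-terminal, every terminating reduction sequence of length at most $k+1$ from $\mathcal C$ factors uniquely as a first step $\mathcal C \leadsto \mathcal D$ followed by a terminating sequence of length at most $k$ from $\mathcal D$, yielding
\[ \lrb{\mathcal C} \;=\; \sum_{\mathcal C \leadsto \mathcal D} \sum_{r \in \TerSeq_{\leq k}(\mathcal D)} \lrb{\End(r)} \;=\; \sum_{r' \in \TerSeq_{\leq k+1}(\mathcal C)} \lrb{\End(r')} \;=\; \lrb{\mathcal C \Downarrow}, \]
where the last equality uses that the chain defining $\lrb{\mathcal C \Downarrow}$ stabilises at index $k+1 = N(\mathcal C)$.

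\textbf{Main obstacle.} The only real work is the closure lemma used above: if $\mathcal C$ is finitary and $\mathcal C \leadsto \mathcal D$, then $\mathcal D$ is finitary. This is a routine case-analysis on the reduction rules of~\secref{sec:operational} and of~\secref{sub:language-extension}, but one must check carefully the rules that rewrite the finitary primitives. The unfolding rule for $\textbf{while}^{n+1}$ produces a $\textbf{while}^n$, which is still finitary. The rule $(\ZERO; P) \leadsto \ZERO$ and the rule for $\textbf{while}^0$ both produce $\ZERO$, which is finitary. In the indexed call rule, the body $M$ and the procedure store are $\alpha$-renamed so that every occurrence of $f^{n+1}$ becomes $f^n$; no unindexed procedure name, no unindexed while loop, and no new $\ZERO$ subterm slips in during the renaming, so finitarity is preserved. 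Combined with the induction above, this completes the proof. \qed
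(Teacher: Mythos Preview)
Your proof is correct and follows essentially the same route as the paper: both use Lemma~\ref{lem:finitary-strong-normalisation} to bound reduction length and then induct on that bound via Soundness (Theorem~\ref{thm:extended-properties}), concluding that the chain defining $\lrb{\mathcal C \Downarrow}$ stabilises at $\sum_{r \in \TerSeq_{\leq N(\mathcal C)}(\mathcal C)} \lrb{\End(r)} = \lrb{\mathcal C}$. Your explicit closure observation (finitary configurations reduce to finitary configurations) is left implicit in the paper's terse induction; one cosmetic nit is that $\ZERO$ subterms do not violate finitarity---only unindexed procedure names and unindexed while loops do---so that clause in your case analysis is superfluous.
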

\begin{proof}
Using Lemma~\ref{lem:finitary-strong-normalisation},
let $n \in \mathbb N$ be the smallest number such that the length of every reduction sequence from $\mathcal C$ is at most $n$. It follows that
\[  \lrb{\mathcal C} = \sum_{r \in \TerSeq_{\leq n} (\mathcal C)} \lrb{\End(r)}, \]
which can be easily shown by induction on $n$ using Theorem~\ref{thm:extended-properties}. The proof concludes by recognising that
\begin{align*}
\sum_{r \in \TerSeq_{\leq n} (\mathcal C)} \lrb{\End(r)} &= \bigvee_{n=0}^\infty \sum_{r \in \TerSeq_{\leq n} (\mathcal C)} \lrb{\End(r)} \\
& = \lrb{\mathcal C \Downarrow}
\end{align*}
\qed\end{proof}

% In the extended language, every terminal configuration $\mathcal T$ is either a $\sskip$-configuration or a $\ZERO$-configuration. In the latter case, $\lrb{ \mathcal T } = \ZERO$ and therefore $\ZERO$-configurations have no effect in Corollary~\ref{cor:finite-hard}.
% Thus, we arrive at the following corollary.
%
% \begin{corollary}\label{cor:finite-hard-skip}
% For any finitary configuration $\mathcal C$, we have
% $ \lrb{\mathcal C} = \inlinesum_{\substack{r \in \TerSeq_\infty(\mathcal C) \\ \End(r)\ \sskip\text{-conf.}}} \lrb{\End(r)} .$
% \end{corollary}

Next, we define an approximation relation $(- \blacktriangleleft -)$ between finitary terms and ordinary terms. It is defined to be the smallest relation satisfying the following rules:

  \vspace{4mm}
  \centerline{
    % (proc)
    \begin{bprooftree}
    \AxiomC{$M' \blacktriangleleft M$}
    \UnaryInfC{$\textbf{proc}\ f^n ::\ x:A \to y:B\ \{M'\} \blacktriangleleft \textbf{proc}\ f ::\ x:A \to y:B\ \{M\}$}
    \end{bprooftree}
  }
  \vspace{4mm}
  \centerline{
    % (call)
    \begin{bprooftree}
    \AxiomC{{\color{white}{M'}}}
    \UnaryInfC{$y = f^n(x) \blacktriangleleft y = f(x)$}
    \end{bprooftree}
    % (while)
    \begin{bprooftree}
    \AxiomC{$M' \blacktriangleleft M$}
    \UnaryInfC{$ \textbf{while}^n\ b\ \textbf{do}\ M' \blacktriangleleft \textbf{while}\ b\ \textbf{do}\ M $}
    \end{bprooftree}
    % (seq)
    \begin{bprooftree}
    \AxiomC{$M' \blacktriangleleft M$}
    \AxiomC{$N' \blacktriangleleft N$}
    \BinaryInfC{$M';N' \blacktriangleleft M;N$}
    \end{bprooftree}
  }
  \vspace{4mm}
  % row 4
  \centerline{
    % (case)
    \begin{bprooftree}
    \AxiomC{$M_1' \blacktriangleleft M_1$}
    \AxiomC{$M_2' \blacktriangleleft M_2$}
    \BinaryInfC{$\ccase\ y\ \textbf{of}\ \{\lleft_{A,B}\ x_1 \to M_1'\ |\ \rright_{A,B}\ x_2 \to M_2'\ \} \blacktriangleleft \ccase\ y\ \textbf{of}\ \{\lleft_{A,B}\ x_1 \to M_1\ |\ \rright_{A,B}\ x_2 \to M_2\ \}$}
    \end{bprooftree}
  }
  \vspace{4mm}
  \centerline{
    % (rest)
    \begin{bprooftree}
    \AxiomC{}
    \RightLabel{all other terms except $\ZERO.$} \UnaryInfC{$M \blacktriangleleft M$}
    \end{bprooftree}
  }
  \vspace{4mm}

We also define an approximation relation $(- \vartriangleleft -)$ between finitary procedure stores and ordinary procedure stores. It is defined to be the smallest relation satisfying the following rules:
\[
  \begin{bprooftree}
  \AxiomC{\phantom{$\Omega'$}}
  \UnaryInfC{$\cdot \vartriangleleft \cdot$}
  \end{bprooftree}
  \qquad
  \begin{bprooftree}
  \AxiomC{$\Omega' \vartriangleleft \Omega$}
  \AxiomC{$M' \blacktriangleleft M$}
  \BinaryInfC{$\Omega', f^n :: x:A \to y:B\ \{M'\} \vartriangleleft \Omega, f :: x:A \to y:B\ \{M\}$}
  \end{bprooftree}
\]
Finally, we define an approximation relation $(- \sqsubset - )$ between finitary configurations and ordinary configurations. It is defined to be the smallest relation satisfying the following rule:
\[
  \begin{bprooftree}
  \AxiomC{$M' \blacktriangleleft M$}
  \AxiomC{$\Omega' \vartriangleleft \Omega$}
  \BinaryInfC{$ (M'\ |\ V \ |\ \Omega'\ |\ \rho) \sqsubset (M\ |\ V \ |\ \Omega\ |\ \rho)$}
  \end{bprooftree}
\]

\begin{remark}
By definition, if $ (M'\ |\ V \ |\ \Omega'\ |\ \rho) \sqsubset (M\ |\ V \ |\ \Omega\ |\ \rho)$, then $M$ and $M'$ do not contain any $\ZERO$ subterms, nor do $\Omega$ and $\Omega'$ in any of their procedure bodies.
\end{remark}

We proceed with a lemma which shows the relation $( - \sqsubset -) $ approximates the ordinary operational semantics. But first, we introduce two new notions.
Every terminal finitary configuration $\mathcal T$ is either of the form $(\sskip\ |\ V \ |\ \Omega\ |\ \rho)$ or $(\ZERO\ |\ V \ |\ \Omega\ |\ \rho)$.
In the former case we say $\mathcal T$ is a $\sskip$-\emph{configuration} and in the latter case we say $\mathcal T$ is a $\ZERO$-\emph{configuration}.

\begin{lemma}\label{lem:operational-approximation}
Let $\mathcal C_0$ be an ordinary configuration and $C_0'$ a finitary configuration with $\mathcal C_0' \sqsubset \mathcal C_0$.
Let $r' = \left( \mathcal C_0' \leadsto \cdots \leadsto \mathcal C_n' \right)$ be a terminal reduction sequence, where $\mathcal C_n'$ is a $\sskip$-configuration.
Then there exists a unique terminal reduction sequence $r= \left( \mathcal C_0 \leadsto \cdots \leadsto \mathcal C_n \right)$ such that $\mathcal C_i' \sqsubset \mathcal C_i$,
which we will henceforth abbreviate by writing $r' \sqsubset r.$
\end{lemma}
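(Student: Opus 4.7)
The plan is to proceed by induction on the length $n$ of the finitary reduction sequence $r'$, showing at each step that $\mathcal C_i'$ approximates a uniquely determined $\mathcal C_i$ obtained from $\mathcal C_{i-1}$ by a single ordinary reduction step. The central observation that drives the argument is the following: if the finitary reduction sequence ever uses one of the two ``decrement-to-zero'' rules --- namely $\textbf{while}^0 \leadsto \ZERO$ or $y = f^0(x) \leadsto \ZERO$ --- then the resulting configuration has $\ZERO$ in evaluation position (either as the whole term or as the head of a sequence $\ZERO; P$, obtained by earlier applications of (seq2)), and a quick inspection of the operational rules shows that from such a state every reduction step preserves $\ZERO$ in evaluation position. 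Since the hypothesis forces $\mathcal C_n'$ to be a $\sskip$-configuration, neither of these two rules is ever used in $r'$; this is where the hypothesis is crucial. Moreover, no procedure body in any $\Omega_i'$ contains $\ZERO$, because procedure bodies are frozen once stored by (proc) and the initial $\Omega_0'$ is $\ZERO$-free by the definition of $\vartriangleleft$.

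For the base case $n = 0$, the configuration $\mathcal C_0'$ is itself a $\sskip$-configuration, and an inversion of the definition of $\blacktriangleleft$ shows that $\sskip$ approximates only itself (the ``all other terms'' clause is the only one whose left-hand side matches $\sskip$), so $\mathcal C_0$ is a $\sskip$-configuration as well and the trivial sequence $r = \mathcal C_0$ works. For the inductive step, I consider $\mathcal C_0' \leadsto \mathcal C_1'$ and argue by case analysis on the finitary rule used; by the observation above this rule is neither $\textbf{while}^0$ nor $f^0$. The common primitives (unit, discard, copy, qbit, measure, unitary, pair, unpair, left, right, case, fold, unfold, skip-seq, seq2) behave identically in the two semantics and preserve $\sqsubset$ trivially, including the quantum-state update. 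For $\textbf{while}^{n+1}$, the approximating ordinary term is $\textbf{while}\ b\ \textbf{do}\ M$ with $M' \blacktriangleleft M$, and both rules produce the corresponding ``if'' expansion whose subterms still approximate via the while and sequencing clauses of $\blacktriangleleft$. For $\textbf{proc}\ f^n$, both rules simply store the respective (approximating) procedure bodies, preserving $\vartriangleleft$. For $y = f^{n+1}(x)$, both rules reduce to an $\alpha$-variant of the stored body: the ordinary rule yields $M_\alpha$ via a term-variable renaming, while the finitary rule yields $M'_{\alpha_1}$ by the same renaming together with the re-indexing $f^{n+1} \mapsto f^n$; since $y = f^n(x) \blacktriangleleft y = f(x)$ holds for every $n$, we retain $M'_{\alpha_1} \blacktriangleleft M_\alpha$, and the analogous renaming on the store preserves $\vartriangleleft$. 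Uniqueness of $\mathcal C_i$ is immediate because the ordinary operational semantics is deterministic modulo the choice between measure0 and measure1, and that choice is fixed by which measurement rule the finitary step used.

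The main obstacle will be the (call) case, where a small auxiliary lemma is needed asserting that both $\blacktriangleleft$ and $\vartriangleleft$ are stable under the renamings employed by the call rule --- the substitution of input and output term variables and the re-indexing $f^{n+1} \mapsto f^n$ within terms and procedure bodies. This amounts to a straightforward induction on the approximation relations, but it is the only place where the argument is not simply a direct inspection of matching rules in the two systems.
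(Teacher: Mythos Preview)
Your proposal is correct and follows the same approach as the paper's proof: induction on $n$ with the base case trivial and the step case handled by case distinction on the shape of the term component of $\mathcal C_0'$. Your write-up simply supplies the details that the paper leaves implicit, including the key observation that the $\textbf{while}^0$ and $f^0$ rules can never fire along a sequence ending in a $\sskip$-configuration, and the stability of $\blacktriangleleft$ and $\vartriangleleft$ under the renamings in the (call) rule.
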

\begin{proof}
Let $\mathcal C_0' = (M'\ |\ V'\ |\ \Omega'\ |\ \rho')$. The proof follows by induction on $n$. The base case $n=0$ is trivial and the step case follows by case distinction on $M'$.
\qed\end{proof}

In other words, the above lemma shows that lockstep execution occurs between any ordinary
reduction sequence and any one of its approximating finitary reduction
sequences, provided the latter terminates in the ordinary sense.
This allows us to establish the following corollary.

\begin{corollary}\label{cor:inequal}
Let $\mathcal C$ and $\mathcal C'$ be two configurations with $\mathcal C' \sqsubset \mathcal C.$
Then
\[ \lrb{\mathcal C' \Downarrow} \leq \lrb{\mathcal C \Downarrow}. \]
\end{corollary}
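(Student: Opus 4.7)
The plan is to prove the inequality level-by-level along the chain of partial sums. Write $S_n^{\mathcal D} \coloneqq \sum_{r \in \TerSeq_{\leq n}(\mathcal D)} \lrb{\End(r)}$, so that $\lrb{\mathcal C' \Downarrow} = \bigvee_n S_n^{\mathcal C'}$ and analogously for $\mathcal C$; by continuity of the order it suffices to produce, for each $n$, an inequality $S_n^{\mathcal C'} \leq S_n^{\mathcal C}$, since $S_n^{\mathcal C} \leq \lrb{\mathcal C \Downarrow}$. Both sets $\TerSeq_{\leq n}(\mathcal D)$ are finite by Progress (Theorem~\ref{thm:progress}), so all sums in sight are finite sums of NCPSU-maps.

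First I would split $\TerSeq_{\leq n}(\mathcal C')$ according to whether the endpoint is a $\sskip$-configuration or a $\ZERO$-configuration. The interpretation of any configuration $(\ZERO_{\Gamma,\Sigma}\ |\ V\ |\ \Omega'\ |\ \rho)$ factors through the zero morphism $\ZERO_{\lrb \Gamma, \lrb \Sigma}$ and is therefore the zero element of $\CC(I,\lrb\Sigma)$, so those summands can be discarded without changing $S_n^{\mathcal C'}$. For the remaining $\sskip$-terminating sequences, Lemma~\ref{lem:operational-approximation} provides a canonical lifting $\Phi$ sending each such $r'$ to the unique ordinary terminal sequence $r \in \TerSeq_{\leq n}(\mathcal C)$ with $r' \sqsubset r$.

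Two properties of $\Phi$ would close the argument. First, $\Phi$ is injective: two finitary sequences that begin at the same $\mathcal C'$ and later diverge must do so at a nondeterministic step, and inspection of the reduction rules shows the only source of nondeterminism is the two measurement rules, whose outcomes are mirrored on the ordinary side by exactly the same two reducts; thus any divergence in a finitary sequence is recorded as a divergence in its $\Phi$-image. Second, whenever $\mathcal T' \sqsubset \mathcal T$ with $\mathcal T'$ a $\sskip$-configuration, one has $\lrb{\mathcal T'} = \lrb{\mathcal T}$: by definition of $\sqsubset$ the two configurations share the same value assignment $V$ and density matrix $\rho$, and since $\lrb{\sskip}(\pi) = \id$ does not depend on the procedure-store argument, both interpretations collapse to $\lrb V \circ \mathrm{new}_\rho^\ddagger$.

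Assembling the pieces, $S_n^{\mathcal C'} = \sum_{r'} \lrb{\End(\Phi(r'))} \leq S_n^{\mathcal C}$, where the inequality uses injectivity of $\Phi$ together with the fact that the omitted summands on the right are positive in the $\cpobs$-enriched order on $\CC$. Taking suprema over $n$ gives $\lrb{\mathcal C' \Downarrow} \leq \lrb{\mathcal C \Downarrow}$. The main obstacle is the injectivity of $\Phi$: although intuitive, it demands a careful case analysis of the reduction rules, isolating measurement as the only genuine branching point and checking that the indexed while and call rules evolve in lockstep with their unindexed counterparts, so that the index count never produces two distinct finitary sequences with the same ordinary lift.
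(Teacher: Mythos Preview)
Your proposal is correct and follows essentially the same route as the paper's proof: reduce to the level-$n$ partial sums, discard the $\ZERO$-terminating sequences as contributing zero, use Lemma~\ref{lem:operational-approximation} to lift each remaining $\sskip$-terminating $r'$ to a unique ordinary $r$, observe that $\lrb{\End(r')} = \lrb{\End(r)}$ because the two terminal configurations share $V$ and $\rho$, and argue injectivity of the lift by noting that measurement is the only source of branching. Your closing remark that injectivity is the delicate point is apt, and your justification matches the paper's parenthetical observation almost verbatim.
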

\begin{proof}
It suffices to show for any $n \in \mathbb N$ that
\[ \sum_{r' \in \TerSeq_{\leq n} (\mathcal C')} \lrb{\End(r')} \leq \sum_{r \in \TerSeq_{\leq n} (\mathcal C)} \lrb{\End(r)} \]
from which the corollary follows by taking suprema.
Let $r' \in \TerSeq_{\leq n} (\mathcal C')$ be arbitrary. If $\End(r')$ is a $\ZERO$-configuration, then $\lrb{\End(r')} = \ZERO$ and thus it contributes nothing to the sum and may be ignored.
Otherwise $\End(r')$ is a $\sskip$-configuration and by Lemma~\ref{lem:operational-approximation}, there exists a unique $r \in \TerSeq_{\leq n} (\mathcal C)$, such that $r' \sqsubset r.$ In
this case $\End(r') \sqsubset \End(r)$ and since both of them are $\sskip$-configurations, it follows $\lrb{\End(r')} = \lrb{\End(r)}$. Moreover, if $r_1', r_2' \in \TerSeq_{\leq n} (\mathcal C')$ and also $r_1 \sqsupset r_1' \neq r_2' \sqsubset r_2$,
then $r_1 \neq r_2$
(to see this, observe that $r_1'$ and $r_2'$ must differ due to branching from a quantum measurement and thus so do $r_1$ and $r_2$).
Both sums are finite and we showed that for all non-trivial summands on the left there exist corresponding summands on the right which are equal to them, thus the required inequality holds.
\qed\end{proof}

The next lemma shows that ordinary configurations are also approximated by
finitary configurations in a denotational sense.

\begin{lemma}\label{lem:denotational-approximation}
For any ordinary term $M$, procedure store $\Omega$ and configuration $\mathcal C$:
\begin{align*}
\lrb{M} &= \bigvee_{M' \blacktriangleleft M} \lrb{M'} &
\lrb{\Omega} &= \bigvee_{\Omega' \vartriangleleft \Omega} \lrb{\Omega'} &
\lrb{\mathcal C} &= \bigvee_{\mathcal C' \sqsubset \mathcal C} \lrb{\mathcal C'} .
\end{align*}
\end{lemma}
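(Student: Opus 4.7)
The plan is a simultaneous induction on the structure of $M$ and on the length of $\Omega$, with the configuration equation following immediately from the other two. The central ingredient throughout is Scott-continuity: all operations used in the interpretation—composition, tensor, coproduct copairing $[-,-]$, the operator $W_{(-)}$, and iteration $h \mapsto h^n(\ZERO)$—are Scott-continuous in every argument thanks to the $\cpobs$-enrichment of $\CC$, and each function $\lrb{M}$ is itself Scott-continuous in the procedure context by definition.

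A preliminary step is to verify that the approximation sets are upward-directed, so that the suprema make sense in the relevant dcpos. Given $M_1', M_2' \blacktriangleleft M$, I would construct a syntactic join $M_3' \blacktriangleleft M$ by replacing each pair of corresponding indices on $\textbf{while}^n$, $f^n$, and $\textbf{proc}\ f^n$ occurrences by their maximum. That $\lrb{M_3'} \geq \lrb{M_1'}, \lrb{M_2'}$ follows from monotonicity of $n \mapsto W^n_f(\ZERO)$ and of $n \mapsto (\lrb{M'}(\lrb{\Omega'}, -))^n(\ZERO)$, both starting from $\ZERO$ and driven by a monotone endofunction. Analogous constructions give directedness for $\vartriangleleft$ and $\sqsubset$.

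For terms, the interesting case is $M = \textbf{while}\ b\ \textbf{do}\ N$: by Kleene's theorem $\lrb{M}(\pi) = \bigvee_n W^n_{\lrb{N}(\pi)}(\ZERO)$, and combining the IH $\lrb{N}(\pi) = \bigvee_{N' \blacktriangleleft N} \lrb{N'}(\pi)$ with Scott-continuity of $W^n_{(-)}(\ZERO)$ in its subscript yields $\lrb{M}(\pi) = \bigvee_{n,\, N' \blacktriangleleft N} W^n_{\lrb{N'}(\pi)}(\ZERO)$, which is exactly $\bigvee_{M' \blacktriangleleft M} \lrb{M'}(\pi)$. The remaining term constructors are either handled by the catch-all rule (so that they approximate only themselves, making the identity trivial) or are compositional: for instance $\lrb{M_1;M_2} = \lrb{M_2} \circ \lrb{M_1}$, and the IH together with Scott-continuity of composition delivers the required identity. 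The procedure-store case for $\Omega_0, f :: x:A \to y:B\ \{M\}$ is analogous: one unfolds $\mathrm{lfp}(\lrb{M}(\lrb{\Omega_0}, -)) = \bigvee_n (\lrb{M}(\lrb{\Omega_0}, -))^n(\ZERO)$, applies the IH both for $M$ and for $\Omega_0$, and concludes via Scott-continuity of $\lrb{M'}$ in its first coordinate and of iteration in the iterated function. The configuration identity then follows by post-composing $\lrb{V} \circ \mathrm{new}_\rho^\ddagger$ after the joint suprema from the first two statements.

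The main obstacle is the coherent interaction of the two independent fixpoint computations (one from $\textbf{while}$, one from recursive procedures), since each must be approximated at its own depth before being combined. This is exactly what the independent indexing on $\textbf{while}^n$, $f^n$, and $\textbf{proc}\ f^n$ in the definition of $\blacktriangleleft$ provides: the suprema range over a product-indexed directed family of finitary terms that jointly simulates both Kleene fixpoints. A secondary technical point is the interpretation of $\lrb{\Pi}$ when $\Pi$ contains indexed procedure names; one must note that this dcpo is the same as for its unindexed counterpart, so that the two sides of each equation live in the same hom-dcpo and the supremum is well-posed.
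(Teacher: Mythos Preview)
Your proposal is correct and is exactly the ``straightforward induction'' the paper invokes: the paper gives no further detail, and what you have written is the natural unpacking of that induction, using Kleene's fixpoint theorem for the \textbf{while} and procedure-store cases together with the $\cpobs$-enrichment of $\CC$ to justify commuting suprema through composition, tensor, and copairing. One tiny slip: in the configuration case $\lrb{V}\circ\mathrm{new}_\rho^\ddagger$ is \emph{pre}-composed with $\lrb{M'}(\lrb{\Omega'})$, not post-composed, but since composition is Scott-continuous in both arguments this is immaterial.
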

\begin{proof}
Straightforward induction.
\qed\end{proof}

Finally, we can prove Theorem~\ref{thm:big-step-invariance}.

\emph{Proof of Theorem~\ref{thm:big-step-invariance}.}
By Corollary~\ref{cor:easy-part-adequacy}, it suffices to show $\lrb{\mathcal C} \leq \lrb{ \mathcal C \Downarrow }$. We have:
\begin{align*}
\lrb{\mathcal C} &= \bigvee_{\mathcal C' \sqsubset \mathcal C} \lrb{\mathcal C'} & (\text{Lemma~\ref{lem:denotational-approximation}}) \\
                 &= \bigvee_{\mathcal C' \sqsubset \mathcal C} \lrb{\mathcal C' \Downarrow} & (\text{Corollary~\ref{cor:finite-hard}}) \\
                 &\leq \lrb{\mathcal C \Downarrow} &(\text{Corollary~\ref{cor:inequal}})
\end{align*}
\qed

%%%%%%%%%%%%%%%%%%%%%%%%%%%%%%%%%%%%%%%%%%%%%%%%%%%%%%%%%%%%%%%%%%%%%%%%%%%%%%
\newpage
\section{Proof of Theorem~\ref{thm:adequacy}}\label{app:adequacy}
%%%%%%%%%%%%%%%%%%%%%%%%%%%%%%%%%%%%%%%%%%%%%%%%%%%%%%%%%%%%%%%%%%%%%%%%%%%%%%

Before we prove adequacy, we need a simple lemma.

\begin{lemma}\label{lem:terminal-trace}
Let $\mathcal T = (\sskip\ |\ V\ |\ \Omega\ |\ \rho)$ be a terminal configuration. Then
\[ \left( \diamond \circ \lrb{\mathcal T} \right)(1) = \trace \rho . \]
\end{lemma}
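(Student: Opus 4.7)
The plan is to unfold the definition of $\lrb{\mathcal T}$, use causality of the value assignment to collapse the right part of the composite to a trace, and then compute a single scalar in the Schr{\"o}dinger picture.

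First I would observe that since the term component of $\mathcal T$ is $\sskip$, the interpretation of $\sskip$ at any procedure environment is the identity on $\lrb \Gamma$, so by unfolding the definition of $\lrb{\mathcal T}$ from \secref{sub:configurations} we obtain
\[ \lrb{\mathcal T} \;=\; \bigl( I \xrightarrow{\mathrm{new}_\rho^\ddagger} \qbit^{\otimes \dim(\rho)} \xrightarrow{\lrb{V}} \lrb \Gamma \bigr). \]
Post-composing with $\diamond_{\lrb \Gamma}$ and applying (the value-assignment extension of) Theorem~\ref{thm:causal-values}, which states that interpretations of value assignments are causal with $\diamond_{\lrb \Gamma} \circ \lrb{V} = \diamond_{\lrb Q} = \mathrm{tr}^\ddagger$, this reduces to
\[ \diamond_{\lrb \Gamma} \circ \lrb{\mathcal T} \;=\; \mathrm{tr}^\ddagger \circ \mathrm{new}_\rho^\ddagger \;:\; I \to I. \]

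Next I would translate this composite back to the Schr{\"o}dinger picture using $(-)^\ddagger = ((-)^\dagger)^\op$ as recalled in~\secref{sec:model}. Composition in $\CC = \Wstar^\op$ reverses the order, so $\mathrm{tr}^\ddagger \circ \mathrm{new}_\rho^\ddagger$ in $\CC$ corresponds to $\mathrm{new}_\rho^\dagger \circ \mathrm{tr}^\dagger$ in $\Wstar$. Reading off the formulas for $\mathrm{tr}^\dagger$ and $\mathrm{new}_\rho^\dagger$ in Figure~\ref{fig:quantum-pictures}, this composite sends $1 \in \mathbb{C}$ to $\mathrm{new}_\rho^\dagger(I_n) = \trace{I_n \rho} = \trace \rho$.

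Finally, $\diamond_I = \id_I$, so evaluating at $1$ gives $(\diamond \circ \lrb{\mathcal T})(1) = \trace \rho$ as required. The only subtlety is verifying that causality lifts from single values to whole value assignments, but this is already asserted in~\secref{sub:configurations} as an immediate consequence of Theorem~\ref{thm:causal-values} together with monoidality of $\diamond$ (discarding a tensor product is discarding each factor). No nontrivial obstacle arises.
\qed
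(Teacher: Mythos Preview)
Your proposal is correct and follows essentially the same route as the paper: unfold the definition of $\lrb{\mathcal T}$ using $\lrb{\sskip}=\id$, invoke causality of the value assignment to reduce $\diamond \circ \lrb V$ to $\mathrm{tr}^\ddagger$, and then compute $\mathrm{new}_\rho^\dagger(\mathrm{tr}^\dagger(1)) = \trace\rho$ via the explicit formulas. The final remark about $\diamond_I=\id_I$ is superfluous (the composite $\mathrm{tr}^\ddagger \circ \mathrm{new}_\rho^\ddagger$ is already $\diamond_{\lrb\Gamma}\circ\lrb{\mathcal T}$, with no further discarding needed), but it does no harm.
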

\begin{proof}
\begin{align*}
   \left( \diamond \circ \lrb{\mathcal T } \right) (1) 
&= \left( \diamond \circ \id \circ \lrb V \circ \mathrm{new}_\rho^\ddagger \right)(1) &(\text{Definition}) \\
&= \left( \mathrm{tr}^\ddagger \circ \mathrm{new}_\rho^\ddagger \right)(1) &(\text{Causality of values (see~\secref{sub:configurations})}) \\
&= \mathrm{new}_\rho^\dagger(\mathrm{tr}^\dagger(1)) &(\text{Definition})\\
&= \trace \rho &(\text{Definition})
\end{align*}
\qed\end{proof}

We can now prove Theorem~\ref{thm:adequacy}.

\emph{Proof of Theorem~\ref{thm:adequacy}.}
We have:
\begin{align*}
   \left( \diamond \circ \lrb{\mathcal C} \right)(1)
&= \left( \diamond \circ \bigvee_{n=0}^\infty \inlinesum_{r \in \TerSeq_{\leq n}(\mathcal C)} \lrb{\End(r)} \right)(1) &(\text{Theorem~\ref{thm:big-step-invariance}}) \\
&= \bigvee_{n=0}^\infty \inlinesum_{r \in \TerSeq_{\leq n}(\mathcal C)} \left(\diamond \circ \lrb{\End(r)} \right)(1) &\text{} \\
&= \bigvee_{n=0}^\infty \inlinesum_{r \in \TerSeq_{\leq n}(\mathcal C)} \trace{\End(r)} &(\text{Lemma~\ref{lem:terminal-trace}}) \\
&= \Halt(\mathcal C ) &(\text{Definition})
\end{align*}
\qed

\end{document}